\documentclass[aps,pra,onecolumn,superscriptaddress,nofootinbib,longbibliography,floatfix]{revtex4-2}

\usepackage[T1]{fontenc}
\usepackage{lmodern}
\usepackage{amsmath,amssymb,mathtools,bm}
\usepackage{booktabs,float}
\usepackage[expansion=false]{microtype}
\usepackage[colorlinks=true,allcolors=blue]{hyperref}

\newcommand{\Diss}[1]{\mathcal D[#1]}
\newcommand{\tr}{\operatorname{tr}}
\newcommand{\cone}{\operatorname{cone}}
\newcommand{\spann}{\operatorname{span}}
\newcommand{\Ad}{\operatorname{Ad}}
\newcommand{\Stab}{\operatorname{Stab}}
\newcommand{\Herm}{\operatorname{Herm}}

\newcommand{\F}{\mathcal F}
\newcommand{\T}{\mathcal T}
\newcommand{\X}{\mathcal X}
\newcommand{\G}{\mathcal G}

\newcommand{\eps}{\varepsilon}
\newcommand{\dtr}{d_{\rm tr}}
\newcommand{\ddia}{d_{\diamond}}

\newtheorem{theorem}{Theorem}
\newtheorem{proposition}[theorem]{Proposition}
\newtheorem{lemma}[theorem]{Lemma}
\newtheorem{corollary}[theorem]{Corollary}
\newtheorem{definition}[theorem]{Definition}
\newtheorem{remark}[theorem]{Remark}
\newtheorem{example}[theorem]{Example}
\newenvironment{proof}{\par\noindent\textit{Proof.}\ }{\hfill$\square$\par}

\hypersetup{
  pdftitle={From Static Lindblad Stabilizability to Robust Physical Control Sequences},
  pdfauthor={Yibin Wang and Ruifeng Liu},
  pdfsubject={Open quantum systems and quantum control}
}

\begin{document}

\title{From Static Lindblad Stabilizability to Robust Physical Control Sequences}

\author{Yibin Wang}
\email{yibinw0210@gmail.com}
\affiliation{Graduate School of Mathematics, Nagoya University, Nagoya, 464-8601, Aichi, Japan.}
\author{Ruifeng Liu}
\email{ruifeng.liu@mail.xjtu.edu.cn}
\affiliation{Ministry of Education Key Laboratory for Nonequilibrium Synthesis and Modulation of Condensed Matter, Shaanxi Province Key Laboratory of Quantum Information and Quantum Optoelectronic Devices, School of Physics, Xi'an Jiaotong University, Xi'an 710049, China.}

\begin{abstract}
Dissipative dynamics can stabilize a quantum target without providing a
reliable sequence of finite control operations. We derive checkable conditions
for implementing such stabilization with imperfect controls, with bounds on
convergence, errors during operations, and physical cost. The analysis
includes pulses, idle intervals, and all simultaneously acting background
dynamics. Using a common Lyapunov measure of deviation from the target, we
require a contraction bound that remains uniformly below one over the
prescribed uncertainty set after accounting for implementation errors. If
all allowed paths preserve the target subspace throughout each operation,
repeated cycles converge to that subspace from every initial state, even when
errors vary between cycles. Comparison with these certified target-preserving
paths gives an explicit long-time error bound under bounded leakage. Two
single-qubit counterexamples show why a stabilizing average or a stabilizing
generator obtained by unitary conjugation does not ensure these path
conditions. The certification approach also supports a finite-library search.
For a specified four-level cascade model with finite pulses, calibration and
timing errors, dephasing, and bounded pumping, this search yields a sequence
with an explicit preparation-time bound. For every initial state, it guarantees
trace distance at most $0.1$ from the target at the end of preparation and
throughout subsequent cycles.
\end{abstract}

\maketitle

\section{Introduction}
\label{sec:introduction}

Reservoir engineering asks for dynamics that drive every initial state toward
a prescribed quantum state or code space.  Laser-controlled reservoir design
was proposed as a way to select a motional master equation
\cite{poyatos1996reservoir}, and designed dissipation has been developed for
steady-state quantum computation and correlated-state preparation
\cite{verstraete2009dissipation}.  Experiments have combined coherent gates
with optical pumping to realize dissipative entangled-state preparation in
trapped ions \cite{barreiro2011open}, and continuous drives with an engineered
reservoir to autonomously stabilize a Bell state in superconducting circuits
\cite{shankar2013autonomous}.  These developments make the passage from a
generator-level design to an executable protocol a practical mathematical
question.

\subsection{Physical problem}

For a time-independent master equation, the question is geometric: does the
available cone contain a Gorini--Kossakowski--Lindblad--Sudarshan (GKLS)
generator with the required invariant target and stable transverse dynamics?
A laboratory asks which calibrated controls are applied, for how long, what
acts during every transition, and how much time and control are required.  The
generator cone forgets the rate assigned to a ray, whether a sum is directly
callable, and the path and cost of a conjugating pulse.  We call one complete
finite-duration hold, pulse, transition, dead interval, or cooldown a
\emph{primitive}; its generator contains every simultaneous always-on term.  A
\emph{physical word} is a finite chronological sequence of such primitives.
Its intermediate action cannot generally cancel in the way a formal generator
sum does.

The \emph{target face} consists of the density operators supported on the
target subspace.

A static Lyapunov certificate specifies a positive-definite metric and a
decay margin for deviations from the target.  A \emph{target-compatible atomization} expresses the certified generator as a
finite positive sum of callable complete generators, each preserving the target.
A \emph{word grammar} specifies allowed finite chronological primitive templates
and their parameter ranges.  For a supplied template, the nominal synthesis
uses ideal factors to approximate the static evolution.  Each primitive also
has a \emph{complete reference path}, containing all simultaneous dynamics at
its reference settings, and admissible \emph{actual paths} over the uncertainty
family.  The reference paths therefore need not coincide with the ideal
factors.  These data connect five levels of description:
\[
\begin{gathered}
 \text{static Lyapunov certificate}\\[-1mm]
 \Downarrow\\[-1mm]
 \text{finite target-compatible atomization into callable operating points}\\[-1mm]
 \Downarrow\\[-1mm]
 \text{supplied word with a nominal synthesis}\\[-1mm]
 \Downarrow\\[-1mm]
 \text{complete reference and admissible actual paths for every primitive}\\[-1mm]
 \Downarrow\\[-1mm]
 \text{uniform actual contraction, within-cycle bounds, and resource bounds}.
\end{gathered}
\]
Each arrow requires data beyond the preceding level.  The \emph{actual
monodromy} is the transverse map of one complete uncertain cycle; a
\emph{prefix} is the propagator to an arbitrary time within that cycle.

Common-target switching and dissipative feedback provide the nearest
stabilization results \cite{scaramuzza2015switching,liang2024feedback}.
Table~\ref{tab:nearest-results} compares their assumptions and conclusions
with the complete-path problem. In particular, the state information and
adjustable gain used by the noninvariant feedback construction distinguish it
from the fixed open-loop words with unavoidable excitation considered in our
obstructions. Generator
decompositions yield sequential
simulations when the constituent semigroups and arbitrary unitaries are already
implementable \cite{sweke2015universal}.  Positive product formulas for open
systems \cite{kliesch2011churchturing,childs2017sparse,iserles2024splitting},
bounded Eulerian controls \cite{viola2003bounded}, soft-pulse Markovian models
\cite{pryadko2009soft}, and finite-dimensional Duhamel perturbation estimates
\cite{szehr2013perturbation} provide standard analytic ingredients.  The leakage
estimate is related to input-to-state practical stability \cite{sontag1995iss}.
We prove the discrete cycle-and-prefix recursion used for this estimate.

Robust Lyapunov theory gives smooth stability characterizations for bounded
time-varying parameter families \cite{lin1996robust_lyapunov}; the present
finite input specifies one quadratic metric.  Robustness and error
certification in open-system control and simulation have also been studied
through short-pulse dynamical
decoupling \cite{gong2013dd}, stability of local observables for rapidly mixing
local Lindbladians \cite{cubitt2015stability}, and structured Bloch-generator
uncertainty \cite{schirmer2022robust}.

\begin{table}[H]
\centering
\caption{Related results and the additional guarantees for the complete-path
control models considered here.}
\label{tab:nearest-results}
\small
\begin{tabular}{p{0.18\linewidth}@{\hspace{1em}}p{0.39\linewidth}@{\hspace{1em}}p{0.30\linewidth}}
\toprule
Work & Assumptions and guarantees & Additional certification here\\
\midrule
Scaramuzza--Ticozzi \cite{scaramuzza2015switching}
& Common invariant target and stable convex combination: stabilization by sufficiently
short finite-period switching, with bounds within periods.
& Uniform finite-pulse and idle-path bounds, including joint implementation
errors and physical durations.\\[1ex]
Liang--Grigoletto--Ticozzi \cite{liang2024feedback}
& Target invariance and common descent: dwell bounds and exponential
stabilization. State-dependent gains can relax invariance of the unmodulated
generators.
& Fixed open-loop words, uniform over the declared hardware-error set.\\[1ex]
Szehr--Wolf \cite{szehr2013perturbation}
& Unique stationary state and exponential reference convergence: finite-time
and asymptotic bounds from channel or generator perturbation norms.
& Perturbation inputs from complete primitives; propagation through uncertain
cycles and prefixes for target faces.\\[1ex]
Kraft et al.\ \cite{kraft2026bounded_error}
& Learned statistical uncertainty in Hamiltonian and Lindbladian
parameters: observable-error bounds.
& Target preparation and physical execution-time guarantees throughout
operation under a declared deterministic uncertainty set.\\
\bottomrule
\end{tabular}
\end{table}

\subsection{Main results}

Two obstructions show why the middle interfaces are substantive.  First, two
directly available qubit generators have an average equal to amplitude damping
toward a pure target, while the first positive hold under either generator
makes that target full rank.  No later finite-time Markovian factor can return
the state to the pure target.  Static global asymptotic stability therefore
does not imply even stroboscopic stabilization, meaning stabilization at the
chosen cycle boundaries.  Second, a bit flip conjugates an excitation
generator into target-stabilizing amplitude damping, but the corresponding
bounded pulse moves the target to the orthogonal state.  With switchable
excitation the repeated protocol approaches a nonconstant periodic orbit; with
unavoidable excitation it cannot return to the target even stroboscopically.
Algebraic orbit access does not by itself specify a target-preserving physical
path.

For a supplied finite word grammar, the positive theorem gives a
\emph{compiler}: a construction of contraction, within-cycle, and resource
bounds from the static certificate, nominal-synthesis bounds, and
complete-path data.  \emph{Robust} means uniform over a declared compact
uncertainty family. Its members may vary arbitrarily from one cycle to the
next, with any correlations admitted by that family.  All complete words begin and end at a reproducible \emph{cold
boundary}, a fully specified controller state at which cycles concatenate.  In
one fixed Lyapunov metric, normalized-time Duhamel bounds convert complete
generator and duration uncertainty into primitive block errors.  A fully
amplified product telescope combines these errors with a nominal-synthesis
bound and the static decay.  The resulting quantity $q_{\rm rob}$ bounds every
actual monodromy; the condition $q_{\rm rob}<1$ certifies uniform contraction.
The same primitive envelopes produce a within-cycle bound $B_{\rm rob}$ and
componentwise worst-case resource bounds.  For the \emph{structure-preserving}
class, in which every admissible complete GKLS path preserves the target face
at every time, the contraction and prefix bounds give all-time attraction to
that face.  For a rank-one face this is stabilization of a fixed pure state.

A general small generator perturbation may leak from the face.  For this
\emph{leaky} class, complete pathwise diamond-norm bounds relative to
structure-preserving shadow paths produce cycle and prefix leakage estimates.
Iteration gives a practical tube with an explicit asymptotic radius.

Two analytically solvable three-level models provide explicit control words
and resource bounds.  A swap confined to the excited subspace turns one common-ground decay into the
other without moving the ground target.  Its four-primitive word gives
uniform contraction under the declared relative gain errors in the two holds,
both without dephasing and with known always-on dephasing.  A distinct cascade
system has an exact three-factor identity and, in its stated two-label control
alphabet and accounting convention, a matching lower bound with sharp leading
wall time $4\log(1/\eps)$ as the target-error tolerance $\eps\downarrow0$.

A second construction uses a complete-path certificate directly as the
acceptance test for a finite search. It can use the preceding Lyapunov bounds
or a model-specific positive observable, such as the population outside a
pure target. This permits a distinction between certifying a supplied word
and selecting one from a declared library. The four-level example below uses
the latter construction with a survival-probability certificate.

The four-level model tests a different part of the construction. It combines
an irreversible cascade with finite pulses whose amplitude and duration are
uncertain, always-on dephasing, and bounded pumping out of the target. A
four-coordinate exact adjoint reduction retains pulse-induced coherences and
encloses the survival probability over the whole error box for every initial
state. With times in inverse nominal cascade-rate units, a search over 1650
parameterized words finds a five-command sequence that guarantees trace
distance at most $0.1$ after a complete preparation cycle ending by time
$14.3107$ and throughout subsequent cycles. A periodic-grid comparison and
proposals from ordinary numerical optimization use the same rational acceptance
test.

Section~\ref{sec:presentations} defines target semantics, calibrated
presentations, uncertainty families, and physical words.
Section~\ref{sec:static} gives the static certificate and the averaging
obstruction.  Section~\ref{sec:direct} treats direct target-compatible
switching.  Section~\ref{sec:physical-words} proves the robust
primitive-certified compiler and its leaky counterpart.
Section~\ref{sec:qutrits} gives the qutrit realizations and resource laws.
Section~\ref{sec:certified-search} gives the finite search, the four-level
certificate, and its matched comparisons. Section~\ref{sec:boundaries} records
finite-input, higher-rank, and modeling boundaries.  Detailed proofs and
calculations are collected in the appendices.

\section{Calibrated presentations and target semantics}
\label{sec:presentations}

Let $\mathcal H=\mathbb C^d$ and let $\Herm(\mathcal H)$ be the real vector
space of Hermitian operators.  For an operator $J$, write
\begin{equation}
 \Diss{J}(\rho)=J\rho J^\dagger-\tfrac12\{J^\dagger J,\rho\}.
\label{eq:dissipator}
\end{equation}
A GKLS generator has the form~\cite{gorini1976completely,lindblad1976generators}
\begin{equation}
 \mathcal L(\rho)=-i[H,\rho]+\sum_a\Diss{V_a}(\rho).
\label{eq:gkls}
\end{equation}
Its exponential is a completely positive trace-preserving (CPTP) map.  In a
time-dependent segment, the Hamiltonian,
jump operators, and any always-on terms in Eq.~\eqref{eq:gkls} are the complete
dynamics at that operating point.

\begin{definition}[Targets and attraction]
\label{def:target}
Let $P$ be an orthogonal projector of rank $1\le r<d$, set $Q=I-P$, and define
\begin{equation}
\begin{aligned}
 \T_P&=\{X\in\Herm(\mathcal H):X=PXP\},\\
 \F_P&=\{\rho\in\T_P:\rho\succeq0,\ \tr\rho=1\}.
\end{aligned}
\label{eq:target-face}
\end{equation}
The projector $P$ is the target projector and $\F_P$ is its target face.  For
$r=1$, the unique state in the face is the target state.  The face is
\emph{globally asymptotically stable} (GAS) when it is invariant and every
initial density operator approaches it in trace distance.
\end{definition}

For $r>1$, GAS means attraction to the whole face and does not by itself
select one mixed state inside it. We use
\begin{equation}
 \dtr(\rho,\F_P)=\inf_{\sigma\in\F_P}\tfrac12\|\rho-\sigma\|_1.
\label{eq:trace-distance-face}
\end{equation}
For channels, the half diamond distance is
$\ddia(\Phi,\Psi)=\tfrac12\|\Phi-\Psi\|_\diamond$.  Primitive leakage bounds
below use the full superoperator diamond norm $\|\cdot\|_\diamond$; the factor
$1/2$ is inserted only when a channel distance is explicitly denoted by
$\ddia$.
The quotient
$\X_P=\Herm(\mathcal H)/\T_P$ records motion transverse to the target.
We identify it isometrically with the Hilbert--Schmidt orthogonal complement
$\T_P^\perp$ by assigning $[X]$ the representative $Q_X=X-PXP$.  Once
$\T_P$ is invariant, a generator induces a \emph{transverse matrix} on a
Hilbert--Schmidt orthonormal real Hermitian basis of $\T_P^\perp$; a channel
similarly induces a transverse map.  Write $z(\rho)$ for the coordinates of
$Q_\rho$.  Then $\|z(\rho)\|_2=\|Q_\rho\|_{\rm HS}$.

\begin{definition}[Calibrated physical presentation]
\label{def:presentation}
A calibrated presentation is a finite collection of primitive labels and a
finite directed graph of controller states.  A label identifies one complete
operating point: its generator or channel path, physical rate and amplitude
range, admissible duration, concurrency rule, and cost vector.  A graph vertex
records every controller variable needed to repeat the experiment, including
the laboratory frame, phase when relevant, actuator configuration,
and environment/reset state.  A directed edge records a
complete transition path and its duration, always-on dynamics, calibration,
and cost.  A reverse edge is separate data.  Algebraically equal
superoperators remain different labels when their physical invocations differ.
\end{definition}

The presentation also assigns each complete invocation or operating setting a
\emph{physical signature}: a discrete identifier for its prescribed command,
concurrent dynamics, and transition direction when applicable. Every graph
vertex records its physical signature. Two signatures are equal exactly when
the presentation identifies their declared invocation data for switch
counting. A signature covers the complete prescribed waveform and its admitted
uncertainty family. The idle setting, denoted $\mathrm{IDLE}$, and distinct
directed pulse commands have separate signatures.

A \emph{primitive call} is one issued active label or one transition that the
presentation declares countable as a call.  A \emph{dwell} is the time held at
an operating point.  \emph{Dead time} is an interval in which a scheduled
control is unavailable but complete dynamics still acts; \emph{cooldown} is
a required delay before another call.  Always-on dynamics acts whether or not
the nominal primitive is active.  Wall time is physical elapsed time.  The
switch count is the number of changes between consecutive physical signatures,
including entry to and return from a cyclic boundary.  When this cost is
reported, the \emph{integrated control action} sums the amplitude norm over the
multiset
$\mathcal E_{\rm ctrl}$ of control paths in the word:
\begin{equation}
 \mathcal A_{\rm ctrl}
 =\sum_{e\in\mathcal E_{\rm ctrl}}\int_e\|u_e(s)\|_{\rm ctrl}\,ds,
\label{eq:integrated-control-action}
\end{equation}
where the presentation fixes the control-amplitude norm.  Thus the two swap
pulses in Sec.~\ref{subsec:common-ground} both contribute to this quantity.

A complete generator contains the commanded Hamiltonian, raw dissipation,
simultaneous compensation, cross-talk, and every always-on term.  A calibration
relation
$\widetilde G=rG$ with $r>0$ is tied to one operating point and cannot be used
to rescale an independent drift.

\begin{definition}[Direct orbit atom]
\label{def:orbit-atom}
A direct orbit atom is a complete calibrated generator $G$ supplied as one
label.  When orbit membership is used, the label comes with the algebraic
identity
$G=\Ad_U G_0:=\mathcal U\circ G_0\circ\mathcal U^{-1}$.  Here $G_0$ is the declared
reference generator in the orbit model and
$\mathcal U(\rho)=U\rho U^\dagger$ is the channel of a declared unitary orbit
element $U$.
\end{definition}

Implementing $\mathcal U$ or $\mathcal U^{-1}$ requires separate availability
and cost data in the physical presentation.

\begin{definition}[Physical word and cold boundary]
\label{def:physical-word}
A physical word is a finite closed directed walk based at a declared
\emph{cold boundary}, a reproducible controller state at which periods may be
concatenated.  Its channel is the chronological product of the complete hold
and edge channels along the walk.
\end{definition}

A \emph{physical prefix} is the channel accumulated from the cold boundary to
any intermediate time in a word.  \emph{Micromotion} is the within-period
evolution seen through these prefixes but not through the full period map.

\newpage
\begin{samepage}
The exposure parameter $h$ labels the nominal synthesis and need not equal
the physical cycle duration.

\begin{definition}[Uncertain word family]
\label{def:uncertain-word-family}
Fix the Hilbert-space dimension $d$, target rank $r$, number of chronological
primitives $N$, and a nonempty compact exposure domain $H\subset(0,\infty)$.
An uncertain word family consists of the following data and conditions.
\begin{enumerate}
\item \emph{Reference and actual paths.}
For every $h\in H$, a finite word grammar supplies $N$ complete reference paths
and a set $\Omega(h)$ of complete actual paths with the same ordered primitive
types. Scalar coordinates lie in closed bounded intervals. Normalized
waveforms are bounded and equi-Lipschitz on a fixed finite partition, and are
piecewise $W^{1,\infty}$ there. The joint family
$\{(h,\omega):h\in H,\ \omega\in\Omega(h)\}$ is compact in the product of the
Euclidean scalar topology and the uniform path topology. Each uncertainty
coordinate is declared as an absolute physical width or as a relative width
with respect to a nonzero reference value.
\item \emph{Physical path data.}
Each path specifies its physical duration, amplitude and slew bounds, dead
time, cooldown, always-on dynamics, and every controller field at both
endpoints. Every nontrivial segment has duration in a closed interval bounded
away from zero. A zero-duration slot is allowed only when its system channel
and every controller field are exactly unchanged.
\item \emph{Cycle duration and closure.}
The declared total cycle duration obeys
\begin{equation}
 0<T_-\le T_{\rm cyc}(h,\omega)\le T_+<\infty
\label{eq:nonzeno-cycle-duration}
\end{equation}
throughout the joint family. Every actual word closes at the same cold
boundary field by field.
\end{enumerate}
\end{definition}
\end{samepage}

The positive lower bound on cycle duration prevents infinitely many cycles
from occurring in finite time (the non-Zeno property). The uncertainty may be
fixed but unknown, reset to an independent value at each cycle, or chosen
adversarially at each cycle. All theorem bounds are expressed as absolute
generator or channel norms in the stated units.

If perturbed controller fields do not close at the declared boundary, they must
be included as part of an enlarged dynamical state.  A nominal reverse command
alone does not prove closure: for planar rotations,
$R(-\theta+\epsilon_-)R(\theta+\epsilon_+)=
R(\epsilon_-+\epsilon_+)$.

Four target semantics will recur.  \emph{All-time invariance} means every
physical prefix maps $\F_P$ into itself.  \emph{Stroboscopic invariance} asks
only that the full period map preserve the face at the chosen cycle boundary.
For a rank-one target, an attractive stroboscopic map may generate a
nonconstant \emph{periodic orbit} within a cycle.  A \emph{practical tube} is
only a uniform upper bound on the distance to the target.

The same distinction appears in geometric control.  Let $G$ be the declared
set of infinitesimal generators.  Its Lie wedge $W(G)$ is the smallest closed
convex cone containing $G$ and invariant under the adjoint conjugations
$\Ad_gX:=gXg^{-1}$ generated by its edge $E(W)=W\cap(-W)$.  For an accessible
right-invariant system, the cone of infinitesimal directions of the closed
reachable semigroup is its tangent wedge.  This tangent wedge is the Lie
saturate, the smallest Lie wedge containing the control directions that is
global in the chosen ambient group, meaning that it is realized as the tangent
wedge of a closed semigroup there
\cite{dirr2009lie}.  Without such a globality argument, a locally generated
wedge need not be identified with the tangent wedge.  In coherently controlled
unital Markovian systems, iterative edge, adjoint-invariance, and convex-closure
steps give an inner approximation to the global Lie wedge
\cite{omeara2012unital}.

Let $E_H$ be the real space of directly available Hamiltonian commutators and
let $S$ be the compact set of declared dissipative orbit atoms.  The static
effective cone is
\begin{equation}
 W_{\rm eff}=E_H+\overline{\cone(S)}.
\label{eq:effective-cone}
\end{equation}
The next section gives a finite static certificate and the averaging
obstruction.

\section{Static certificates and the static-to-dynamic obstruction}
\label{sec:static}

\subsection{What geometry forgets}

The first separation already occurs before target compatibility enters.

\begin{theorem}[Wedge and cost separation]
\label{thm:presentation-cost}
For finite-dimensional GKLS dynamics with the coherent-control set restricted
to the identity operation, amplitudes in $[0,1]$, and exclusive sequential
access, the following hold.
\begin{enumerate}
\item Let $\mathcal D=\Diss{|0\rangle\langle1|}$ and let the sole maximum
generator be $\gamma\mathcal D$, $\gamma>0$.  Every such presentation has the
same wedge $\mathbb R_{\ge0}\mathcal D$, while the time needed to move
$|1\rangle\langle1|$ within trace distance $\delta\in(0,1)$ of
$|0\rangle\langle0|$ is $\log(1/\delta)/\gamma$.
\item Fix $N\ge2$ and $T>0$, set $d=N+1$, and define
$\mathcal D_j=\Diss{|0\rangle\langle j|}$ and
$X_\star=\sum_{j=1}^N\mathcal D_j$.  A library with the $N$ separate unit-rate
labels and a library augmented by the unit-rate label $X_\star$ have the same
wedge and the same untimed finite-product semigroup.  Nevertheless, every
 separate-label word whose channel equals $e^{TX_\star}$ has the following
 bounds.  A segment is a maximal positive-duration interval during which one
 label is active, and $N_{\rm seg}$ counts these intervals:
\begin{equation}
 T_{\rm wall}\ge NT,\qquad N_{\rm seg}\ge N,
 \qquad N_{\rm sw}^{\rm one\text{-}shot}\ge N-1,
\label{eq:presentation-cost}
\end{equation}
 and all bounds are attained.  Cyclic repetition of a reduced separate-label
word has at least $N$ switches per period.  The augmented presentation uses
one call, wall time $T$, and no switch.
\end{enumerate}
\end{theorem}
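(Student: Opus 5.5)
Part 1 is a direct computation, and Part 2 rests on two invariants: wall time is lower bounded through trace-preserving population accounting, and segment counts through support growth.

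\emph{Part 1.} With only the identity coherent control and amplitudes in $[0,1]$, every admissible generator is $a\gamma\mathcal D$ with $a\in[0,1]$. The closed convex cone generated by these generators is $\mathbb R_{\ge0}\mathcal D$. Its edge is trivial, so no adjoint saturation occurs, and the wedge is $\mathbb R_{\ge0}\mathcal D$ independently of $\gamma$. Under $e^{s\mathcal D}$ the excited population decays as $p_1(s)=e^{-s}p_1(0)$ and coherences are absent from $|1\rangle\langle1|$. A word with amplitude profile $a(t)$ therefore produces the state $\mathrm{diag}(1-e^{-\gamma\int a},e^{-\gamma\int a})$. Its trace distance to $|0\rangle\langle0|$ is $e^{-\gamma\int a}$. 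Since $\int a\le T_{\rm wall}$, reaching distance $\delta$ requires $T_{\rm wall}\ge\log(1/\delta)/\gamma$, with equality at $a\equiv1$. A sequence of holds is handled the same way because all factors commute.

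\emph{Part 2, semigroup and wedge equality.} The operators $\mathcal D_j=\Diss{|0\rangle\langle j|}$ pairwise commute. I will check this on matrix units: $\mathcal D_j$ acts only on entries involving index $j$ and feeds population into $|0\rangle\langle0|$, so $[\mathcal D_j,\mathcal D_k]=0$ for $j\ne k$ can be verified directly. Hence $e^{TX_\star}=\prod_j e^{T\mathcal D_j}$, so $X_\star$ adds nothing to the untimed product semigroup. It also lies in the cone of the $\mathcal D_j$, so the wedge is unchanged; again the edge is trivial, so no saturation step is needed.

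\emph{Part 2, lower bounds.} For the wall-time bound I use population accounting. A separate-label word assigns to each label $j$ an integrated exposure $\tau_j=\int a_j$. By commutativity its channel is $\prod_j e^{\tau_j\mathcal D_j}$. Applied to $|j\rangle\langle j|$, this gives population $e^{-\tau_j}$ left in $j$, whereas $e^{TX_\star}$ leaves $e^{-T}$. Equality of channels thus forces $\tau_j=T$ for every $j$. Exclusive sequential access gives $T_{\rm wall}\ge\sum_j\tau_j=NT$. Each $\tau_j>0$, so each label occupies at least one segment, giving $N_{\rm seg}\ge N$. Between $N$ or more consecutive segments there are at least $N-1$ signature changes, which gives $N_{\rm sw}^{\rm one\text{-}shot}\ge N-1$. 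In the cyclic case, the return to the boundary adds one change, or closes the cycle when the first and last segments carry different labels; a reduced word cannot merge its first and last segments, so there are $N$ changes per period. Attainment is shown by the word $e^{T\mathcal D_N}\cdots e^{T\mathcal D_1}$ at unit amplitude. The augmented presentation calls $X_\star$ once for time $T$, with wall time $T$ and no switch.

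\emph{Main obstacle.} The delicate step is showing that equality of channels forces each exposure $\tau_j=T$ when segments may be interleaved with varying amplitudes. Commutativity reduces any word to the exposure vector $(\tau_j)$, and evaluating on each basis state $|j\rangle\langle j|$ pins down each coordinate. The cyclic switch count also needs care in how the paper's convention counts the boundary entry and return.
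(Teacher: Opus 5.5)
Your proposal is correct and follows the paper's own argument: commutativity of the $\mathcal D_j$ reduces any separate-label word to its exposure vector, testing on $|j\rangle\langle j|$ forces each exposure to equal $T$, and exclusivity together with the amplitude cap $a\le1$ turns the $N$ exposures into $T_{\rm wall}\ge NT$ and the segment and switch bounds. The facts you leave implicit are the same ones the paper invokes as linear independence and the amplitude cap: pointedness of $\cone\{\mathcal D_j\}$ (hence a trivial edge), which follows from the same population test, and the explicit use of $a\le1$ in the wall-time step.
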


The proof is short and exposes the missing data.  The first family has excited
population $e^{-\gamma t}$.  In the second, the $\mathcal D_j$ commute and are
linearly independent.  If $s_j$ is the total exposure to label $j$, equality
with $e^{TX_\star}$ forces $s_j=T$ by testing the input
$|j\rangle\langle j|$.  Exclusivity and the amplitude cap then give
Eq.~\eqref{eq:presentation-cost}.  Adding $X_\star$ changes neither the cone nor
the untimed semigroup, but changes which sum is callable.  The same test gives
the approximate bound
\begin{equation}
 T_{\rm wall}\ge -N\log(e^{-T}+\eps)
\label{eq:approx-presentation-cost}
\end{equation}
whenever the half diamond distance from $e^{TX_\star}$ is at most $\eps$ and
$e^{-T}+\eps<1$.  If also $\eps<1-e^{-T}$, every label has positive exposure,
so $N_{\rm seg}\ge N$ and
$N_{\rm sw}^{\rm one\text{-}shot}\ge N-1$ still hold.

\subsection{A finite certificate for a static generator}

We now state the static input used by the compilers.  Let
$\mathbb A=\mathbb R_{\rm alg}$ be the real algebraic numbers, and represent
Hermiticity-preserving trace-annihilating maps in a fixed orthonormal Hermitian
basis.  Denote their real vector space by $\G_d$.

\begin{definition}[Positive-gauge presentation]
\label{def:positive-gauge}
Such a presentation consists of a finite algebraic target, Hamiltonian-edge
data, and algebraic GKLS generators $L_a$.  It also contains compact basic
semialgebraic sets $K_a$ of unitaries, specified by finitely many simultaneous
polynomial equalities and inequalities over $\mathbb A$, and an algebraic
linear functional $\phi$ satisfying
\begin{equation}
 \phi(e)=0\quad(e\in E_H),\qquad
 \phi(g)\ge\mu>0\quad(g\in S),
\label{eq:positive-gauge}
\end{equation}
where $S=\bigcup_a\{\Ad_uL_a:u\in K_a\}$.  With
$\pi:\G_d\to\G_d/E_H$, define
$s=\dim\spann\pi(S)$.
\end{definition}

The gauge bounds conic coefficient mass.  Compactness then closes the cone,
and conic Carath\'eodory reduction gives the following result; the argument is
given in Appendix~\ref{app:static-proofs} and is related to finite positive
atomization in cubature theory \cite{bayer2006tchakaloff}.

\begin{proposition}[Closure and attained support]
\label{prop:attainment}
Every such presentation satisfies
$E_H+\overline{\cone(S)}=E_H+\cone(S)$.  Every $L\in W_{\rm eff}$ has a
same-generator representation
\begin{equation}
 L=e+\sum_{j=1}^m\lambda_j\Ad_{u_j}L_{a_j},\qquad
 e\in E_H,\qquad \lambda_j>0,\qquad m\le s.
\label{eq:attained-generator}
\end{equation}
\end{proposition}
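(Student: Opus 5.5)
We work in the quotient $\G_d/E_H$ through the projection $\pi$. Since $\phi$ vanishes on $E_H$, it descends to a linear functional $\bar\phi$ on $\G_d/E_H$ with $\bar\phi(\pi g)=\phi(g)\ge\mu>0$ for all $g\in S$. Both claims concern only the image cone, because $E_H+\cone(S)=\pi^{-1}(\cone(\pi S))$ and $E_H+\overline{\cone(S)}\subseteq\pi^{-1}(\overline{\cone(\pi S)})$. So it suffices to show that $\cone(\pi S)$ is closed in the finite-dimensional space $\G_d/E_H$ and that its elements admit positive representations with at most $s$ terms.

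\textbf{Step 1 (compactness of $S$).} Each $K_a$ is compact, and $u\mapsto\Ad_uL_a$ is continuous. Hence $S$ is a finite union of compact sets and is compact, and so is $\pi S$. By the gauge inequality, $\pi S$ does not contain $0$.

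\textbf{Step 2 (closure).} Let $y_n=\sum_j\lambda^{(n)}_j\pi(g^{(n)}_j)\to y$ with $\lambda^{(n)}_j\ge0$. By the ordinary conic Carathéodory theorem in the $s$-dimensional space $\spann\pi(S)$, we may take at most $s$ terms for each $n$. Applying $\bar\phi$ gives
\[
\mu\sum_j\lambda^{(n)}_j\le\bar\phi(y_n)\to\bar\phi(y),
\]
so the coefficient mass is bounded. Pass to a subsequence along which the coefficients converge in $[0,\infty)$ and, by compactness of $\pi S$, the atoms converge in $\pi S$. The limit is $y=\sum_j\lambda_j\pi(g_j)\in\cone(\pi S)$. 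Thus $\cone(\pi S)$ is closed.

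To return to $\G_d$, note that $E_H+\cone(S)$ is the preimage under the continuous map $\pi$ of the closed set $\cone(\pi S)$, so it is closed. It contains $\cone(S)$, hence also $\overline{\cone(S)}$, and it is stable under adding $E_H$. Therefore $E_H+\overline{\cone(S)}\subseteq E_H+\cone(S)$. The reverse inclusion is trivial.

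\textbf{Step 3 (attained support).} For $L\in W_{\rm eff}=E_H+\cone(S)$, write $\pi L=\sum_j\lambda_j\pi(g_j)$. Apply the Carathéodory reduction: while the $\pi(g_j)$ are linearly dependent, shift the coefficients along a dependence relation until one of them vanishes. At the end the remaining vectors are independent, so $m\le s$ and every $\lambda_j>0$. If $m=0$, then $\pi L=0$ and $L\in E_H$, so the sum is empty. Each $g_j$ has the form $\Ad_{u_j}L_{a_j}$ with $u_j\in K_{a_j}$. Setting
\[
e:=L-\sum_j\lambda_j\Ad_{u_j}L_{a_j}
\]
gives $\pi(e)=0$, so $e\in E_H$. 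This yields Eq.~\eqref{eq:attained-generator}.

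The only delicate point is Step 2. The mass bound must be obtained in the quotient, because $E_H$ can be unbounded. This is why the gauge $\phi$ is required to vanish on $E_H$ and to be strictly positive on $S$. Without it, atoms could converge to a direction in $E_H$ with diverging coefficients, and the cone would fail to be closed. Note also that the semialgebraicity of $K_a$ is not needed here; only compactness is used.
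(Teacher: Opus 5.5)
Your proof is correct and follows essentially the same route as the paper: pass to the quotient by $E_H$, reduce to at most $s$ atoms by conic Carath\'eodory, bound the coefficient mass with the gauge, extract a convergent subsequence using compactness of $S$, and recover the remainder in $E_H=\ker\pi$. The only difference is organizational: you first prove that $\cone(\pi S)$ is closed and then apply Carath\'eodory to $L$ itself, whereas the paper gets closure and the attained representation from a single limit argument, padding the reduced sums with zero coefficients (a step you leave implicit).
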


If $k$ fixed linear metadata coordinates are appended to each atom before
support reduction, the same argument preserves their aggregate values together
with the complete generator using at most $s+k$ atoms.  This lift does not
preserve nonlinear resource costs without an additional certificate.

Let $\Pi_P(X)=PXP$.  The no-leakage equation
\begin{equation}
 (I-\Pi_P)L\Pi_P=0
\label{eq:static-no-leakage}
\end{equation}
makes $\T_P$ invariant and defines a transverse matrix $A_L$ on $\X_P$.

\begin{proposition}[Invariant-face stability]
\label{prop:fixed-generator-gas}
The face $\F_P$ is invariant and GAS under $e^{tL}$ if and only if
Eq.~\eqref{eq:static-no-leakage} holds and $A_L$ is Hurwitz, meaning that every
eigenvalue has strictly negative real part.  Equivalently, the leading
Routh--Hurwitz determinants are positive; these are the standard polynomial
stability determinants formed from the coefficients of the monic characteristic
polynomial $\det(zI-A_L)$ in descending powers of $z$.
\end{proposition}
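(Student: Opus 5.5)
The proof splits along the direction of the equivalence. The key structure is a block-triangular decomposition of $L$ relative to the splitting $\Herm(\mathcal H)=\T_P\oplus\T_P^\perp$, with the second summand identified with $\X_P$.

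\emph{Necessity.} Suppose $\F_P$ is invariant under every $e^{tL}$. Since $\F_P$ spans $\T_P$ as a real vector space (every $X\in\T_P$ is a real combination of states supported on $P$), linearity makes $\T_P$ invariant under $e^{tL}$. Differentiating at $t=0$ gives $L\T_P\subseteq\T_P$, which is Eq.~\eqref{eq:static-no-leakage}. The generator then induces the quotient map $A_L$ on $\X_P$, and $e^{tA_L}$ is the map induced by $e^{tL}$ on the quotient. The function $\dtr(\rho,\F_P)$ vanishes exactly when $[\rho]=0$ in $\X_P$. Moreover it dominates a multiple of $\|Q_\rho\|_{\rm HS}$: for any $\sigma\in\F_P$ we have $Q_\rho=Q_{\rho-\sigma}$, and $\|Q_{\rho-\sigma}\|_{\rm HS}\le\|\rho-\sigma\|_{\rm HS}\le\|\rho-\sigma\|_1$. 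Hence GAS forces $e^{tA_L}z(\rho)\to0$ for every density operator $\rho$.

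Differences of density operators span $\Herm$ modulo the identity direction, and the identity itself has transverse part $Q$ coming from $I/d$. So the vectors $z(\rho)$ span $\X_P$, and linearity gives $e^{tA_L}\to0$ on all of $\X_P$. A linear flow converging to zero on every vector has all eigenvalues of its generator in the open left half-plane, so $A_L$ is Hurwitz.

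\emph{Sufficiency.} Assume Eq.~\eqref{eq:static-no-leakage} holds and $A_L$ is Hurwitz. Then $\T_P$ is invariant. Because $e^{tL}$ is CPTP, it maps states to states, so $\F_P=\T_P\cap\{\text{states}\}$ is invariant. For convergence, $\|z(e^{tL}\rho)\|_2=\|e^{tA_L}z(\rho)\|_2\le Ce^{-\alpha t}$ uniformly over the compact state space.

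The main obstacle is converting this small off-target component into small trace distance to the face. I would use the following estimate. Write $\rho$ in $P/Q$ blocks with $\|Q\rho Q\|$ and $\|P\rho Q\|$ small. Form $\sigma=P\rho P/\tr(P\rho P)$, noting $\tr(P\rho P)=1-\tr(Q\rho Q)\to1$. Then $\|\rho-\sigma\|_1$ is bounded by the sum of the off-diagonal blocks' trace norms, $\tr(Q\rho Q)$, and the normalization defect. Each of these terms is controlled by a dimension-dependent constant times $\|Q_\rho\|_{\rm HS}$, using norm equivalence in dimension $d$. This gives $\dtr(e^{tL}\rho,\F_P)\to0$, which is GAS.

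\emph{Routh--Hurwitz restatement.} This is the classical Routh--Hurwitz criterion applied to the real monic polynomial $\det(zI-A_L)$: all roots lie in the open left half-plane iff the leading principal Hurwitz minors are positive. I would cite it directly.
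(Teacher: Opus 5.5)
Your proposal is correct and follows essentially the same argument as the paper. The paper uses the quotient flow induced on $\X_P$ once $\T_P$ is invariant, exponential decay of $e^{tA_L}$ combined with $\dtr(\rho,\F_P)\le\|Q_\rho\|_1\le\sqrt d\,\|Q_\rho\|_{\rm HS}$, the fact that density operators span $\Herm(\mathcal H)$ for the converse, and a citation of Routh--Hurwitz; your choice $\sigma=P\rho P/\tr(P\rho P)$ is exactly what yields that first inequality. You also make explicit two steps the paper leaves implicit: that invariance of $\F_P$ forces Eq.~\eqref{eq:static-no-leakage} because $\F_P$ spans $\T_P$, and the lower bound $\|Q_\rho\|_{\rm HS}\le 2\,\dtr(\rho,\F_P)$, which is needed to turn attraction in trace distance into decay of the quotient class.
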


For a fixed dissipator with time-independent Hamiltonian control, invariance
and attraction are likewise separate: leakage blocks must vanish, while
attraction requires that the complementary subspace not remain invariant
\cite{ticozzi2012hamiltonian}.  To control nonnormal transients we retain a
strict Lyapunov metric.  For $M\succ0$, define
\begin{equation}
 \|x\|_M=(x^\dagger Mx)^{1/2},\qquad
 \|A\|_M=\|M^{1/2}AM^{-1/2}\|_2.
\label{eq:M-induced-norm}
\end{equation}
The basis of the transverse quotient is fixed once and for all, and a change
of coordinates acts on $M$ by congruence.  Thus the induced norm is covariant,
whereas a raw Euclidean norm need not be.  For a quotient trajectory, write
$x(t)=e^{tA_L}x(0)$.  If
\begin{equation}
 m_-I\prec M\prec m_+I,\qquad
 A_L^{\mathsf T}M+MA_L\preceq-2\nu M-q_0I
\label{eq:static-lyapunov}
\end{equation}
with positive algebraic $m_-,m_+,\nu,q_0$, then
\begin{equation}
\begin{aligned}
 &\|e^{tA_L}\|_M\le e^{-\nu t},\\
 &\|x(t)\|_2\le\sqrt{m_+/m_-}\,e^{-\nu t}\|x(0)\|_2,\\
 &\dtr(e^{tL}\rho,\F_P)\le
 2\sqrt{d\,m_+/m_-}\,e^{-\nu t}.
\end{aligned}
\label{eq:conditioned-static-bound}
\end{equation}
The first inequality follows from
$d\|x(t)\|_M^2/dt\le-2\nu\|x(t)\|_M^2$.  The explicit metric and prefactor
account for nonnormal dynamics in the uniform endpoint estimate.

\begin{theorem}[Finite exact static certificate]
\label{thm:finite-static-certificate}
For a presentation satisfying Definition~\ref{def:positive-gauge}, the
following are equivalent:
\begin{enumerate}
\item some $L\in W_{\rm eff}$ makes $\F_P$ invariant and GAS;
\item there are algebraic data satisfying
Eqs.~\eqref{eq:attained-generator}--\eqref{eq:static-lyapunov}, including the
complete generator, its transverse matrix computed from the algebraic data,
positive Routh--Hurwitz determinants, and a strict Lyapunov metric, with
$1\le m\le s\le d^4-d^2$.
\end{enumerate}
\end{theorem}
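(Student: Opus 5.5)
The implication (2)$\Rightarrow$(1) is the easy direction. Given algebraic data satisfying Eqs.~\eqref{eq:attained-generator}--\eqref{eq:static-lyapunov}, the generator $L=e+\sum_j\lambda_j\Ad_{u_j}L_{a_j}$ lies in $E_H+\cone(S)\subseteq W_{\rm eff}$, since each $u_j\in K_{a_j}$ and $\lambda_j>0$. The no-leakage equation~\eqref{eq:static-no-leakage} makes $\T_P$ invariant. The Lyapunov inequality~\eqref{eq:static-lyapunov} then gives the decay~\eqref{eq:conditioned-static-bound}, which is exactly attraction in trace distance. Alternatively, the positive Routh--Hurwitz determinants make $A_L$ Hurwitz, and Proposition~\ref{prop:fixed-generator-gas} applies. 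One point needs checking here: face invariance of the CPTP semigroup follows from invariance of $\T_P$. I will read this off from Proposition~\ref{prop:fixed-generator-gas} rather than reprove it.

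For (1)$\Rightarrow$(2), start from some $L\in W_{\rm eff}$ that makes $\F_P$ invariant and GAS.

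\emph{Step one.} Apply Proposition~\ref{prop:attainment} to obtain a representation~\eqref{eq:attained-generator} with $m\le s$ atoms. If $m=0$, then $L=e\in E_H$ is a Hamiltonian commutator, so $e^{tL}$ is unitary and cannot attract every state to a proper face. Hence $m\ge1$. The bound $s\le d^4-d^2$ holds because $\G_d$, the space of Hermiticity-preserving trace-annihilating maps on $d\times d$ Hermitian matrices, has real dimension $d^2(d^2-1)$.

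\emph{Step two.} Proposition~\ref{prop:fixed-generator-gas} gives Eq.~\eqref{eq:static-no-leakage} and a Hurwitz $A_L$ with positive Routh--Hurwitz determinants. Choose a small $\nu>0$ with $A_L+\nu I$ still Hurwitz, and solve the Lyapunov equation
\[
(A_L+\nu I)^{\mathsf T}M+M(A_L+\nu I)=-I .
\]
This yields $M\succ0$ and hence the strict inequality~\eqref{eq:static-lyapunov} with any $q_0<1/\|M\|$ after rescaling. Strict bounds $m_\pm$ follow from the spectrum of $M$.

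The main obstacle is algebraicity. The representation from step one has real, possibly transcendental, $\lambda_j$, $u_j$, and $e$, while condition~2 demands algebraic data. I would handle this with the Tarski--Seidenberg transfer principle. The conjunction of the following conditions is a first-order formula over $\mathbb R$ with coefficients in $\mathbb A$:
\begin{itemize}
\item the semialgebraic constraints $u_j\in K_{a_j}$;
\item $\lambda_j>0$ and $e\in E_H$ (a linear condition with algebraic basis);
\item the polynomial no-leakage equations;
\item positivity of the Routh--Hurwitz determinants, which are polynomials in the entries of $L$;
\item $M\succ0$, $m_-I\prec M\prec m_+I$, the Lyapunov inequality, and $\nu,q_0>0$.
\end{itemize}
Each positive-definiteness condition can be written via strict leading principal minors. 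The existential statement over these variables holds over $\mathbb R$ by the argument above. Because $\mathbb R_{\rm alg}$ is real closed and an elementary substructure of $\mathbb R$, the same sentence holds over $\mathbb A$, giving algebraic witnesses.

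Two details must survive the transfer. First, $m$ is fixed in the sentence, so the inequalities $1\le m\le s$ are preserved. Second, the transverse matrix is computed from the algebraic $L$ in the fixed algebraic basis, so it is algebraic as required.
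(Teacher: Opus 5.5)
Your proposal is correct and follows essentially the paper's own argument. It uses positive-gauge attainment (Proposition~\ref{prop:attainment}) for a real support with $m\le s$, quotient stability (Proposition~\ref{prop:fixed-generator-gas}) for no leakage and strict Routh--Hurwitz signs, and real-closed-field transfer of one finite existential sentence over $\mathbb A$ to obtain algebraic witnesses. The differences are cosmetic: the paper transfers the support and stability data first and then solves $A_L^{\mathsf T}M+MA_L=-I$ directly over $\mathbb A$, whereas you put $M,m_\pm,\nu,q_0$ inside the transferred sentence, and your explicit argument for $m\ge1$ (a purely Hamiltonian $L$ fixes $I/d\notin\F_P$) fills in a step the paper leaves implicit.
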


The proof combines Proposition~\ref{prop:attainment},
Proposition~\ref{prop:fixed-generator-gas}, and transfer between real closed
fields; see Appendix~\ref{app:static-proofs}.

\begin{corollary}[Finite-input decidability]
\label{cor:static-decision}
For the finite semialgebraic input of Theorem~\ref{thm:finite-static-certificate},
existence of a stabilizing generator in $W_{\rm eff}$ is decidable by real
quantifier elimination.
\end{corollary}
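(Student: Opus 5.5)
The plan is to reduce the existence question to a single first-order sentence over the real closed field $\mathbb R$ whose parameters are the algebraic input data. Theorem~\ref{thm:finite-static-certificate} gives condition 2 as an equivalent form of condition 1, and condition 2 involves a bounded number $m\le s\le d^4-d^2$ of atoms. So I will write the statement ``condition 2 holds for some $m\in\{1,\dots,d^4-d^2\}$'' as a finite disjunction over $m$, and over the finitely many atom labels $a_j$ attached to each slot. Each disjunct is an existential formula. Its variables are the real and imaginary parts of the unitaries $u_j$, constrained by the defining polynomial systems of $K_{a_j}$, the unitarity equations and the semialgebraic inequalities. The remaining variables are the weights $\lambda_j>0$, the coordinates of $e\in E_H$ (a linear parametrization over a fixed basis of the Hamiltonian edge data), the entries of $M$, and the scalars $m_\pm,\nu,q_0>0$.

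Next I check that every constraint is polynomial in these variables with coefficients in $\mathbb A$. The generator $L$ in Eq.~\eqref{eq:attained-generator} is polynomial in the entries of $u_j$, since $\Ad_u$ acts by $u\,\cdot\,u^\dagger$. The no-leakage equation~\eqref{eq:static-no-leakage} consists of linear equations in the entries of $L$. The transverse matrix $A_L$ is a fixed linear projection of $L$ in the fixed quotient basis. The conditions $m_-I\prec M\prec m_+I$ and $A_L^{\mathsf T}M+MA_L\preceq-2\nu M-q_0I$ become polynomial sign conditions through leading principal minors for the strict inequalities, and through all principal minors for the semidefinite one. The Routh--Hurwitz positivity is polynomial in the coefficients of $\det(zI-A_L)$. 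It is in fact redundant given the Lyapunov inequality, but including it does no harm.

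The result is a closed existential sentence in the language of ordered fields with algebraic constants. Replacing each algebraic constant by a new variable together with its defining polynomial and an isolating interval, or simply working over $\mathbb Q$ in this way, makes Tarski--Seidenberg quantifier elimination (e.g.\ cylindrical algebraic decomposition) decide its truth in $\mathbb R$. By Theorem~\ref{thm:finite-static-certificate}, this truth value equals the truth of condition 1.

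I expect the main obstacle to be bookkeeping rather than depth. The delicate point is that the equivalence in Theorem~\ref{thm:finite-static-certificate} requires algebraic witnesses, while quantifier elimination ranges over real witnesses. The existential sentence over $\mathbb R$ is equivalent to the same sentence over $\mathbb R_{\rm alg}$ by transfer between real closed fields, because $\mathbb A\preceq\mathbb R$ is an elementary extension. So either reading of condition 2 gives the same answer, and the decision is well defined. The remaining care is to ensure the input is finitely encoded: finitely many labels, polynomials and algebraic numbers. This encoding is exactly what ``finite semialgebraic input'' provides.
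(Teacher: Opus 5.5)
Your proposal is correct and follows the paper's own argument: it likewise assembles the orbit equations, generator equality, no-leakage blocks, Routh--Hurwitz signs and Lyapunov inequalities of Theorem~\ref{thm:finite-static-certificate} into one finite existential semialgebraic sentence over $\mathbb A$, uses real-closed-field transfer to obtain algebraic witnesses, and decides that sentence by quantifier elimination. Your disjunction over $m\le d^4-d^2$ instead of $m\le s$ is harmless, because every disjunct already yields a stabilizing $L\in E_H+\cone(S)\subseteq W_{\rm eff}$.
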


\begin{corollary}[Tangent-wedge interpretation]
\label{cor:tangent-interpretation}
If the tangent wedge of the closed reachable semigroup is independently proved
to equal $W_{\rm eff}$, the conditions of
Theorem~\ref{thm:finite-static-certificate} are also equivalent to existence of
a stabilizing generator in that tangent wedge.
\end{corollary}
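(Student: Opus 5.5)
The plan is to reduce the corollary to Theorem~\ref{thm:finite-static-certificate} by a set-equality substitution; only this identification is needed. The hypothesis states that the tangent wedge $\mathbf L$ of the closed reachable semigroup equals $W_{\rm eff}=E_H+\overline{\cone(S)}$ as subsets of $\G_d$. By Proposition~\ref{prop:attainment}, this set also equals $E_H+\cone(S)$. Hence the existence statement ``some $L\in\mathbf L$ makes $\F_P$ invariant and GAS'' is, word for word, item~1 of Theorem~\ref{thm:finite-static-certificate}. The theorem then gives equivalence with the finite algebraic certificate of item~2.

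The key steps are as follows.
\begin{enumerate}
\item Fix the ambient space. Both the tangent wedge, read as a cone of infinitesimal generators in the Lie algebra of the ambient group, and $W_{\rm eff}$ must be viewed as subsets of the same space $\G_d$ of Hermiticity-preserving trace-annihilating maps in the fixed basis. Only then is the assumed equality meaningful.
\item Observe that the stabilization property is defined pointwise for a single generator $L$, through the semigroup $e^{tL}$ and Proposition~\ref{prop:fixed-generator-gas}. It therefore depends only on $L$ as an element of $\G_d$, not on how $L$ was obtained.
\item Substitute the set equality into the existential quantifier, and invoke the equivalence (1)$\Leftrightarrow$(2) of Theorem~\ref{thm:finite-static-certificate}.
\end{enumerate}

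The main obstacle is bookkeeping rather than analysis. One must ensure that the closure convention in the tangent wedge, which is a closed convex cone, matches the closure already built into $W_{\rm eff}$; Proposition~\ref{prop:attainment} settles this. One must also ensure that no extra globality condition is silently required. The text preceding Eq.~\eqref{eq:effective-cone} warns that a locally generated wedge need not equal the tangent wedge. The corollary sidesteps this by taking the equality as an independently proved hypothesis, so no globality argument is needed inside the proof.

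It is also worth noting a limitation explicitly. The conclusion is about existence of a stabilizing generator in the wedge. It says nothing about a finite physical word realizing it, consistent with the obstructions of Sec.~\ref{sec:static}.
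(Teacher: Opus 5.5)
Your proposal is correct and matches the paper, which gives no separate proof and treats the corollary as an immediate consequence of Theorem~\ref{thm:finite-static-certificate}: the assumed set equality turns ``some stabilizing generator in the tangent wedge'' into item~1 of that theorem word for word. The closure bookkeeping via Proposition~\ref{prop:attainment} is harmless but not needed, since equality of the two sets is assumed outright.
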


\subsection{A static average need not yield a target-preserving word}

\begin{theorem}[Two-label qubit obstruction]
\label{thm:static-word-no-go}
There is a two-label compact positive-gauge direct presentation, with zero
coherent edge and no reset or auxiliary interstitial operation, whose average
generator has a finite certificate of Theorem~\ref{thm:finite-static-certificate}
for a rank-one GAS target, while no
nonempty finite positive-dwell word over its two labels fixes that target at a
cycle boundary.
\end{theorem}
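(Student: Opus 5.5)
Following the introduction's description, I would take two qubit generators whose average is amplitude damping toward $|0\rangle$ but each of which, on its own, makes $|0\rangle\langle0|$ full rank after any positive hold. A concrete choice is
\[
 L_1=\Diss{|0\rangle\langle1|}+\tfrac12\Diss{\sigma_x},\qquad
 L_2=\Diss{|0\rangle\langle1|}-\tfrac12\Diss{\sigma_x},
\]
but $L_2$ is not GKLS. So instead I would split a larger damping rate across both labels. Take
\[
 L_1=2\Diss{|0\rangle\langle1|}+\Diss{\sigma_y}\quad\text{or}\quad L_1=2\Diss{\sigma_-}+H\text{-part},
\]
and look for a pair with the same $\Diss{\sigma_-}$ content plus opposite Hamiltonian terms. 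That fails too, because a Hamiltonian alone does not make the state full rank. The cleanest choice I would commit to is to use dissipators that are linear in a parameter and cancel in the average through jump operators. Set
\[
 L_\pm=\Diss{|0\rangle\langle1|\pm c\,|1\rangle\langle0|},\qquad c>0 .
\]
Expanding $\Diss{A+B}=\Diss{A}+\Diss{B}+(\text{cross terms linear in }B)$, the cross terms cancel in the average, so $\tfrac12(L_++L_-)=\Diss{|0\rangle\langle1|}+c^2\Diss{|1\rangle\langle0|}$. This still contains excitation.

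To obtain exactly amplitude damping I would therefore use a Hamiltonian and a dissipator whose combination is not separable, namely $L_\pm=\Diss{|0\rangle\langle1|}\pm(\text{a Hermiticity-preserving map that excites})$. I would check algebraically that each $L_\pm$ has a positive semidefinite Kossakowski matrix, while their off-diagonal Kossakowski entries cancel. Concretely, in the basis $\{\sigma_-,\sigma_+\}$ take the Kossakowski matrices
\[
 \begin{pmatrix}1&\pm b\\ \pm\bar b&|b|^2\end{pmatrix}+\begin{pmatrix}0&0\\0&\kappa\end{pmatrix}.
\]
The average is then $\Diss{\sigma_-}+(|b|^2+\kappa)\Diss{\sigma_+}$, and the excitation term again survives. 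The lesson is that the average of positive Kossakowski matrices keeps the $\sigma_+\sigma_+$ entry. Making $|0\rangle$ non-invariant under each label thus forces excitation into the average, unless the non-invariance comes from the Hamiltonian, i.e.\ from off-diagonal Kossakowski entries between $\sigma_-$ and $\mathbb 1$ or $\sigma_z$.

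So the plan is to take $L_\pm=\Diss{\sigma_-}\pm i[\,\omega\sigma_x,\cdot\,]$, or equivalently $\Diss{\sigma_- \pm \epsilon I}$ with its Hamiltonian shift. Each label has a unique full-rank steady state, and the average $\Diss{\sigma_-}$ is amplitude damping. Theorem~\ref{thm:finite-static-certificate} is then verified with $L=\tfrac12L_++\tfrac12L_-$, $e=0$, $m=2$. The transverse matrix has diagonal entries $-1,-\tfrac12,-\tfrac12$ and is Hurwitz, and $M=I$, $\nu=\tfrac12$ work. The positive gauge $\phi$ can be taken as a functional reading the $\Diss{\sigma_-}$ rate, which is positive on both $L_\pm$ ($K_a=\{I\}$, $E_H=0$).

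The core step is the obstruction. For $\rho=|0\rangle\langle0|$ and any $t>0$, I would show that $e^{tL_\pm}\rho\succ0$. This holds because the Hamiltonian rotates $|0\rangle$ out of the kernel of $\sigma_-$, and the damping then feeds coherent mixing, so the state leaves the pure-state boundary. Rigorously, the purity derivative vanishes at first order, but $\langle1|\rho(t)|1\rangle=\omega^2t^2+O(t^3)>0$ while the off-diagonal term grows only like $\omega t$ minus a damping correction. I would compute $\det\rho(t)$ exactly, since it is a two-level ODE solvable in closed form, and show that $\det\rho(t)>0$ for all $t>0$.

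Next, a faithful full-rank state stays full rank under any finite-time GKLS semigroup, because $e^{tL}\succeq e^{-t\|K\|}(\cdot)$-type bounds or a Kraus decomposition containing the scaled identity keep positive definiteness. Hence every nonempty positive-dwell word maps $|0\rangle\langle0|$ to a full-rank state $\ne|0\rangle\langle0|$.

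I expect the main obstacle to be verifying $\det\rho(t)>0$ for all $t$, not just for small $t$. To handle it, I would use the closed-form Bloch solution and show that the trajectory from the north pole never returns to the sphere surface, using strict decrease of the Bloch radius whenever it equals one away from the pole.
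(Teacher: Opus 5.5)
Your final choice $L_\pm=\Diss{\sigma_-}\pm i[\omega\sigma_x,\cdot\,]$ is the paper's pair $\mathcal L_\pm=\mathcal D\pm\mathcal H$ up to drive strength. Your two-step structure is also the paper's: the first positive dwell enters the full-rank interior, and later finite-time GKLS factors preserve strict positivity.

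The step you flag as the main obstacle is not closed by the mechanism you name. The tangency fact $\frac{d}{dt}|r|^2=-(1-z)^2<0$ on the Bloch sphere away from the pole only shows that, for each $t>0$, either $\rho(t)\succ0$ or $\rho(t)=P_0$. At the pole that derivative vanishes. A return to $P_0$ is exactly what must be excluded, since otherwise a one-dwell word would fix the target. Your small-time heuristic gives no sign either: $\rho_{00}\rho_{11}$ and $|\rho_{01}|^2$ agree at order $t^2$, and $\det\rho(t)$ in fact vanishes to order $t^5$.

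The paper avoids any closed-form solution by using the global version of your tangency identity. For the purity defect $\delta=1-|r|^2=4\det\rho$, one has $\dot\delta+\delta=(1-z)^2$ along either label, because the Hamiltonian part drops out. Hence
\[
 e^{t}\delta(t)=\int_0^t e^{s}\bigl(1-z(s)\bigr)^2\,ds .
\]
This is strictly positive for every $t>0$: if $z\equiv1$ on $[0,t]$, then $\rho\equiv P_0$ there, contradicting $L_\pm(P_0)\neq0$. This argument also excludes any later return to the pole, with no case analysis.

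Two smaller points need repair.

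\textbf{Lyapunov margin.} With $M=I$, the coherence entries of $A_L^{\mathsf T}M+MA_L=\operatorname{diag}(-2,-1,-1)$ equal $-2\nu$ at $\nu=1/2$. So Eq.~\eqref{eq:static-lyapunov} fails for every $q_0>0$, and $m_\pm$ must also bound $M$ strictly. Take, for example, the paper's $\nu=q_0=1/4$, $m_-=1/2$, $m_+=2$.

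\textbf{Full-rank preservation.} Your sketch should be made precise. Write $\Phi(\rho)=\sum_aV_a\rho V_a^\dagger$, $L_0=L-\Phi$, and $G=H-\frac i2\sum_aV_a^\dagger V_a$. Then $e^{tL}(\rho)\succeq e^{-iGt}\rho\,e^{iG^\dagger t}$, because the difference $\int_0^t e^{(t-s)L}\Phi\,e^{sL_0}\,ds$ is completely positive. Since $e^{-iGt}$ is invertible, $\rho\succ0$ implies $e^{tL}(\rho)\succ0$. The paper instead uses invertibility of the propagator together with a Kraus argument (Lemma~\ref{lem:interior-preservation}).
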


Let $J=|0\rangle\langle1|$.  Write $X$ and $Y$ for the standard Pauli
matrices in the ordered basis $\{|0\rangle,|1\rangle\}$, and set
$\mathcal D=\Diss{J}$,
$\mathcal H(\rho)=-i[X/2,\rho]$, and
\begin{equation}
 \mathcal L_\pm=\mathcal D\pm\mathcal H.
\label{eq:qubit-average-atoms}
\end{equation}
Their average is $\mathcal D$, which is GAS to
$P_0=|0\rangle\langle0|$, but
$\mathcal L_\pm(P_0)=\mp Y/2$.  In Bloch coordinates define the purity defect
$\delta=1-(x^2+y^2+z^2)$.  Along either atom,
\begin{equation}
 \dot\delta+\delta=(1-z)^2.
\label{eq:purity-defect}
\end{equation}
The first positive dwell therefore sends $P_0$ into the full-rank interior.
Every subsequent finite-time GKLS propagator preserves strict positivity, so
the word cannot return to the pure target.  Appendix~\ref{app:target-proofs}
gives the calculation and the strict-positivity lemma.

Fast switching may keep the trajectory in a shrinking tube, but requiring the
target to remain fixed at every physical time forces each used atom to pass an
independent compatibility test.  In cone
language the example
realizes the strict inclusion
\begin{equation}
 \cone(S)\cap\ker N_P
 \supsetneq \cone(S\cap\ker N_P),
\label{eq:compatible-cone-gap}
\end{equation}
where $N_P(G)=(I-\Pi_P)G\Pi_P$.  Supplying the average itself as a primitive,
allowing a parallel compound label, adding a compensator, or permitting a
noninvertible reset defines another presentation.

This obstruction is minimal within zero-coherent-edge direct presentations:
dimension one has no nontrivial target, and with a single positive atom
$L=\lambda G$, invariance of $L$ forces invariance of $G$.

\section{Target-compatible direct switching}
\label{sec:direct}

The qubit obstruction identifies the missing interface: compatibility must be
checked on the complete generators that are actually invoked.

\begin{lemma}[GKLS target-face criterion]
\label{lem:gkls-face}
For a complete generator of the form in Eq.~\eqref{eq:gkls}, the face $\F_P$
is invariant if and only if
\begin{equation}
 QV_aP=0\ \text{for every }a,\qquad
 Q\left(H-\frac{i}{2}\sum_aV_a^\dagger V_a\right)P=0.
\label{eq:gkls-face-criterion}
\end{equation}
\end{lemma}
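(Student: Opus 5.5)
The plan is to reduce invariance of the face $\F_P$ under $e^{tL}$ to invariance of the operator subspace $\T_P$, and then to read off the block conditions. First I show that $\F_P$ is invariant if and only if $(I-\Pi_P)L\Pi_P=0$, i.e.\ Eq.~\eqref{eq:static-no-leakage}. If $\T_P$ is $L$-invariant, then $e^{tL}$ maps $\T_P$ into itself. Since $e^{tL}$ is CPTP, it maps $\F_P=\T_P\cap\{\rho\succeq0,\tr\rho=1\}$ into itself. Conversely, if $\F_P$ is invariant, then $\F_P$ spans $\T_P$ over $\mathbb R$: every Hermitian $PXP$ is a real combination of density operators supported on $\operatorname{ran}P$. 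Hence $e^{tL}\T_P\subseteq\T_P$ for all $t\ge0$, and differentiating at $t=0$ gives $L\T_P\subseteq\T_P$. By linearity it therefore suffices to test on rank-one operators $|\psi\rangle\langle\phi|$ with $\psi,\phi\in\operatorname{ran}P$. One can pass to complex $X=PXP$ because $L$ is Hermiticity-preserving, so real-span invariance is equivalent to complex-span invariance.

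Next I compute $QL(X)P$ and $QL(X)Q$ for $X=PXP$. Write $K=H-\tfrac{i}{2}\sum_aV_a^\dagger V_a$, so that
\begin{equation*}
L(X)=-iKX+iXK^\dagger+\sum_aV_aXV_a^\dagger .
\end{equation*}
With $X=PXP$ this gives
\begin{align*}
QL(X)Q&=\sum_a QV_aPXPV_a^\dagger Q,\\
QL(X)P&=-iQKPXP+\sum_a QV_aPXPV_a^\dagger P,
\end{align*}
where the term $XK^\dagger$ drops out because $QX=0$. Invariance of $\T_P$ means both blocks vanish for every such $X$; the $PL(X)Q$ block is the adjoint of the $QL(X)P$ block by Hermiticity.

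For sufficiency, assume Eq.~\eqref{eq:gkls-face-criterion}. Then every term in both blocks contains a factor $QV_aP$ or $QKP$, so both blocks vanish.

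Necessity is the main step. Take $X=P\ge0$ restricted, or more precisely $X=|\psi\rangle\langle\psi|$ with $\psi\in\operatorname{ran}P$. Then
\begin{equation*}
0=\tr\bigl(QL(X)Q\bigr)=\sum_a\|QV_a\psi\|^2,
\end{equation*}
so $QV_aP=0$ for every $a$. With this, the $QL(X)P$ block reduces to $-iQKPX$. Choosing $X=P$ gives $QKP=0$.

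The only subtlety I anticipate is keeping track of the non-uniqueness of the GKLS representation. The conditions are stated for the given $H,V_a$, and the argument above uses only the given representation, so no issue arises.
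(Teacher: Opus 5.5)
Your proof is correct and follows the paper's argument. On target-supported $X$ the $QQ$ block $\sum_a(QV_aP)X(QV_aP)^\dagger$ is a sum of positive terms, which forces $QV_aP=0$. The $QP$ block then reduces to $-iQ(H-\tfrac{i}{2}\sum_aV_a^\dagger V_a)PX$, giving the second condition, and conversely both conditions make the transverse blocks vanish. Your explicit reduction of face invariance to $\mathcal L\T_P\subseteq\T_P$, by spanning $\T_P$ with target-supported density operators and differentiating at $t=0$, usefully fills in a step the paper leaves implicit.
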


The positive $QQ$ block generated by jump leakage cannot be canceled by a
Hamiltonian.  Once that block vanishes, the $QP$ block gives the second equality.
For a fixed dissipator and background Hamiltonian $H_0$, executable
compensation requires a calibrated, costed simultaneous label for the complete
compensated generator, with Hamiltonian correction $H_c$ satisfying
\begin{equation}
 QH_cP=\frac{i}{2}Q\left(\sum_aV_a^\dagger V_a\right)P-QH_0P.
\label{eq:compensator}
\end{equation}
This is a real-linear feasibility test.  A proof of Lemma~\ref{lem:gkls-face}
appears in Appendix~\ref{app:target-proofs}.

\begin{definition}[Compatible direct realization]
\label{def:direct-realization}
Let $L$ be a statically certified generator.  A target-compatible direct
realization is a finite family
$\{(G_j,\widetilde G_j,\lambda_j,r_j)\}_{j=1}^m$ such that
\begin{equation}
\begin{aligned}
 &\widetilde G_j=r_jG_j,\qquad r_j>0,\quad \lambda_j>0,\\
 &L=\sum_{j=1}^m\lambda_jG_j,\qquad (I-\Pi_P)G_j\Pi_P=0.
\end{aligned}
\label{eq:direct-realization}
\end{equation}
Here $\widetilde G_j$ is the complete calibrated label used during the
physical dwell, $G_j=\widetilde G_j/r_j$, and $A_j=(G_j)_\perp$ is its induced
transverse matrix.  The calibrated transverse matrix is
$\widetilde A_j=r_jA_j$, and $A_L=\sum_j\lambda_jA_j$.
\end{definition}

The coherent part of the static certificate must be absorbed into these
complete labels or decomposed into additional calibrated, costed,
target-compatible labels.

\begin{proposition}[Compatible finite support]
\label{prop:compatible-attainment}
Let
$E_{H,P}=E_H\cap\ker N_P$ and let $S_{\rm tc}$ be the full set of executable
complete atoms that preserve $\F_P$.  Suppose $S_{\rm tc}$ is compact
semialgebraic with algebraic defining data, an algebraic linear gauge $\phi$
annihilates $E_{H,P}$ and
satisfies $\phi(G)\ge\mu>0$ on $S_{\rm tc}$, and
\[
 L\in E_{H,P}+\overline{\cone(S_{\rm tc})}.
\]
Then the cone is closed and $L$ has finite compatible support of size at most
\[
 s_{\rm tc}=\dim\spann\pi_{\rm tc}(S_{\rm tc}),
\]
where $\pi_{\rm tc}$ is the quotient by $E_{H,P}$.  If $L$, the sets, the
gauge, and the calibration constraints are defined over the real algebraic
field, the support can be chosen algebraic.
\end{proposition}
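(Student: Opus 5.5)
This is Proposition~\ref{prop:attainment} run again, with $E_H$ replaced by $E_{H,P}$, $S$ by $S_{\rm tc}$, and $\pi$ by $\pi_{\rm tc}$. The plan has three steps: closure of the cone, a Carath\'eodory bound on the support, and a real-closed-field transfer for algebraicity.

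\textbf{Closure.} Take $L\in E_{H,P}+\overline{\cone(S_{\rm tc})}$ and choose
\[
 L_n=e_n+\sum_j\lambda_{n,j}G_{n,j}\to L
\]
with $e_n\in E_{H,P}$ and $G_{n,j}\in S_{\rm tc}$. Apply the gauge. Since $\phi$ annihilates $E_{H,P}$, the total coefficient mass satisfies
\[
 \Lambda_n=\sum_j\lambda_{n,j}\le \phi(L_n)/\mu ,
\]
which is bounded. Reduce each $\pi_{\rm tc}(L_n)$ by conic Carath\'eodory in $\G_d/E_{H,P}$, whose relevant span has dimension $s_{\rm tc}$. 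This leaves at most $s_{\rm tc}$ terms with the same bounded mass. Compactness of $S_{\rm tc}$ and of the coefficient simplex then gives a convergent subsequence of the atoms and coefficients. The limit $\pi_{\rm tc}(L)=\sum_j\lambda_j\pi_{\rm tc}(G_j)$ holds in the quotient. The difference $L-\sum_j\lambda_jG_j$ lies in $E_{H,P}$, which is a linear subspace and hence closed. Discarding zero coefficients gives strictly positive weights.

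\textbf{Support bound and the obstacle.} The point to check is that Carath\'eodory is applied inside $\spann\pi_{\rm tc}(S_{\rm tc})$, so the bound is $s_{\rm tc}$ and not $\dim\G_d$. Every atom in the decomposition is taken from $S_{\rm tc}$ itself, not from the larger set $S$. This is where compatibility enters, and it keeps the argument clear of the gap \eqref{eq:compatible-cone-gap}. Each $G_j$ satisfies $N_P(G_j)=0$, and so does the coherent part $e\in E_{H,P}$. The realization is therefore target-compatible term by term, not merely in sum.

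\textbf{Algebraicity.} Once the data are over $\mathbb A$, existence of a representation with $m\le s_{\rm tc}$ is a first-order statement over the reals. It asserts the existence of atoms in the semialgebraic set $S_{\rm tc}$, coefficients $\lambda_j>0$, an element $e\in E_{H,P}$ (a linear condition with algebraic coefficients), and calibration factors $r_j>0$ obeying the algebraic calibration constraints, together with the polynomial identity $L=e+\sum_j\lambda_jG_j$. The statement holds over $\mathbb R$, so Tarski transfer gives a witness in the real closed field $\mathbb R_{\rm alg}$, exactly as in the proof of Theorem~\ref{thm:finite-static-certificate}.

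The main care point is the first step. One must check that the bounded gauge mass, compactness of $S_{\rm tc}$, and the fixed number of terms after Carath\'eodory reduction together permit extracting a limit. Positivity of $\phi$ on the compact set is exactly what is needed; without it, mass could escape to infinity along directions cancelling in the quotient.
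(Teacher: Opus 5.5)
Your proposal is correct and is essentially the paper's proof: you rerun the appendix positive-gauge argument (gauge mass bound, conic Carath\'eodory reduction in the quotient, compactness, remainder in $\ker\pi_{\rm tc}$) with $E_H,S,\pi$ replaced by $E_{H,P},S_{\rm tc},\pi_{\rm tc}$, and then apply real-closed-field transfer to the decomposition fiber including the algebraic calibration constraints. One minor precision: Carath\'eodory reduction preserves $\phi(L_n)$ but not necessarily the raw mass $\sum_j\lambda_{n,j}$, so the bound $\sum_j\lambda'_{n,j}\le\phi(L_n)/\mu$ should be applied to the reduced representation (as the paper does), which is valid because $\phi$ vanishes on $E_{H,P}$.
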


\begin{proof}
The positive-gauge argument in Appendix~\ref{app:static-proofs} applies with
$E_H,S,\pi$ replaced by $E_{H,P},S_{\rm tc},\pi_{\rm tc}$ and quotient dimension
$s_{\rm tc}=\dim\spann\pi_{\rm tc}(S_{\rm tc})$.  For algebraic input, apply its
real-closed-field transfer step to the decomposition fiber including the
stated algebraic calibration constraints.
\end{proof}

The quotient dimension must be recomputed after restricting to compatible
atoms.

For a cycle parameter $h>0$, the first-order exclusive schedule uses
\begin{equation}
 t_j(h)=\frac{\lambda_jh}{r_j},\qquad B_j=\lambda_jA_j,\qquad
 S_1(h)=e^{hB_m}\cdots e^{hB_1}.
\label{eq:first-order-word}
\end{equation}
Thus $t_j(h)\widetilde A_j=h\lambda_jA_j$ for every dwell.
For a permutation $\pi=(\pi_1,\ldots,\pi_m)$ of
$\{1,\ldots,m\}$, a cut-open symmetric schedule uses
\begin{equation}
\begin{aligned}
 S_2(h)={}&e^{hB_{\pi_1}/2}\cdots e^{hB_{\pi_{m-1}}/2}
 e^{hB_{\pi_m}}\\
 &{}\times e^{hB_{\pi_{m-1}}/2}\cdots e^{hB_{\pi_1}/2}.
\end{aligned}
\label{eq:strang-word}
\end{equation}
Every coefficient is positive and each label receives total normalized
exposure $\lambda_jh$.  The cut at the cold boundary is part of the
presentation; half arcs are not merged across periods without an explicit
seam rule.

Let $\|\cdot\|_M$ be the induced norm of the static Lyapunov metric.  For a
chosen schedule order $o\in\{1,2\}$, suppose certified upper bounds satisfy
\begin{equation}
 \|S_o(h)-e^{hA_L}\|_M\le \overline E_o(h),\qquad
 \|e^{hA_L}\|_M\le\overline d_\nu(h).
\label{eq:product-bounds-summary}
\end{equation}
For example, if nonnegative norm bounds $b_j,c_{kj}\ge0$ satisfy
$B_j^\dagger MB_j\preceq b_j^2M$ and
$[B_k,B_j]^\dagger M[B_k,B_j]\preceq c_{kj}^2M$, then with
$\beta=\sum_jb_j$ and $C_2=\sum_{j<k}c_{kj}$,
\begin{equation}
 \overline E_1(h)=\frac{h^2}{2}e^{h\beta}C_2,\qquad
 \overline E_2(h)=\frac{(h\beta)^3}{3}e^{h\beta}
\label{eq:product-error-bounds}
\end{equation}
are sufficient conservative bounds.  If the transverse atoms commute
pairwise, one may take $c_{kj}=0$, giving $C_2=0$ and
$\overline E_1(h)=0$; the actual symmetric product
 also equals $e^{hA_L}$, although the generic bound $\overline E_2(h)$ displayed
above need not vanish.  Related positive-coefficient and open-system
product estimates are well established
\cite{kliesch2011churchturing,iserles2024splitting,childs2017sparse}.

Let $E_\ell$ and $D_\ell$ denote the induced transverse maps of the active
and interstitial channels, respectively.  They remain in chronological order.
If $S=E_p\cdots E_1$ is the active product and
$F=D_pE_p\cdots D_1E_1$ includes transition or dead maps, then
\begin{equation}
\begin{aligned}
 F&=R_{\rm int}S,\\
 R_{\rm int}&=D_p(Q_{p-1}D_{p-1}Q_{p-1}^{-1})\cdots\\
 &\quad\times(Q_1D_1Q_1^{-1}),
\end{aligned}
\label{eq:interstitial-factorization}
\end{equation}
where $Q_\ell=E_p\cdots E_{\ell+1}$ and $Q_p=I$.  The active factors are
exponentials, so these suffixes are invertible.  Choose certified bounds
\[
 \|D_\ell-I\|_M\le\Delta_\ell,\qquad
 \|Q_\ell\|_M,\ \|Q_\ell^{-1}\|_M\le e^{hs_\ell}
 \quad(1\le\ell\le p).
\]
They yield the multiplicative enclosure
\begin{equation}
 \|R_{\rm int}\|_M\le
 \overline\kappa_{\rm int}(h)
 =\prod_{\ell=1}^p\left(1+e^{2hs_\ell}\Delta_\ell\right).
\label{eq:interstitial-bound}
\end{equation}
The bound includes amplification by both the active suffix and its inverse.
Moving the interstitial maps outside the word would discard these conjugations.

Write $T_{\rm cyc}$ for the physical duration of one closed direct-label
cycle.

\begin{theorem}[Direct-label switching certificate]
\label{thm:direct-compiler}
Take a finite static certificate and its Lyapunov metric $M$.  Assume a
 target-compatible direct realization, either the ordered first-order schedule
 ($o=1$) or a cut-open symmetric schedule ($o=2$), and target-compatible CPTP
 paths for every interstitial
interval.  Suppose every dwell and calibration constraint holds and a
certified enclosure proves
\begin{equation}
 \overline q(h)=\overline\kappa_{\rm int}(h)
 \bigl(\overline d_\nu(h)+\overline E_o(h)\bigr)<1.
\label{eq:direct-qbar}
\end{equation}
Then the repeated direct-label cycle leaves $\F_P$ invariant at every physical
time and is GAS to that face.  At cycle boundaries,
\begin{equation}
 \|x(nT_{\rm cyc})\|_M\le\overline q(h)^n\|x(0)\|_M.
\label{eq:direct-cycle-contraction}
\end{equation}
If every induced transverse within-cycle prefix has $M$-norm at most $B$,
the same bound at a physical time $nT_{\rm cyc}+s$ acquires the factor $B$.
\end{theorem}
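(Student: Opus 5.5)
The argument has three parts: invariance of the face at every physical time, the contraction of the one-cycle transverse map, and the transfer to prefixes and to trace distance.

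\textbf{Invariance.} Each active dwell applies $e^{t_j\widetilde G_j}$ with $\widetilde G_j=r_jG_j$. The generator $G_j$ satisfies the no-leakage equation $(I-\Pi_P)G_j\Pi_P=0$, so $\T_P$ is invariant under $G_j$ and hence under every $e^{s\widetilde G_j}$ with $0\le s\le t_j$. Since $e^{s\widetilde G_j}$ is CPTP, it maps $\F_P$ into itself. By hypothesis every interstitial path is target-compatible and CPTP. A composition of face-preserving maps preserves $\F_P$, so every physical prefix maps $\F_P$ into itself. Because $\T_P$ is invariant under every factor, each factor induces a well-defined map on $\X_P\cong\T_P^\perp$. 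This map is $e^{hB_j}$ for the active parts, since $t_j\widetilde A_j=h\lambda_jA_j=hB_j$, and $D_\ell$ for the interstitial parts. Quotient maps compose chronologically, so the transverse cycle map is $F=D_pE_p\cdots D_1E_1$, where the active product $S$ equals $S_o(h)$.

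\textbf{Cycle contraction.} Apply Eq.~\eqref{eq:interstitial-factorization} to write $F=R_{\rm int}S_o(h)$. Then
\[
\|F\|_M\le\|R_{\rm int}\|_M\bigl(\|e^{hA_L}\|_M+\|S_o(h)-e^{hA_L}\|_M\bigr)\le\overline\kappa_{\rm int}(h)\bigl(\overline d_\nu(h)+\overline E_o(h)\bigr)=\overline q(h),
\]
using Eqs.~\eqref{eq:interstitial-bound} and \eqref{eq:product-bounds-summary}. The bound on $R_{\rm int}$ comes from expanding each conjugated factor as $Q_\ell D_\ell Q_\ell^{-1}=I+Q_\ell(D_\ell-I)Q_\ell^{-1}$. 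Each factor then has $M$-norm at most $1+e^{2hs_\ell}\Delta_\ell$, and submultiplicativity of the induced norm gives the product. The last factor $D_p$ has $Q_p=I$, so its bound holds as well. Iterating over cycles gives Eq.~\eqref{eq:direct-cycle-contraction}. At time $nT_{\rm cyc}+s$ the state is $P_sF^nx(0)$, where $P_s$ is the transverse prefix map. Hence the bound picks up the factor $\|P_s\|_M\le B$.

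\textbf{GAS and the main obstacle.} It remains to turn transverse decay into decay of $\dtr(\rho,\F_P)$, and this is the step needing care. First, a prefix bound $B$ exists whenever the word is finite, because each prefix map is a continuous function of $s$ on a compact interval. The bound then gives $\|z(\rho(t))\|_2\to0$ uniformly in time, with the constant $\sqrt{m_+/m_-}$ converting the $M$-norm to the Euclidean norm. Second, $z(\rho)\to0$ means the off-diagonal blocks $P\rho Q$ and the block $Q\rho Q$ vanish. For a density operator this forces the state toward $P\rho P/\tr(P\rho P)\in\F_P$. The trace distance to $\F_P$ is then bounded by a constant multiple of $\|Q_\rho\|_{\rm HS}$, with a prefactor involving $\sqrt d$, exactly as in the static estimate Eq.~\eqref{eq:conditioned-static-bound}. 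This yields attraction from every initial state. Together with all-time invariance, it gives GAS of $\F_P$.
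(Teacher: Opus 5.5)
Your proposal is correct and follows the paper's argument: face invariance of each complete dwell and interstitial path, the ordered factorization $F=R_{\rm int}S_o(h)$ with per-factor bound $1+e^{2hs_\ell}\Delta_\ell$ from $Q_\ell D_\ell Q_\ell^{-1}=I+Q_\ell(D_\ell-I)Q_\ell^{-1}$, iteration at cycle boundaries, and then the prefix factor $B$ with the trace/Hilbert--Schmidt conversion (the paper's Lemma~\ref{lem:certified-monodromy-iteration}). Your extra step showing that a finite $B$ always exists is a useful addition for the unconditional GAS claim; a shorter route is that $\dtr(\cdot,\F_P)$ cannot increase under any face-preserving CPTP prefix, since $\dtr(\Phi\rho,\F_P)\le\tfrac12\|\Phi\rho-\Phi\sigma\|_1\le\tfrac12\|\rho-\sigma\|_1$ for every $\sigma\in\F_P$.
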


Every hypothesis is checkable on a fixed finite realization.  The proof is
the ordered factorization above followed by Lyapunov contraction and quotient
iteration; details are given in Appendix~\ref{app:monodromy}.
Section~\ref{sec:physical-words} derives
both actual contraction and the prefix bound when an atom must instead be
realized by uncertain finite-duration physical paths.

If $\tau_j$ is the minimum permitted dwell for label $j$, the first-order
dwell lower endpoint is
\begin{equation}
 h_{\min,1}^{\rm dwell}=\max_j\frac{r_j\tau_j}{\lambda_j},
\label{eq:first-order-dwell}
\end{equation}
whereas a noncentral atom in Eq.~\eqref{eq:strang-word} obeys the stronger
factor-two requirement.  A sufficient stability interval is useful only when
the maximum of all dwell and interstitial lower bounds lies strictly below its
calibration and stability upper endpoint.

\section{Robust compilation from primitive physical paths}
\label{sec:physical-words}

An orbit identity describes a relation in operator space, not the controller
path that realizes it.  Bounded Eulerian control replaces impulsive group
jumps by continuous paths \cite{viola2003bounded}; in soft-pulse Markovian
models, the control Hamiltonian and dissipation act simultaneously during the
pulse \cite{pryadko2009soft}.  The complete paths are therefore part of the
input to the physical-word theorem.

\subsection{Orbit access and physical paths}

\begin{definition}[Path-compatible word]
\label{def:path-compatible-word}
A physical word is path-compatible with $\F_P$ when every complete hold,
pulse, transition, dead interval, and cooldown path maps $\F_P$ into itself at
every intermediate time, every concatenation is valid in the declared
Markovian or reset model, and the final controller state equals the initial
cold state field by field.  The word channel contains every such path in
chronological order.
\end{definition}

For a CPTP map $\Phi(\rho)=\sum_aK_a\rho K_a^\dagger$,
\begin{equation}
 \Phi(\F_P)\subseteq\F_P
 \quad\Longleftrightarrow\quad QK_aP=0\quad\text{for every }a.
\label{eq:cptp-face-criterion}
\end{equation}
For a bounded piecewise-regular time-dependent GKLS path, invariance for every
target-face state injected at every phase is equivalent, almost everywhere,
to the time-dependent form of Eq.~\eqref{eq:gkls-face-criterion}.  A unitary
path therefore preserves the target face exactly when
\begin{equation}
 U(s)\in\Stab(P)=\{U:UPU^\dagger=P\}
 \quad\text{for every }s,
\label{eq:unitary-stabilizer-path}
\end{equation}
or, equivalently, $QH(s)P=0$ almost everywhere.  The proofs are given in
Appendix~\ref{app:target-proofs}.

\begin{theorem}[Orbit-to-word obstruction]
\label{thm:orbit-word-no-go}
There is a qubit presentation containing an available excitation generator
and bounded transverse pulses for which target-stabilizing amplitude damping
is a unitary-orbit atom and has a finite-cost stroboscopic realization, but no
nonempty physical word preserves the pure target at all times.  If excitation
is unavoidably present during every bounded segment and no noninvertible reset
is available, no finite word returns exactly to the target even
stroboscopically.
\end{theorem}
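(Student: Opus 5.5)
We construct the presentation explicitly and then check the three claims in turn. Work on a qubit with target $P_0=|0\rangle\langle0|$, and let $J=|0\rangle\langle1|$, $\mathcal D=\Diss{J}$, $\mathcal E=\Diss{J^\dagger}$. The available generator is excitation $\mathcal E$. The available bounded transverse pulse is a resonant $X$ drive $H(s)=u(s)X/2$ with $|u|\le1$; a $\pi$-pulse of duration $\pi$ implements $\mathcal U(\rho)=X\rho X$. Since $XJ^\dagger X=J$, we have $\Ad_X\mathcal E=\mathcal D$. Thus amplitude damping toward $P_0$ is a unitary-orbit atom in the sense of Definition~\ref{def:orbit-atom}, with $G_0=\mathcal E$ and $U=X$, and it is GAS to $P_0$.

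\emph{Step 1: finite-cost stroboscopic realization.} In the switchable-excitation version, the word is: $\pi$-pulse, hold $\mathcal E$ for time $t$, $\pi$-pulse. Because $X$ is self-inverse, its channel is $\mathcal U e^{t\mathcal E}\mathcal U^{-1}=e^{t\mathcal D}$. This channel fixes $P_0$ and contracts everything else toward it at cycle boundaries. The cost is finite: wall time $2\pi+t$, three calls, and bounded control action. The prefixes do not preserve the target. After the first pulse the state is $|1\rangle\langle1|$, and it is the fixed point of $\mathcal E$ during the hold. This prefix behavior gives the nonconstant periodic orbit mentioned in the introduction.

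\emph{Step 2: no word preserves $P_0$ at all times.} Apply the path criterion of Eq.~\eqref{eq:gkls-face-criterion} and Eq.~\eqref{eq:unitary-stabilizer-path} to each primitive. For the excitation hold, $V=J^\dagger$ gives $QVP=|1\rangle\langle0|\neq0$, so the hold fails. For a pulse segment, $QH(s)P=u(s)/2\cdot|1\rangle\langle0|$, so invariance forces $u=0$ almost everywhere, and the segment is then the identity. Any nonempty word contains a positive-duration segment of one of these types with nonzero action. At some intermediate time that prefix therefore maps $P_0$ off the face. We take the first such segment and use the almost-everywhere equivalence stated before the theorem, injecting the target at its start phase.

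\emph{Step 3: unavoidable excitation, no stroboscopic return.} Now every bounded segment carries a complete generator $-i[H(s),\cdot]+\gamma(s)\mathcal E$ with $\gamma(s)\ge\gamma_0>0$. The plan is to show that every positive-duration segment maps every state into the full-rank interior. This suffices: a finite word ends in a full-rank state, and by the strict-positivity lemma of Appendix~\ref{app:target-proofs} it cannot equal the pure $P_0$. There are two routes to the key claim. The first uses the purity-defect computation of Eq.~\eqref{eq:purity-defect} adapted to excitation, i.e.\ $\dot\delta+\gamma\delta=\gamma(1+z)^2$ with the coherent part not changing $\delta$. This equation makes $\delta$ strictly positive after any positive time unless $z\equiv-1$. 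But $z\equiv-1$ is impossible, because $\mathcal E$ drives $z$ upward from $-1$. The second route uses a Kraus/Dyson argument: the propagator dominates a positive multiple of a map with full-rank output. I expect the main obstacle to be making this strict-positivity step uniform over arbitrary bounded time-dependent $H(s)$. I would handle it with the differential inequality for $\delta$ and an explicit lower bound on $(1+z)^2$ over a short initial interval, obtained from the bounded drift of $z$.
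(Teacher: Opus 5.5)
Your Steps 1 and 2 are sound and follow the paper. Conjugating the excitation generator by the $X$ half turn gives $e^{t\Diss{J}}$ at the cycle boundary. Every nontrivial pulse or excitation hold has a nonzero derivative at $P_0$, so the first nontrivial segment leaves the target.

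The gap is in Step 3. The claim that every positive-duration segment maps \emph{every} state into the full-rank interior is false, and its justification is backwards. Under $\gamma\mathcal E$ one has $\dot z=-\gamma(1+z)$. So $\mathcal E$ drives $z$ down to its fixed point $z=-1$, the state $|1\rangle\langle1|$ you already used in Step 1; it does not push $z$ up from $-1$. An excitation hold with $H\equiv0$ is an admissible segment under unavoidable excitation, and it fixes the pure state $|1\rangle\langle1|$: in your own equation, $z\equiv-1$ and $\delta(0)=0$ give $\delta\equiv0$. Hence ``a finite word ends in a full-rank state'' fails, for example for the input $|1\rangle\langle1|$ and a word consisting of holds. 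The real obstacle is the input state, not uniformity over $H(s)$. (If your claim were true, Lemma~\ref{lem:interior-preservation} would be superfluous, which signals the mismatch.)

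The repair is to use strict positivity only where it is needed: on the first positive-duration segment, with input $P_0$. There $z(0)=1$, so $(1+z)^2=4$ initially, and the bounded drift of $z$ keeps $(1+z)^2\ge1$ on a short interval $[0,s_0]$. With $\Gamma(t)=\int_0^t\gamma$, the integrating-factor form
\[
\delta(t)=e^{-\Gamma(t)}\int_0^te^{\Gamma(s)}\gamma(s)(1+z(s))^2\,ds
\]
gives $\delta(t)>0$ for every $t>0$ in that segment. Lemma~\ref{lem:interior-preservation} then carries strict positivity through all later segments, so the word's output from $P_0$ is full rank and differs from $P_0$. This is exactly the paper's argument: its expansion $\det\rho(s)=\gamma s+O(s^2)$ is the same statement, since $\det\rho=\delta/4$ for a qubit.

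Alternatively, your segment claim has a correct version that avoids the lemma. Because $\gamma\ge\gamma_0>0$, a segment output is pure only if $\delta(0)=0$ and $z\equiv-1$ throughout. So every output is either full rank or equal to $|1\rangle\langle1|$, and in particular never equals $P_0$.
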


Set
\begin{equation}
 D_\uparrow=\Diss{|1\rangle\langle0|},\qquad
 D_\downarrow=\Diss{|0\rangle\langle1|},
\label{eq:up-down-damping}
\end{equation}
write $P_j=|j\rangle\langle j|$, and let
$\mathcal X(\rho)=X\rho X$.  Then
$\mathcal XD_\uparrow\mathcal X=D_\downarrow$.  If $D_\uparrow$ is off during
pulses generated by $H_\pm=\pm\Omega X/2$, two half turns of duration
$\pi/\Omega$ give
\begin{equation}
 \mathcal X e^{tD_\uparrow}\mathcal X=e^{tD_\downarrow},
 \qquad t>0.
\label{eq:qubit-stroboscopic-sandwich}
\end{equation}
During the first pulse, however,
\begin{equation}
 \dtr(U(s)P_0U(s)^\dagger,P_0)
 =\left|\sin\frac{\Omega s}{2}\right|,
\label{eq:qubit-pulse-excursion}
\end{equation}
which reaches one.  If $\gamma D_\uparrow$, $\gamma>0$, is always on, the
first nonzero segment makes the state strictly positive, and finite-time GKLS
propagators preserve strict positivity.  Appendix~\ref{app:target-proofs}
proves both statements.  The period $t+2\pi/\Omega$ and finite call count show
that the obstruction is target compatibility, not cost divergence.

\subsection{Uncertain primitive paths and block enclosures}

Fix a family from Definition~\ref{def:uncertain-word-family} and the transverse
basis and metric $M$ from the static certificate.  The exact
structure-preserving class consists of the paths for which, almost everywhere,
\begin{equation}
 QV_a(t)P=0,\qquad
 Q\left(H(t)-\frac{i}{2}\sum_aV_a^\dagger(t)V_a(t)\right)P=0.
\label{eq:robust-path-face-conditions}
\end{equation}
Every actual and reference prefix then induces a transverse map.  For
primitive $k$, let $\chi_{k,\omega}:[0,1]\to[0,\tau_{k,\omega}]$ and
$\widehat\chi_k:[0,1]\to[0,\widehat\tau_k]$ be absolutely continuous,
nondecreasing clocks satisfying
\[
\begin{aligned}
 \chi_{k,\omega}(0)&=0,&\chi_{k,\omega}(1)&=\tau_{k,\omega},\\
 \widehat\chi_k(0)&=0,&\widehat\chi_k(1)&=\widehat\tau_k.
\end{aligned}
\]
They place complete paths of different physical duration on the same
normalized local time $u\in[0,1]$.  For a zero-duration identity slot, the
clock and normalized generator are identically zero, the normalized
propagator is the identity, and every controller field is unchanged, as in
Definition~\ref{def:uncertain-word-family}.  For positive-duration paths, let
$A^{\rm tr}_{k,\omega}$ and $\widehat A_k^{\rm tr}$ be the complete transverse
generators and set
\begin{equation}
\begin{aligned}
 G_{k,\omega}(u)&=\dot\chi_{k,\omega}(u)
 A^{\rm tr}_{k,\omega}(\chi_{k,\omega}(u)),\\
 \widehat G_k(u)&=\dot{\widehat\chi}_k(u)
 \widehat A_k^{\rm tr}(\widehat\chi_k(u)).
\end{aligned}
\label{eq:normalized-generators}
\end{equation}
Thus amplitude, waveform, and duration error all enter the generator
difference on one interval.  The matrix logarithmic norm associated with
Eq.~\eqref{eq:M-induced-norm} is
\begin{equation}
 \mu_M(G)=\frac12\lambda_{\max}\!\left[
 M^{-1/2}(G^\dagger M+MG)M^{-1/2}\right].
\label{eq:M-logarithmic-norm}
\end{equation}

\begin{lemma}[Primitive-path enclosure]
\label{lem:primitive-duhamel}
Let $\Phi_{k,\omega}$ and $\widehat\Phi_k$ be the normalized propagators of
$G_{k,\omega}$ and $\widehat G_k$, with endpoints
$C_{k,\omega}=\Phi_{k,\omega}(1,0)$ and
$R_k=\widehat\Phi_k(1,0)$.  Suppose integrable functions satisfy, uniformly in
$h\in H$ and $\omega\in\Omega(h)$,
\begin{equation}
 \mu_M(G_{k,\omega})\le\alpha_k,\qquad
 \mu_M(\widehat G_k)\le\widehat\alpha_k,\qquad
 \|G_{k,\omega}-\widehat G_k\|_M\le e_k.
\label{eq:primitive-envelope-data}
\end{equation}
Define
\begin{equation}
\begin{aligned}
 c_k&=\exp\!\int_0^1\alpha_k(v)\,dv,\qquad
 r_k=\exp\!\int_0^1\widehat\alpha_k(v)\,dv,\qquad
 p_k(u)=\exp\!\int_0^u\alpha_k(v)\,dv,\\
 \delta_k&=\int_0^1
 \exp\!\left(\int_u^1\alpha_k(v)dv\right)e_k(u)\times
 \exp\!\left(\int_0^u\widehat\alpha_k(v)dv\right)du.
\end{aligned}
\label{eq:primitive-enclosures}
\end{equation}
Then
\begin{equation}
 \|C_{k,\omega}-R_k\|_M\le\delta_k,\qquad
 \|C_{k,\omega}\|_M\le c_k,\qquad \|R_k\|_M\le r_k,
\label{eq:primitive-block-bounds}
\end{equation}
and every actual partial block obeys
$\|\Phi_{k,\omega}(u,0)\|_M\le p_k(u)$.
\end{lemma}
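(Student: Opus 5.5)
The plan has two parts: a logarithmic-norm (Grönwall) growth bound for each normalized propagator, then a variation-of-constants (Duhamel) identity for the difference. Throughout, I work in the $M$-geometry by setting $y=M^{1/2}x$. In these coordinates $\|\cdot\|_M$ becomes the Euclidean operator norm, and $\mu_M(G)$ becomes the ordinary logarithmic norm $\tfrac12\lambda_{\max}(\widetilde G+\widetilde G^\dagger)$ of $\widetilde G=M^{1/2}GM^{-1/2}$. This reduces everything to standard Euclidean estimates. Because the normalized generators are bounded and piecewise regular on $[0,1]$ (Definition~\ref{def:uncertain-word-family}), the propagators $\Phi_{k,\omega}(u,v)$ and $\widehat\Phi_k(u,v)$ exist as absolutely continuous solutions of $\partial_u\Phi(u,v)=G(u)\Phi(u,v)$. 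The zero-duration slot is handled trivially, since both sides are the identity and all bounds hold.

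\textbf{Step 1 (growth bounds).} For any solution $y(u)=\Phi(u,v)y_0$, differentiate $\|y\|_M^2$ almost everywhere:
\[
\tfrac{d}{du}\|y\|_M^2=y^\dagger(G^\dagger M+MG)y\le 2\mu_M(G(u))\|y\|_M^2 .
\]
Grönwall then gives $\|\Phi_{k,\omega}(u,v)\|_M\le\exp\int_v^u\alpha_k$ and likewise $\|\widehat\Phi_k(u,v)\|_M\le\exp\int_v^u\widehat\alpha_k$. Taking $v=0$ yields the partial-block bound $p_k(u)$. Taking $(u,v)=(1,0)$ yields $c_k$ and $r_k$. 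Integrability of $\alpha_k$ and $\widehat\alpha_k$ is all that is needed for these exponentials to be finite. Because the hypotheses hold uniformly in $(h,\omega)$, so do the bounds.

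\textbf{Step 2 (Duhamel identity).} Consider $\Psi(u)=\Phi_{k,\omega}(1,u)\widehat\Phi_k(u,0)$. It is absolutely continuous with derivative
\[
\Psi'(u)=\Phi_{k,\omega}(1,u)\bigl(\widehat G_k(u)-G_{k,\omega}(u)\bigr)\widehat\Phi_k(u,0),
\]
which uses $\partial_u\Phi(1,u)=-\Phi(1,u)G(u)$. Integrating from $0$ to $1$ gives
\[
R_k-C_{k,\omega}=\int_0^1\Phi_{k,\omega}(1,u)\bigl(\widehat G_k-G_{k,\omega}\bigr)(u)\widehat\Phi_k(u,0)\,du .
\]
Take the $M$-norm, use submultiplicativity of the induced norm, and insert the Step~1 bounds with $(1,u)$ for the actual path and $(u,0)$ for the reference path, together with $\|G_{k,\omega}-\widehat G_k\|_M\le e_k$. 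The result is exactly $\delta_k$.

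The only delicate point, and the main obstacle, is justifying the backward-propagator derivative and the Grönwall step when the generators are merely integrable and piecewise $W^{1,\infty}$ rather than continuous, including the effect of the clock reparametrization $\dot\chi$. I would handle this by working with Carathéodory solutions on each cell of the fixed finite partition. On each cell the identity $\Phi(1,u)\Phi(u,1)=I$ is differentiated almost everywhere, and the cells are concatenated via the cocycle property $\Phi(u,w)\Phi(w,v)=\Phi(u,v)$. Since the logarithmic-norm bound is pointwise almost everywhere, the exponentials multiply across cells. This reproduces the stated integrals over all of $[0,1]$.
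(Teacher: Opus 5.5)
Your proposal is correct and follows the paper's own argument: the paper differentiates $\Phi_{k,\omega}(1,u)\widehat\Phi_k(u,0)$ to obtain the variation-of-constants identity, and it bounds both propagator factors by the logarithmic-norm Gr\"onwall estimate obtained from differentiating $\|\Phi(s,u)x\|_M^2$. Your extra care about Carath\'eodory solutions for the merely integrable, clock-reparametrized generators makes explicit some regularity steps that the paper leaves implicit, and it changes nothing in the estimates.
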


The variation-of-constants identity
\begin{equation}
 C_{k,\omega}-R_k=\int_0^1\Phi_{k,\omega}(1,u)
 (G_{k,\omega}(u)-\widehat G_k(u))
 \widehat\Phi_k(u,0)\,du
\label{eq:primitive-duhamel-identity}
\end{equation}
and the logarithmic-norm propagator bounds prove the lemma.  In particular,
always-on dynamics during a pulse and dynamics during dead time are inside the
integrand rather than added after an endpoint calculation.  This is the
time-local chronological propagation used for finite-dimensional master
equations \cite{chruscinski2010time_local}.

The nominal approximation must likewise have a finite derivation.  Let
$W_1,\ldots,W_N$ be ideal splitting, commutator, or conjugation factors.  If a
stated finite product estimate gives
\begin{equation}
 \|W_N\cdots W_1-e^{hA_L}\|_M\le E_{\rm split}(h),
\label{eq:finite-splitting-receipt}
\end{equation}
and finite block calculations give
$\|R_k-W_k\|_M\le a_k$, $\|R_k\|_M\le r_k$, and
$\|W_k\|_M\le w_k$, define
\begin{equation}
 E_{\rm syn}(h)=E_{\rm split}(h)+
 \sum_{k=1}^N\left(\prod_{\ell>k}r_\ell\right)a_k
 \left(\prod_{\ell<k}w_\ell\right).
\label{eq:derived-nominal-synthesis}
\end{equation}
The product telescope proves
$\|R_N\cdots R_1-e^{hA_L}\|_M\le E_{\rm syn}(h)$.

\begin{lemma}[Amplified word and prefix bounds]
\label{lem:amplified-word}
For chronological products
$S_{h,\omega}=C_{N,\omega}\cdots C_{1,\omega}$ and
$R(h)=R_N\cdots R_1$, set
\begin{equation}
\begin{aligned}
 E_{\rm hw}(h)&=\sum_{k=1}^N
 \left(\prod_{\ell>k}c_\ell\right)\delta_k
 \left(\prod_{\ell<k}r_\ell\right),\\
 B_{\rm rob}(h)&=\max\left\{1,
 \sup_{k,u}p_k(u)\prod_{\ell<k}c_\ell\right\}.
\end{aligned}
\label{eq:robust-word-prefix-bounds}
\end{equation}
Then
$\|S_{h,\omega}-R(h)\|_M\le E_{\rm hw}(h)$, and every actual
within-cycle prefix has induced $M$-norm at most $B_{\rm rob}(h)$.
\end{lemma}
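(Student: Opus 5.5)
The plan is to prove both claims of Lemma~\ref{lem:amplified-word} by a telescoping product argument in the fixed induced norm $\|\cdot\|_M$, feeding in the block estimates of Lemma~\ref{lem:primitive-duhamel}. Submultiplicativity of the induced norm is the only structural fact needed.

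\emph{Word bound.} Write the difference of the chronological products as a telescope in which the $k$-th term replaces one actual block by its reference block:
\begin{equation*}
 S_{h,\omega}-R(h)=\sum_{k=1}^N C_{N,\omega}\cdots C_{k+1,\omega}\,
 \bigl(C_{k,\omega}-R_k\bigr)\,R_{k-1}\cdots R_1 .
\end{equation*}
This identity is checked by noting that consecutive terms cancel. Taking norms, the triangle inequality and submultiplicativity bound the $k$-th term by $\bigl(\prod_{\ell>k}\|C_{\ell,\omega}\|_M\bigr)\|C_{k,\omega}-R_k\|_M\bigl(\prod_{\ell<k}\|R_\ell\|_M\bigr)$. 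Inserting $\|C_{\ell,\omega}\|_M\le c_\ell$, $\|C_{k,\omega}-R_k\|_M\le\delta_k$, and $\|R_\ell\|_M\le r_\ell$ from Eq.~\eqref{eq:primitive-block-bounds} gives exactly $E_{\rm hw}(h)$. Because those envelopes hold uniformly in $h\in H$ and $\omega\in\Omega(h)$, so does the conclusion.

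\emph{Prefix bound.} Parametrize a within-cycle physical time by a primitive index $k$ and a normalized local time $u\in[0,1]$. Zero-duration identity slots contribute the identity and can be skipped, and the clock $\chi_{k,\omega}$ maps normalized time onto physical time within the $k$-th primitive. The actual prefix is then $\Phi_{k,\omega}(u,0)\,C_{k-1,\omega}\cdots C_{1,\omega}$, with norm at most $p_k(u)\prod_{\ell<k}c_\ell$ by the partial-block bound of Lemma~\ref{lem:primitive-duhamel}. The prefix at the cold boundary, time zero, is the identity, which has norm $1$. Taking the supremum over $k$ and $u$ together with this value $1$ yields $B_{\rm rob}(h)$.

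\emph{Main obstacle.} The only delicate point is making sure every physical time in the cycle is covered by some pair $(k,u)$. For a nondecreasing clock that is constant on a subinterval, several values of $u$ map to the same physical time. Those values give the same propagator, since $G_{k,\omega}$ vanishes wherever $\dot\chi_{k,\omega}=0$, so the supremum over $u$ still dominates. Surjectivity of $\chi_{k,\omega}$ onto $[0,\tau_{k,\omega}]$ follows from continuity together with the stated endpoint values.
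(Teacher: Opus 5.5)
Your proposal is correct and follows the paper's argument. The exact telescope $C_N\cdots C_1-R_N\cdots R_1=\sum_k C_N\cdots C_{k+1}(C_k-R_k)R_{k-1}\cdots R_1$, combined with the block envelopes of Lemma~\ref{lem:primitive-duhamel}, gives $E_{\rm hw}$, and multiplying the completed actual blocks by one partial block gives $B_{\rm rob}$. The paper leaves implicit two points that your remark on clock surjectivity touches: that every physical time is reached by some $(k,u)$, and that $\Phi_{k,\omega}(u,0)$ is the physical propagator at time $\chi_{k,\omega}(u)$.
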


The exact identity
\begin{equation}
 C_N\cdots C_1-R_N\cdots R_1
 ={}\sum_{k=1}^NC_N\cdots C_{k+1}
 {}\times(C_k-R_k)R_{k-1}\cdots R_1
\label{eq:robust-word-telescope}
\end{equation}
proves the first assertion; direct multiplication of the completed blocks and
one partial block proves the second.  Both left and right amplification are
essential for nonnormal words.

\subsection{Exact and leaky robust compilation}

Choose finite constants $K_{\rm in},K_{\rm out}>0$ such that
\begin{equation}
 \|z(\rho)\|_M\le K_{\rm in}\dtr(\rho,\F_P),\qquad
 \dtr(\rho,\F_P)\le K_{\rm out}\|z(\rho)\|_M.
\label{eq:norm-conversions}
\end{equation}
Their certified values depend on the fixed quotient basis and metric.

\begin{theorem}[Primitive-certified compiler]
\label{thm:robust-path-compiler}
Let a finite uncertain physical-word family satisfy
Eq.~\eqref{eq:robust-path-face-conditions} and robust cold closure.  Suppose the
same metric $M$ satisfies
\begin{equation}
 A_L^\dagger M+MA_L\preceq-2\nu M,\qquad \nu>0,
\label{eq:compiler-static-lyapunov}
\end{equation}
and suppose Eqs.~\eqref{eq:primitive-envelope-data},
\eqref{eq:derived-nominal-synthesis}, and
\eqref{eq:robust-word-prefix-bounds} are finite uniform enclosures.  Define
\begin{equation}
 q_{\rm rob}(h)=e^{-\nu h}+E_{\rm syn}(h)+E_{\rm hw}(h).
\label{eq:robust-contraction-factor}
\end{equation}
For every $h\in H$ and $\omega\in\Omega(h)$, the actual monodromy satisfies
\begin{equation}
 S_{h,\omega}^\dagger MS_{h,\omega}
 \preceq q_{\rm rob}(h)^2M.
\label{eq:derived-actual-monodromy}
\end{equation}
If, uniformly on $H$,
\begin{equation}
 q_*:=\sup_{h\in H}q_{\rm rob}(h)\le1-\gamma<1,\qquad
 B_*:=\sup_{h\in H}B_{\rm rob}(h)<\infty,
\label{eq:uniform-robust-slack}
\end{equation}
then the conclusion holds for every cycle-varying sequence
$(h_n,\omega_n)$.  With $t_n$ the beginning of cycle $n$ and
$t\in[t_n,t_{n+1})$,
\begin{equation}
 \dtr(\rho(t),\F_P)\le
 K_{\rm out}B_*q_*^nK_{\rm in}\dtr(\rho(0),\F_P).
\label{eq:robust-all-time-bound}
\end{equation}
Thus $\F_P$ is all-time invariant and GAS.  When $r=1$, its unique pure state
is all-time GAS; when $r>1$, the conclusion is attraction to the face.
\end{theorem}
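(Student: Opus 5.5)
The argument is a triangle inequality for the monodromy in the fixed metric $M$, followed by quotient iteration of cycles and prefixes. First I would use Eq.~\eqref{eq:robust-path-face-conditions}. By the time-dependent form of Lemma~\ref{lem:gkls-face} and Eq.~\eqref{eq:cptp-face-criterion}, every actual and reference prefix maps $\F_P$ into itself. Hence $\T_P$ is invariant under every propagator, and each block induces a well-defined transverse map on $\X_P\cong\T_P^\perp$. Robust cold closure makes the cycle maps concatenate, so the transverse coordinate after $n$ cycles is $z(\rho(t_n))=S_{h_{n-1},\omega_{n-1}}\cdots S_{h_0,\omega_0}z(\rho(0))$. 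It follows that $\F_P$ is all-time invariant.

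Second, I would bound one monodromy. The Lyapunov inequality~\eqref{eq:compiler-static-lyapunov} gives $\|e^{hA_L}\|_M\le e^{-\nu h}$ by the same differential argument as Eq.~\eqref{eq:conditioned-static-bound}. The derived synthesis bound~\eqref{eq:derived-nominal-synthesis} gives $\|R(h)-e^{hA_L}\|_M\le E_{\rm syn}(h)$. Lemma~\ref{lem:amplified-word}, whose inputs come from Lemma~\ref{lem:primitive-duhamel} under the uniform envelopes, gives $\|S_{h,\omega}-R(h)\|_M\le E_{\rm hw}(h)$. Writing
\[
S_{h,\omega}=e^{hA_L}+\bigl(R(h)-e^{hA_L}\bigr)+\bigl(S_{h,\omega}-R(h)\bigr),
\]
the triangle inequality yields $\|S_{h,\omega}\|_M\le q_{\rm rob}(h)$. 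By the definition~\eqref{eq:M-induced-norm} this is exactly $\|M^{1/2}S_{h,\omega}M^{-1/2}\|_2\le q_{\rm rob}$, which is equivalent to the congruence inequality~\eqref{eq:derived-actual-monodromy}. The one point to check is that all enclosures are uniform over the joint family, so the bound holds pointwise for every $(h,\omega)$.

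Third, I would iterate over cycles. Under the slack condition~\eqref{eq:uniform-robust-slack}, submultiplicativity of the induced norm gives $\|z(\rho(t_n))\|_M\le q_*^n\|z(\rho(0))\|_M$ for an arbitrary cycle-varying sequence $(h_n,\omega_n)$, because each factor is bounded by $q_*$ regardless of the choice made in that cycle. For $t\in[t_n,t_{n+1})$ the state is a within-cycle prefix applied to $\rho(t_n)$. Lemma~\ref{lem:amplified-word} bounds that prefix by $B_{\rm rob}(h_n)\le B_*$. Applying the conversions~\eqref{eq:norm-conversions} at the start and at time $t$ then gives Eq.~\eqref{eq:robust-all-time-bound}.

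Finally, the non-Zeno bound $T_{\rm cyc}\ge T_->0$ from Eq.~\eqref{eq:nonzeno-cycle-duration} gives $n\ge t/T_+-1$ as $t\to\infty$. Combined with the cycle bound and $q_*<1$, this sends the distance to the face to zero, so $\F_P$ is GAS. For $r=1$ the face is a single pure state.

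The main obstacle is the identification step. One must check that the Duhamel enclosures of the transverse blocks control the actual quotient coordinates, since the transverse map is defined only after invariance is established. One must also check that reparametrizing time through the clocks $\chi$ preserves the endpoint and partial-block propagators. I would handle both by working on the invariant quotient from the outset and using $\Phi(1,0)=\mathrm{Prop}(\tau,0)$ under absolutely continuous nondecreasing time changes. Everything else in the argument is norm bookkeeping.
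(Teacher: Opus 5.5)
Your proposal is correct and follows the paper's proof essentially step for step. It uses the three-term split of $S_{h,\omega}$ with the triangle inequality, the equivalence of $\|S_{h,\omega}\|_M\le q_{\rm rob}(h)$ with the congruence bound, iteration of the cycle-varying transverse maps, the prefix bound $B_*$ with $K_{\rm in},K_{\rm out}$, path compatibility for invariance, and the non-Zeno bound for physical-time convergence. One small slip: your inequality $n>t/T_+-1$ comes from the upper bound $T_{\rm cyc}\le T_+$. The lower bound $T_-$, which the paper uses, instead gives $t_n\ge nT_-\to\infty$, so that every physical time lies in some interval $[t_n,t_{n+1})$.
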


\begin{proof}
Equation~\eqref{eq:compiler-static-lyapunov} gives
$\|e^{hA_L}\|_M\le e^{-\nu h}$.  The triangle inequality, the derived nominal
bound, and Lemma~\ref{lem:amplified-word} give
$\|S_{h,\omega}\|_M\le q_{\rm rob}(h)$, which is equivalent to
Eq.~\eqref{eq:derived-actual-monodromy}.  Iterating the cycle-varying maps gives
$\|z_n\|_M\le q_*^n\|z_0\|_M$.  The derived prefix bound and
Eq.~\eqref{eq:norm-conversions} give Eq.~\eqref{eq:robust-all-time-bound}.
Path compatibility proves invariance.  The non-Zeno bound
\eqref{eq:nonzeno-cycle-duration} gives $t_n\ge nT_-\to\infty$, so the
cycle-indexed decay is also convergence as physical time tends to infinity.
\end{proof}

For a higher-rank face, selecting a unique mixed state requires a separate
intraface contraction certificate.  If only the cycle maps fix that state, the
result is stroboscopic selection; all-time fixed-state selection additionally
requires every prefix to fix it.

Exact invariance is not open under unrestricted GKLS perturbations, as the
qubit counterexample with arbitrarily small positive jump rate in
Eq.~\eqref{eq:leakage-jump-example} shows.  The following theorem treats such
uncertainty through a practical tube.

For each leaky actual primitive, choose a complete structure-preserving
\emph{shadow path} with the same primitive type and normalized clock.  Suppose
their complete normalized GKLS generators obey
\begin{equation}
 \|\mathcal G_{k,\omega}(u)-
 \widetilde{\mathcal G}_{k,\omega}(u)\|_\diamond
 \le e_k^\diamond(u).
\label{eq:primitive-diamond-envelope}
\end{equation}
Let $\mathcal C_{k,h,\omega}$ and
$\widetilde{\mathcal C}_{k,h,\omega}$ be the actual and shadow endpoint
channels of block $k$, let $\mathcal S_{h,\omega}$ and
$\widetilde{\mathcal S}_{h,\omega}$ be the corresponding cycle channels, and
let $\mathcal P_{h,\omega}^{k,u}$ and
$\widetilde{\mathcal P}_{h,\omega}^{k,u}$ be their chronological prefixes
through local time $u$ of block $k$.  Define the full-diamond distances
\begin{equation}
\begin{aligned}
 &\epsilon_k^\diamond=\sup_{h,\omega}
 \|\mathcal C_{k,h,\omega}-\widetilde{\mathcal C}_{k,h,\omega}\|_\diamond,\\
 &\epsilon_{\rm cyc}^\diamond=\sup_{h,\omega}
 \|\mathcal S_{h,\omega}-\widetilde{\mathcal S}_{h,\omega}\|_\diamond,\\
 &\epsilon_{\rm pre}^\diamond=\sup_{h,\omega,k,u}
 \|\mathcal P_{h,\omega}^{k,u}-
 \widetilde{\mathcal P}_{h,\omega}^{k,u}\|_\diamond,
\end{aligned}
\label{eq:diamond-distance-definitions}
\end{equation}
where each $\omega$ ranges over $\Omega(h)$.  CPTP propagators have diamond
norm one, so Duhamel's identity and a channel telescope give
\begin{equation}
\begin{aligned}
 &\epsilon_k^\diamond\le\int_0^1e_k^\diamond(u)\,du,\\
 &\epsilon_{\rm cyc}^\diamond\le\sum_{k=1}^N\epsilon_k^\diamond,\\
 &\epsilon_{\rm pre}^\diamond\le\sup_{k,u}\left(
 \sum_{\ell<k}\epsilon_\ell^\diamond+
 \int_0^u e_k^\diamond(v)\,dv\right).
\end{aligned}
\label{eq:diamond-cycle-prefix-leakage}
\end{equation}
Define the norm-conversion constant
\begin{equation}
 \kappa_{1M}=\sup_{\substack{X=X^\dagger\\\|X\|_1\le1}}
 \|z(X)\|_M,\qquad
 \eta_{\rm cyc}^M=\kappa_{1M}\epsilon_{\rm cyc}^\diamond,\qquad
 \eta_{\rm pre}^M=\kappa_{1M}\epsilon_{\rm pre}^\diamond.
\label{eq:leakage-M-conversion}
\end{equation}

\begin{theorem}[Primitive-certified practical tube]
\label{thm:primitive-practical-tube}
Assume the shadow words satisfy Theorem~\ref{thm:robust-path-compiler} with
$q_*<1$ and $B_*$, and the actual paths satisfy
Eqs.~\eqref{eq:primitive-diamond-envelope}--\eqref{eq:leakage-M-conversion}
uniformly, including under cycle-varying uncertainty.  Put
$d_0=\dtr(\rho(0),\F_P)$.  For every $t\in[t_n,t_{n+1})$,
\begin{equation}
 \dtr(\rho(t),\F_P)\le K_{\rm out}\Bigg[
 B_*\Bigg(q_*^nK_{\rm in}d_0
 {}+\eta_{\rm cyc}^M\frac{1-q_*^n}{1-q_*}\Bigg)
 +\eta_{\rm pre}^M\Bigg].
\label{eq:primitive-practical-tube}
\end{equation}
The corresponding asymptotic upper radius is
\begin{equation}
 R_\infty=K_{\rm out}\left[
 \frac{B_*\eta_{\rm cyc}^M}{1-q_*}+\eta_{\rm pre}^M\right].
\label{eq:practical-radius}
\end{equation}
\end{theorem}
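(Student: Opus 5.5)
We compare the actual trajectory with shadow trajectories and propagate the error in the transverse quotient coordinate $z$. Write $\rho_n=\rho(t_n)$ for the actual state at the start of cycle $n$, and $z_n=z(\rho_n)$. The cycle-$n$ actual channel is $\mathcal S_n:=\mathcal S_{h_n,\omega_n}$, with shadow $\widetilde{\mathcal S}_n$. We split
\[
 \rho_{n+1}=\widetilde{\mathcal S}_n(\rho_n)
 +(\mathcal S_n-\widetilde{\mathcal S}_n)(\rho_n).
\]
The shadow paths satisfy Eq.~\eqref{eq:robust-path-face-conditions}, so $\widetilde{\mathcal S}_n$ preserves $\T_P$. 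Hence it induces a transverse map $\widetilde S_n$ with $z(\widetilde{\mathcal S}_n\rho)=\widetilde S_nz(\rho)$. This linearity in the quotient is the key structural fact: $z$ is linear on $\Herm(\mathcal H)$, and invariance of $\T_P$ makes the quotient map well defined. Theorem~\ref{thm:robust-path-compiler}, applied to the shadow words, gives $\|\widetilde S_n\|_M\le q_*$ uniformly, including for cycle-varying $(h_n,\omega_n)$.

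For the leakage term, $X=(\mathcal S_n-\widetilde{\mathcal S}_n)(\rho_n)$ is Hermitian with $\|X\|_1\le\|\mathcal S_n-\widetilde{\mathcal S}_n\|_\diamond\le\epsilon_{\rm cyc}^\diamond$. Here we use that the diamond norm dominates the induced trace norm on density operators. The definition of $\kappa_{1M}$ then gives $\|z(X)\|_M\le\eta_{\rm cyc}^M$. This yields the recursion
\[
 \|z_{n+1}\|_M\le q_*\|z_n\|_M+\eta_{\rm cyc}^M .
\]
Induction gives $\|z_n\|_M\le q_*^n\|z_0\|_M+\eta_{\rm cyc}^M(1-q_*^n)/(1-q_*)$. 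Combined with $\|z_0\|_M\le K_{\rm in}d_0$ from Eq.~\eqref{eq:norm-conversions}, this controls the cycle boundaries.

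For $t\in[t_n,t_{n+1})$, say at local time $u$ of block $k$, we use the same split with prefixes: $\rho(t)=\widetilde{\mathcal P}(\rho_n)+(\mathcal P-\widetilde{\mathcal P})(\rho_n)$. The shadow prefix preserves $\T_P$, so its transverse map is a shadow within-cycle prefix. Lemma~\ref{lem:amplified-word}, as packaged in $B_*$, bounds its $M$-norm. The leakage term is bounded by $\eta_{\rm pre}^M$ through $\kappa_{1M}$ and $\epsilon^\diamond_{\rm pre}$. Together,
\[
 \|z(\rho(t))\|_M\le B_*\|z_n\|_M+\eta_{\rm pre}^M .
\]
Applying $\dtr\le K_{\rm out}\|z\|_M$ and inserting the bound on $\|z_n\|_M$ gives Eq.~\eqref{eq:primitive-practical-tube}. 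Letting $n\to\infty$ with $q_*<1$ gives $R_\infty$.

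The estimates \eqref{eq:diamond-cycle-prefix-leakage} are taken as given upstream. If they had to be verified, each would follow from Duhamel's identity with CPTP propagators of diamond norm one and the channel telescope, as stated.

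The main obstacle is justifying the comparison step. The actual channel does not preserve $\T_P$, so it has no transverse matrix, and we must never use one. The argument avoids this by applying only shadow transverse maps, with all non-invariance pushed into the Hermitian remainder $X$, which is measured in trace norm and converted via $\kappa_{1M}$. One must also check two points. First, $\kappa_{1M}$ is finite: it is the norm of a linear map on a finite-dimensional space. Second, the shadow bounds $q_*$ and $B_*$ must hold for the specific shadow chosen for each realized $(h_n,\omega_n)$. This uniformity is exactly the hypothesis that the shadow words satisfy Theorem~\ref{thm:robust-path-compiler} over the whole family.
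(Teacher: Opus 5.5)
Your proposal is correct and follows essentially the same argument as the paper. You apply only the shadow transverse map to the actual boundary state and put all non-invariance into a Hermitian remainder bounded through $\epsilon^\diamond_{\rm cyc}$ (resp.\ $\epsilon^\diamond_{\rm pre}$) and $\kappa_{1M}$; this gives $y_{n+1}\le q_*y_n+\eta^M_{\rm cyc}$ and $\|z(\rho(t))\|_M\le B_*y_n+\eta^M_{\rm pre}$, and summing the geometric series finishes the proof, with the linearity and well-definedness details the paper's sketch leaves implicit made explicit.
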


The actual leaky path need not induce a map on the transverse quotient.  The
proof instead compares its state after each primitive with the shadow state.
At cycle boundaries, put $y_n=\|z(\rho(t_n))\|_M$.  The comparison gives
$y_{n+1}\le q_*y_n+\eta_{\rm cyc}^M$; at an intermediate phase it gives
$\|z(\rho(t))\|_M\le B_*y_n+\eta_{\rm pre}^M$.  Summing the geometric series
proves the theorem.
When $\eta_{\rm cyc}^M=0$ but $\eta_{\rm pre}^M>0$, convergence is
certified at cycle boundaries, with asymptotic intracycle error bounded by
$K_{\rm out}\eta_{\rm pre}^M$.  A nonzero asymptotic excursion from
$\F_P$ requires a separate positive lower bound.

\subsection{Timing and resource guarantees}

For a word family parameterized by $h$, collect all primitive constraints in
\begin{equation}
 h_{\min}=\max\{h_{\min}^{\rm dwell},h_{\min}^{\rm dead},
 h_{\min}^{\rm pulse},h_{\min}^{\rm cooldown},
 h_{\min}^{\rm calibration}\}.
\label{eq:physical-hmin}
\end{equation}
A uniform certificate requires a nonempty calibrated domain on which
Eq.~\eqref{eq:uniform-robust-slack} holds.  A fixed positive hardware error or
overhead can make $h\downarrow0$ infeasible: if
$\liminf_{h\downarrow0}E_{\rm hw}(h)\ge\delta_0>0$, then for all sufficiently
small $h$, $q_{\rm rob}(h)\ge e^{-\nu h}+\delta_0/2>1$.

For the resource account, set time zero at the first certified cold boundary
after initialization.  Let $K_{\rm eff}=K_{\rm out}B_*K_{\rm in}$ and let
$d_0$ be the target distance at that boundary or a valid upper bound on it.
In the exact branch, the smallest
nonnegative cycle count certified for tolerance $\eps\ge0$ is
\begin{equation}
N_\eps^{\rm ex}=
\begin{cases}
0,&K_{\rm eff}d_0\le\eps,\\
1,&K_{\rm eff}d_0>\eps\ \text{and }q_*=0,\\
\left\lceil\dfrac{\log(K_{\rm eff}d_0/\eps)}{-\log q_*}\right\rceil,
&K_{\rm eff}d_0>\eps>0,\quad 0<q_*<1,\\
+\infty,&\eps=0<K_{\rm eff}d_0,\quad 0<q_*<1.
\end{cases}
\label{eq:cycle-count-branches}
\end{equation}

For the leaky branch, define
\begin{equation}
 D_0=K_{\rm out}(B_*K_{\rm in}d_0+\eta_{\rm pre}^M),\qquad
 D_n=R_\infty+q_*^n(D_0-R_\infty).
\label{eq:leaky-cycle-envelope}
\end{equation}
If $\eps>R_\infty$, the certified count is zero when $D_0\le\eps$, one when
$D_0>\eps$ and $q_*=0$, and, when $D_0>\eps$ and $0<q_*<1$,
\begin{equation}
 N_\eps^{\rm leak}=\left\lceil
 \frac{\log[(D_0-R_\infty)/(\eps-R_\infty)]}{-\log q_*}
 \right\rceil.
\label{eq:leaky-cycle-count}
\end{equation}
For $\eps<R_\infty$, this upper bound does not certify eventual all-time
attainment of the requested tolerance.  At $\eps=R_\infty$, the certified
count is zero if $D_0\le R_\infty$, one if $D_0>R_\infty$ and $q_*=0$,
and infinite if $D_0>R_\infty$ and $0<q_*<1$.  These cases include
$\eps=R_\infty=0$, with $q_*^0=1$ also when $q_*=0$.

Let $T_*$ and $C_{\rm cyc}^*$ be certified finite uniform upper envelopes for
cycle duration and the componentwise cost vector over the same joint
uncertainty family.  Continuous duration and cost coordinates on the compact
family attain such worst-case values.  Initialization has its own time and
resource costs.  Any state-changing finalization requires a separate
certificate; its costs are included when it is part of the specified task.
For either finite cycle-count branch,
\begin{equation}
 T_{\rm wall}\le T_{\rm init}+N_\eps T_*+T_{\rm final},\qquad
 C(\eps)\preceq C_{\rm init}+N_\eps C_{\rm cyc}^*+C_{\rm final}.
\label{eq:componentwise-cost}
\end{equation}
The integer ceiling is evaluated with the worst-case $q_*$ and leakage radius.
Logical-orbit, raw, pulse, transition, reset, dead,
cooldown, and switch counts are componentwise accounts and may overlap; an
aggregate primitive count records each physical call once.  The integrated
control action is an amplitude account in a fixed normalization, distinct from
thermodynamic work.  With
$H_c(t)=\sum_\alpha u_\alpha(t)H_\alpha$ in the declared Hamiltonian basis, it uses
$\|u(t)\|_{\rm ctrl}=\sum_\alpha|u_\alpha(t)|$ in
Eq.~\eqref{eq:integrated-control-action}.

\section{Qutrit realizations and resource laws}
\label{sec:qutrits}

The two models in this section share a three-level Hilbert space and a pure
ground target.  Their control algebras and operational roles are different.
The common-ground model tests a nontrivial physical orbit lift and its actual
monodromy.  The cascade ladder supplies a special three-factor identity and a
matching resource lower bound in a direct two-label grammar.

\subsection{A common-ground physical word}
\label{subsec:common-ground}

In the ordered basis $\{|0\rangle,|1\rangle,|2\rangle\}$, set
\begin{equation}
 A=\Diss{|0\rangle\langle1|},\qquad
 C=\Diss{|0\rangle\langle2|},\qquad
 P_j=|j\rangle\langle j|\quad (0\le j\le2).
\label{eq:common-ground-generators}
\end{equation}
The generators commute, each fixes $P_0$, and $A+C$ is GAS to $P_0$; either
one alone leaves an undamped excited mode.  If both are direct labels, their
commutativity makes the two-decay cycle equal to the averaged evolution.  We
now assume that only the raw decay $A$ is
available.  Let
\begin{equation}
 S_{12}=P_0+|1\rangle\langle2|+|2\rangle\langle1|.
\label{eq:excited-swap}
\end{equation}
A bounded unitary path with endpoint channel
$\mathcal S_{12}(\rho)=S_{12}\rho S_{12}^\dagger$ exchanges the two excited
levels.  Since the path preserves $\spann\{|0\rangle\}$ and fixes the density
operator $P_0$, it lies in $\Stab(P_0)$ at every time.

The closed chronological word is
\begin{equation}
 A\;\longrightarrow\;S_{12}\;\longrightarrow\;A
 \;\longrightarrow\;S_{12}^{-1}.
\label{eq:common-ground-word}
\end{equation}
The controller starts in the laboratory frame with an idle actuator and no
relevant oscillator phase. The first hold ends in the laboratory frame;
the forward pulse moves to the swapped frame, where the second hold ends.
The return pulse restores the laboratory frame and idle actuator. Both
triangular commands vanish at their endpoints. The environment remains
time-local Markovian and requires no reset, so every controller field returns
to its cold-boundary value.  Write $D_{01}$ for the complete physical signature of a raw
$A$-hold and $\mathrm{IDLE}$ for the declared cold signature.  In the signature
sequence, $S_{12}$ and $S_{12}^{-1}$ label the separately calibrated forward
and return edges.  The sequence is
\begin{equation}
 \mathrm{IDLE}\to D_{01}\to S_{12}\to D_{01}\to S_{12}^{-1}
 \to\mathrm{IDLE}.
\label{eq:common-ground-signatures}
\end{equation}
The common-ground calibration interval $h\in(1/2,2)$ applies to each of the
two damping holds.  We bind it to $h=1$.  Each swap edge separately has fixed
duration one, so at this operating point all four segments last one unit.
Each raw damping envelope is the
triangular function
\begin{equation}
 a(s)=
 \begin{cases}
 8\log(5/4)s,&0\le s\le1/2,\\
 8\log(5/4)(1-s),&1/2\le s\le1,
 \end{cases}
\label{eq:damping-ramp}
\end{equation}
with total dissipative exposure
$\alpha=2\log(5/4)$.  The forward and return pulse Hamiltonians are,
respectively,
\begin{equation}
 H_+(s)=-u(s)S_{12},\qquad H_-(s)=+u(s)S_{12},
\label{eq:directed-swap-Hamiltonians}
\end{equation}
with the triangular envelope
\begin{equation}
 u(s)=
 \begin{cases}
 2\pi s,&0\le s\le1/2,\\
 2\pi(1-s),&1/2\le s\le1,
 \end{cases}
\label{eq:swap-ramp}
\end{equation}
with action $\pi/2$, peak amplitude $\pi$, and finite slew $2\pi$.  The
endpoint unitaries are
\begin{equation}
 U_+(1)=iS_{12},\qquad U_-(1)=-iS_{12},\qquad U_-(1)U_+(1)=I.
\label{eq:directed-swap-endpoints}
\end{equation}
Their channels coincide because the swap is involutory.

To state the robust family, let $\mathcal Z=\Diss{P_0}$,
$\lambda=2\log(4/3)$, and choose $\lambda_{\rm ao}\in\{0,\lambda\}$.  For
independent gain multipliers $m_1,m_3$, the four complete paths in their local
controller frames are
\begin{equation}
\begin{aligned}
 \mathcal L_1(s)&=m_1a(s)A+\lambda_{\rm ao}\mathcal Z,&
 \mathcal L_2(s)&=-i[H_+(s),\,\cdot\,]+\lambda_{\rm ao}\mathcal Z,\\
 \mathcal L_3(s)&=m_3a(s)A+\lambda_{\rm ao}\mathcal Z,&
 \mathcal L_4(s)&=-i[H_-(s),\,\cdot\,]+\lambda_{\rm ao}\mathcal Z.
\end{aligned}
\label{eq:common-ground-complete-paths}
\end{equation}
Thus the always-on term acts inside both holds and both pulses.  The declared
uncertainty set is
\begin{equation}
 (m_1,m_3)\in[19/20,21/20]^2.
\label{eq:common-ground-gain-box}
\end{equation}
It is a relative $\ell^\infty$ radius $\rho=1/20$ in these two gain
coordinates only; the pulse and timing coordinates have zero width in this
witness.  The two gains may vary independently between holds and cycles.

For $j<k$, define the normalized Hermitian coherences
\begin{equation}
 X_{jk}=\frac{|j\rangle\langle k|+|k\rangle\langle j|}{\sqrt2},
 \qquad
 Y_{jk}=\frac{-i(|j\rangle\langle k|-|k\rangle\langle j|)}{\sqrt2}.
\label{eq:qutrit-hermitian-basis-main}
\end{equation}

\begin{theorem}[Common-ground reference word]
\label{thm:common-ground-word}
At $m_1=m_3=1$, the word in Eq.~\eqref{eq:common-ground-word}, with the complete
paths in Eq.~\eqref{eq:common-ground-complete-paths}, is a closed
path-compatible realization of the two common-ground decay directions at
$h=1$.  In the transverse basis
\begin{equation*}
 (P_1,P_2,X_{01},Y_{01},X_{02},Y_{02},X_{12},Y_{12})
\end{equation*}
and metric $M=I_8$, its monodromy without dephasing is
\begin{equation}
 S_{\rm n}=\operatorname{diag}\!\left(
 \frac{16}{25},\frac{16}{25},
 \frac45,\frac45,\frac45,\frac45,
 \frac{16}{25},\frac{16}{25}\right),
\label{eq:common-ground-noiseless-monodromy}
\end{equation}
so $q_{\rm n}=4/5$.  If the known dephasing rate is
$\lambda=2\log(4/3)$ throughout all four segments, the reference monodromy is
\begin{equation}
 S_{\rm ao}=\operatorname{diag}\!\left(
 \frac{16}{25},\frac{16}{25},
 \frac{81}{320},\frac{81}{320},\frac{81}{320},\frac{81}{320},
 \frac{16}{25},\frac{16}{25}\right),
\label{eq:common-ground-always-on-monodromy}
\end{equation}
and $q_{\rm ao}=16/25$.
\end{theorem}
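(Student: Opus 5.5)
The plan is to verify the two assertions separately: first that the word is path-compatible and closed, then the monodromy. The monodromy step reduces to an explicit diagonal computation in the stated basis.

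\emph{Path compatibility.} I will check Eq.~\eqref{eq:robust-path-face-conditions} pointwise in time for each complete path in Eq.~\eqref{eq:common-ground-complete-paths}, with $P=P_0$ and $Q=P_1+P_2$.
\begin{itemize}
\item For the hold, $J=|0\rangle\langle1|$ gives $QJP_0=0$, and $J^\dagger J=P_1$ gives $QP_1P_0=0$.
\item For the always-on term, $V=P_0$ gives $QP_0P_0=0$ and $Q P_0^\dagger P_0 P_0=0$.
\item For the pulses, $S_{12}$ commutes with $P_0$, so $QH_\pm(s)P_0=0$. Hence the unitary path stays in $\Stab(P_0)$, in agreement with Eq.~\eqref{eq:unitary-stabilizer-path}.
\end{itemize}
Lemma~\ref{lem:gkls-face} in its time-dependent form then gives all-time invariance. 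Closure follows from Eq.~\eqref{eq:directed-swap-endpoints}, since $U_-(1)U_+(1)=I$, together with the stated return of frame, actuator, and environment fields. Realization of the two decay directions follows from $\mathcal S_{12}A\mathcal S_{12}=C$ and from the total exposure $\alpha$ per hold. Because $A$ and $C$ commute, the product equals $e^{\alpha(A+C)}$.

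\emph{Monodromy.} I will first show that every segment's transverse propagator is diagonal, or a signed permutation, in the basis $(P_1,P_2,X_{01},Y_{01},X_{02},Y_{02},X_{12},Y_{12})$. I will argue this by noting that each generator maps each basis element to a multiple of itself plus a multiple of $P_0$, and $P_0$ is zero in the quotient.
\begin{itemize}
\item For a hold with exposure $\alpha=2\log(5/4)$, the entries are $e^{-\alpha}=16/25$ on $P_1$, $e^{-\alpha/2}=4/5$ on $X_{01},Y_{01},X_{12},Y_{12}$, and $1$ on $P_2,X_{02},Y_{02}$.
\item For the swap with $U_\pm(1)=\pm iS_{12}$, the map interchanges $P_1\leftrightarrow P_2$ and $X_{01}\leftrightarrow X_{02}$, $Y_{01}\leftrightarrow Y_{02}$. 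It fixes $X_{12}$ and sends $Y_{12}\mapsto -Y_{12}$.
\end{itemize}
Composing hold, swap, hold, swap gives the following. Each of $P_1,P_2$ receives exactly one factor $16/25$, and each of the $0k$ coherences one factor $4/5$. The $12$ coherences receive $(4/5)^2=16/25$, and the two sign flips cancel. This yields Eq.~\eqref{eq:common-ground-noiseless-monodromy}. With $M=I_8$, $\|S_{\rm n}\|_M$ is the largest absolute diagonal entry, so $q_{\rm n}=4/5$.

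\emph{Always-on dephasing.} I will show that $\lambda\mathcal Z$ commutes with each instantaneous generator. For the pulses this holds because $[S_{12},P_0]=0$. For the holds I will verify it on the basis, where both transverse generators are diagonal. Commutation lets each time-ordered propagator factor as $e^{\lambda\mathcal Z}$ times the noiseless one. $\mathcal Z$ acts transversely as the identity on $P_1,P_2,X_{12},Y_{12}$ and as $e^{-\lambda/2}=3/4$ per unit time on the $0k$ coherences. Over the total duration $4$, this multiplies those entries by $(3/4)^4=81/256$, giving $(4/5)(81/256)=81/320$ and hence Eq.~\eqref{eq:common-ground-always-on-monodromy}. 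The maximum entry is then $16/25$, so $q_{\rm ao}=16/25$.

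The main obstacle I expect is justifying factorization for time-dependent envelopes. The plan is to note that each segment's generators commute at different times, since $a(s)A$ has a fixed direction and so does $u(s)S_{12}$. Each segment's propagator is therefore the exponential of the integrated generator. This means only the exposures $\alpha$, the action $\pi/2$, and $\lambda\cdot 1$ matter, and the sign bookkeeping for $Y_{12}$ under $\pm iS_{12}$ must be checked carefully.
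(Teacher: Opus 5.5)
Your proposal is correct and follows the paper's route: self-commuting envelopes reduce each segment to its integrated exposure, the swap endpoint acts as the signed permutation you describe (including $Y_{12}\mapsto -Y_{12}$), and the dephasing $\mathcal Z$ commutes with every factor and contributes $(3/4)^4$ on the ground--excited coherences. One wording slip is harmless: the pulse generator does not map basis elements to multiples of themselves, since it rotates $P_1,P_2,Y_{12}$ and the $0k$ coherences into each other, but your endpoint computation from $U_\pm(1)=\pm iS_{12}$ already gives the signed permutation, so nothing is lost.
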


\begin{proof}
The decay envelopes commute with themselves along each hold, and the pulse
Hamiltonian commutes with itself along each swap path, so the integrated
endpoints follow by direct evaluation of Eqs.~\eqref{eq:damping-ramp} and
\eqref{eq:swap-ramp}.  The swap fixes $P_0$ and interchanges the two excited
indices.  The second raw hold is executed at the swapped controller vertex,
giving the other decay direction without target motion.  Applying
the four complete channels to the eight displayed basis elements gives
Eqs.~\eqref{eq:common-ground-noiseless-monodromy} and
\eqref{eq:common-ground-always-on-monodromy}.  Their largest absolute diagonal
entries give the stated induced norms.  The detailed basis calculation is in
Appendix~\ref{app:qutrit-calculations}.
\end{proof}

\begin{proposition}[Common-ground robust radius]
\label{prop:common-ground-robust}
For every pair in Eq.~\eqref{eq:common-ground-gain-box}, independently chosen
in every cycle, the primitive-certified bounds at $h=1$ and $M=I_8$, with the
static Lyapunov triples in Appendix~\ref{app:qutrit-calculations}, satisfy
\begin{equation}
 E_{\rm syn}=0,\qquad \delta_1,\delta_3<\frac3{200},\qquad
 E_{\rm hw}<\frac3{100},\qquad B_{\rm rob}=1.
\label{eq:common-ground-robust-receipts}
\end{equation}
Let $S_n^{\rm act}$ denote the actual monodromy selected in cycle $n$.
Consequently
\begin{equation}
 \bar q=
 \begin{cases}
 83/100,&\lambda_{\rm ao}=0,\\
 67/100,&\lambda_{\rm ao}=\lambda,
 \end{cases}\qquad
 (S_n^{\rm act})^{\dagger}S_n^{\rm act}\preceq\bar q^2I_8.
\label{eq:common-ground-robust-contractions}
\end{equation}
The uniform contraction slacks are $17/100$ and $33/100$.  With
$K_{\rm in}=2$ and $K_{\rm out}=3/2$, every physical time in cycle $n$ obeys
\begin{equation}
 \dtr(\rho(t),P_0)\le3\,\bar q^{\,n}\dtr(\rho(0),P_0).
\label{eq:common-ground-robust-all-time}
\end{equation}
\end{proposition}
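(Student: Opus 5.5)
The plan is to instantiate Theorem~\ref{thm:robust-path-compiler} at $h=1$, $M=I_8$, using the reference monodromies of Theorem~\ref{thm:common-ground-word}. Specifically, I take the nominal factors $W_k$ equal to the reference blocks $R_k$ at $m_1=m_3=1$, so that $a_k=0$. I take the static generator $A_L$ to be the one whose exponential at $h=1$ is exactly $S_{\rm n}$ or $S_{\rm ao}$. This is possible because both monodromies are diagonal and positive in the displayed basis, so $A_L=\log S$ is diagonal and $E_{\rm split}=0$. Hence $E_{\rm syn}=0$, and the static Lyapunov triple in Appendix~\ref{app:qutrit-calculations} gives $e^{-\nu}=4/5$ or $16/25$ with $M=I_8$.

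Next I would bound the primitive block errors with Lemma~\ref{lem:primitive-duhamel}. The pulses have zero uncertainty width, so $\delta_2=\delta_4=0$. For the holds,
\[
G_{k,\omega}-\widehat G_k=(m_k-1)a(s)A_\perp .
\]
In the chosen basis, $A_\perp$ restricted to the transverse space is diagonal, with entries $-1$ on $P_1$, $-1/2$ on the coherences involving level $1$, and $0$ elsewhere. Its $I_8$-norm is therefore $1$, and $\|G-\widehat G\|\le\frac1{20}a(s)$.

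All complete transverse generators are dissipative in the Hilbert--Schmidt metric: GKLS dissipators restricted to $\T_P^\perp$ are negative semidefinite here, and commutators are antisymmetric. So I take $\alpha_k=\widehat\alpha_k=0$, which gives $c_k=r_k=1$ and $\delta_k\le\frac1{20}\int_0^1a=\frac{\alpha}{20}=\frac{\log(5/4)}{10}$. A sharper version uses the actual negative log-norms, but $\log(5/4)/10\approx0.0223$ exceeds $3/200$. I therefore expect to need the exact commuting structure: on each diagonal coordinate the endpoint difference is
\[
|e^{-m\kappa\alpha}-e^{-\kappa\alpha}|\le e^{-0.95\kappa\alpha}\bigl(1-e^{-0.1\kappa\alpha}\bigr)\cdot\tfrac12\cdots,
\]
with $\kappa\in\{1/2,1\}$. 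Equivalently, I keep the Duhamel weights $\exp(\int\alpha)$ with $\alpha_k(u)=-\frac{19}{20}\kappa a(u)$ coordinatewise, because the diagonal structure makes the log-norm bound exact per coordinate.

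The worst coordinate is $\kappa=1$: the bound is $\frac1{20}\alpha e^{-0.95\alpha}\approx 0.0223\cdot0.809\approx0.018$. That still exceeds $3/200$, so I would instead evaluate the exact difference $\max(e^{-0.95\alpha}-e^{-\alpha},\,e^{-\alpha}-e^{-1.05\alpha})$. This equals $(4/5)^{1.9}-(4/5)^2\approx0.0132<3/200$, and the dephasing factor only multiplies both terms by a common number in $(0,1]$.

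Then I apply Lemma~\ref{lem:amplified-word}. Since $c_\ell,r_\ell\le1$, $E_{\rm hw}\le\delta_1+\delta_3<3/100$. Likewise, every $p_k(u)\le1$ and the prefix is a product of contractions, so $B_{\rm rob}=1$. Then
\[
q_{\rm rob}<\tfrac45+\tfrac3{100}=\tfrac{83}{100}\quad\text{or}\quad q_{\rm rob}<\tfrac{16}{25}+\tfrac3{100}=\tfrac{67}{100},
\]
which gives Eq.~\eqref{eq:common-ground-robust-contractions} for cycle-varying gains.

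Finally, I verify $K_{\rm in}=2$ and $K_{\rm out}=3/2$ for Eq.~\eqref{eq:norm-conversions} with the rank-one target. Write $\|z\|_2=\|\rho-P_0\rho P_0\|_{\rm HS}$. The lower bound follows from Hilbert--Schmidt $\le$ trace norm and the fact that the distance to the face is comparable to $1-\langle0|\rho|0\rangle$. The upper bound follows from a Fuchs--van de Graaf-type estimate in which the off-diagonal and excited blocks control $\dtr$. Eq.~\eqref{eq:robust-all-time-bound} then gives the constant $3$.

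The main obstacle is making $\delta_{1,3}<3/200$ rigorous. Generic log-norm envelopes are too loose, so the proof must exploit exact commutation and diagonalization along each hold.
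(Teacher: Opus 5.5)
Your overall route is the paper's. You instantiate the primitive-certified compiler with $W_k=R_k$ and $A_L=\log S$; these logarithms are exactly $\alpha G_{\rm cg}$ and $\alpha G_{\rm cg}+4\lambda Z_q$. You use exact pulses and Hilbert--Schmidt-contractive blocks, so $c_k=r_k=1$. You bound the hold errors by the exact diagonal endpoint difference, whose worst entry is the $P_1$ population at $m=19/20$. However, the decisive inequality $\delta_{1,3}<3/200$ is not actually established, because both of your numerical evaluations are off by more than its margin.

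The exact value is $(4/5)^{1.9}-(4/5)^2=\frac{16}{25}\bigl((25/16)^{1/20}-1\bigr)\approx0.01444$, not $0.0132$, which leaves slack of only about $5.6\times10^{-4}$. Your figure ``$\approx0.018$'' used $e^{-0.95\alpha/2}\approx0.809$ where $e^{-0.95\alpha}=(4/5)^{1.9}\approx0.654$ is needed. With the correct weight, the coordinatewise Duhamel envelope $\frac{\alpha}{20}(4/5)^{1.9}\approx0.0146$ is already below $3/200$, though with even less slack. Either way the bound needs an exact certificate. For the exact difference it is equivalent to $(25/16)^{1/20}<131/128$, i.e.\ the integer comparison $16\cdot131^{20}>25\cdot128^{20}$, which is how the paper closes it.

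Your derivation of $K_{\rm out}=3/2$ also fails as sketched. Write $R_\rho=\rho-P_0\rho P_0$. A Fuchs--van de Graaf estimate gives $\dtr(\rho,P_0)\le\sqrt{1-\langle0|\rho|0\rangle}=\sqrt{\tr R_\rho}$, which is not linear in $\|R_\rho\|_2$, so no constant $K_{\rm out}$ can be extracted from it. For example, $\rho=(1-\epsilon)P_0+\epsilon P_1$ has $\|R_\rho\|_2=\dtr=\epsilon$, while that bound gives $\sqrt\epsilon$.

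Use instead $\Delta=\rho-P_0=R_\rho-(\tr R_\rho)P_0$, whose two summands are Hilbert--Schmidt orthogonal. Since $(\tr R_\rho)^2=(\rho_{11}+\rho_{22})^2\le2\|R_\rho\|_2^2$, you get $\|\Delta\|_2\le\sqrt3\|R_\rho\|_2$. Together with $\|\Delta\|_1\le\sqrt3\|\Delta\|_2$ in dimension three, this yields $\dtr(\rho,P_0)\le\frac32\|R_\rho\|_2$.

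For $K_{\rm in}=2$, the only fact needed is that $R_\rho$ is the orthogonal projection of $\Delta$ onto $\T_P^\perp$. Hence $\|R_\rho\|_2\le\|\Delta\|_2\le\|\Delta\|_1=2\dtr(\rho,P_0)$, and no comparison with $1-\langle0|\rho|0\rangle$ is required.

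With these two repairs, the remaining steps go through as you describe: $E_{\rm hw}=\delta_1+\delta_3$, $B_{\rm rob}=1$, $q_{\rm rob}<83/100$ or $67/100$, and the constant $3$.
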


The analytic primitive enclosure in
Appendix~\ref{app:qutrit-calculations} proves
Eq.~\eqref{eq:common-ground-robust-receipts} uniformly on the gain box.

The always-on example also separates a sufficient comparison from the actual
word.  Define the no-dephasing reference norm
$q_{\rm id}:=\|S_{\rm n}\|_2$ and the induced-norm comparison error
$E_{\rm word}:=\|S_{\rm ao}-S_{\rm n}\|_2$.  Direct evaluation gives
\begin{equation}
 q_{\rm id}=\frac45,\qquad E_{\rm word}=\frac{35}{64},\qquad
 q_{\rm id}+E_{\rm word}=\frac{431}{320}>1,
\label{eq:failed-comparison-successful-actual}
\end{equation}
while the direct reference contraction is $16/25$ and the robust bound is
$67/100$.  Thus this particular comparison is inconclusive about stability.

The prefix constant must be recomputed after the physical lift.  Let
$B_{\rm word}$ denote the sharp all-prefix induced norm of the physical word.
For the
Hilbert--Schmidt metric $M=I_8$ used in the resource calculation, every partial
swap is orthogonal and every damping or dephasing prefix is a contraction.
Hence the sharp all-prefix value is one.  For example, at the midpoint of the
first swap the damped $Y_{12}$ mode becomes
\begin{equation}
 \frac45Y_{12}\longmapsto
 \pm\frac{4}{5\sqrt2}(P_2-P_1),\qquad
 \left\|\frac{4}{5\sqrt2}(P_2-P_1)\right\|_2^2=\frac{16}{25}.
\label{eq:common-ground-prefix-witness}
\end{equation}
Together with the trace/quotient conversion constants, this gives
\begin{equation}
 B_{\rm word}=B_{\rm rob}=1,\qquad K_{\rm in}=2,\qquad
 K_{\rm out}=\frac32,\qquad K_{\rm eff}=3.
\label{eq:common-ground-prefix-data}
\end{equation}
Appendix~\ref{app:qutrit-calculations} lists the quotient action and induced
norm of every partial segment.

\begin{proposition}[Prefix amplification]
\label{prop:prefix-bound-noninheritance}
In the displayed quotient basis, let
\begin{equation*}
 M_\star=\operatorname{diag}(4,1,1,1,1,1,1,1).
\end{equation*}
This is a strict Lyapunov
metric for $A+C$.  With the damping exposures above, the direct two-label
cycle generated by $A$ and $C$ has sharp all-prefix bound one in $M_\star$,
whereas the physical word in Eq.~\eqref{eq:common-ground-word} has sharp
all-prefix bound two.  The latter value remains two in the always-on dephasing
model.
\end{proposition}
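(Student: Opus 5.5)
The plan is to reduce everything to explicit diagonal-versus-orthogonal structure in the displayed quotient basis. For $M_\star$, a diagonal metric with condition number $4$, the induced norm satisfies $\|D\|_{M_\star}=\max_i|D_{ii}|$ for diagonal $D$. For any Hilbert--Schmidt orthogonal $O$ it satisfies $\|O\|_{M_\star}\le\|M_\star^{1/2}\|_2\|M_\star^{-1/2}\|_2=2$. All three claims then follow from deciding which kind of factor each prefix is, together with one explicit witness vector.

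\emph{Step 1: the Lyapunov claim.} Compute the transverse matrices of $A$ and $C$ on $(P_1,P_2,X_{01},Y_{01},X_{02},Y_{02},X_{12},Y_{12})$. For example, $A(P_1)=P_0-P_1$ has quotient $-P_1$, and $A(P_2)=0$; the coherences decay at rates $1/2$ or $0$. This shows both matrices are diagonal. Their sum $A_{A+C}=-\operatorname{diag}(1,1,\tfrac12,\tfrac12,\tfrac12,\tfrac12,1,1)$ commutes with every diagonal $M$. Hence $A^{\mathsf T}M_\star+M_\star A\preceq-M_\star$, which is Eq.~\eqref{eq:static-lyapunov} with $\nu=1/2$ and some $q_0>0$; so $M_\star$ is a strict Lyapunov metric for $A+C$.

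\emph{Step 2: the direct two-label cycle.} Every prefix of the cycle generated by $A$ and then $C$ is a product of diagonal damping factors with entries in $(0,1]$. Its $M_\star$-norm is therefore at most one, and the empty prefix (the identity) attains one.

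\emph{Step 3: the physical word, upper bound.} Each unitary pulse prefix preserves $\T_{P_0}$ and is Hilbert--Schmidt orthogonal on the quotient; the swap path fixes $P_0$. So each such prefix has $M_\star$-norm at most $2$. The endpoint swap $\mathcal S_{12}$ permutes basis elements up to sign, so $\mathcal S_{12}^{-1}D\,\mathcal S_{12}$ is diagonal whenever $D$ is. I would treat the four kinds of prefix in turn.
\begin{itemize}
\item A prefix inside the first hold is diagonal, so its norm is at most $1$.
\item A prefix inside the first swap has the form $U(s)\,D_A$, with norm at most $2$.
\item A prefix inside the second hold has the form $D'_A\,\mathcal S_{12}\,D_A=\mathcal S_{12}\,(\mathcal S_{12}^{-1}D'_A\mathcal S_{12})\,D_A$, that is, orthogonal times diagonal, with norm at most $2$.
\item A prefix inside the return pulse has the form $V(s)\mathcal S_{12}\cdot(\text{diagonal})$, with norm at most $2$. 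The completed word is diagonal, with norm at most $1$.
\end{itemize}

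\emph{Step 3: the physical word, sharpness.} Take the quotient vector $e_{P_2}$, which has $M_\star$-norm $1$. It is untouched by $A$, and the completed first swap sends it to $\pm e_{P_1}$, which has $M_\star$-norm $2$. This attains the bound.

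\emph{Step 4: always-on dephasing.} The generator $\mathcal Z=\Diss{P_0}$ acts diagonally on the basis: it damps the $X_{0k},Y_{0k}$ coherences and fixes $P_1,P_2,X_{12},Y_{12}$. It commutes with $A$ and with the swap Hamiltonian, because $S_{12}$ commutes with $P_0$. So each complete prefix factors as the dephasing-free prefix times a diagonal contraction, and the bound $2$ persists. Since $\mathcal Z(P_2)=0$, the same witness $e_{P_2}\mapsto\pm e_{P_1}$ still attains $2$.

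The main obstacle is the orthogonality bound for partial swaps combined with correctly tracking the frame of the second hold. I would verify explicitly that the mid-pulse quotient action in Appendix~\ref{app:qutrit-calculations} is a rotation mixing only $\{P_1,P_2\}$ and the coherence pairs, rather than relying only on the abstract Hilbert--Schmidt orthogonality argument.
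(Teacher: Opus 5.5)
Your argument is correct and is essentially the paper's. It uses the same four ingredients: diagonal damping factors are $M_\star$-contractions; every physical-word prefix is a Hilbert--Schmidt contraction and so has $M_\star$-norm at most $\sqrt{4}=2$; the witness $P_2\mapsto P_1$ gives sharpness; and the always-on dephasing is a commuting diagonal contraction that leaves the population witness untouched.

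One small correction in Step~1: at $\nu=1/2$ the coherence entries of $2M_\star G_{\rm cg}+M_\star$ vanish, so Eq.~\eqref{eq:static-lyapunov} with $q_0>0$ requires a smaller rate, for example $\nu=1/4$ and $q_0=1/2$.
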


\begin{proof}
The Lyapunov identity and the matching all-prefix upper and lower bounds are
proved in Appendix~\ref{app:qutrit-calculations}.
\end{proof}

Table~\ref{tab:common-ground-costs} uses the constants in
Eq.~\eqref{eq:common-ground-prefix-data} for the closed-cycle resource account.  A raw call
has scalar cost one and each pulse/transition edge has scalar cost two;
per-switch wall-time and scalar costs are zero.  Pulse and transition counts
refer to the same two directed edges and are not added a second time to the
aggregate primitive count.  Initialization, finalization, and the closed seam
are identities with zero cost.  The fixed background dephasing
has no additional call or scalar charge in the declared dephasing presentation.

\begin{table*}[t]
\caption{Worst-case resources for the common-ground word at $d_0=1$ and
$\eps=10^{-3}$.  The no-dephasing and dephasing columns use
$q_*=83/100$ and $67/100$.  Logical, raw, pulse, and transition calls are
overlapping componentwise accounts; aggregate calls count each primitive once.}
\label{tab:common-ground-costs}
\begin{ruledtabular}
\begin{tabular}{lrrr}
quantity & worst case per cycle & no dephasing & dephasing\\
\hline
cycles & 1 & 43 & 20\\
wall time & 4 & 172 & 80\\
aggregate primitive calls & 4 & 172 & 80\\
logical orbit calls & 2 & 86 & 40\\
raw dissipator calls & 2 & 86 & 40\\
pulse calls & 2 & 86 & 40\\
transition calls & 2 & 86 & 40\\
reset calls & 0 & 0 & 0\\
dead calls & 0 & 0 & 0\\
cooldown calls & 0 & 0 & 0\\
physical-signature switches & 5 & 215 & 100\\
scalar cost & 6 & 258 & 120\\
integrated control action & $\pi$ & $43\pi$ & $20\pi$\\
raw dissipative exposure & $\le(21/10)\alpha$ & $\le(903/10)\alpha$ & $\le42\alpha$\\
always-on exposure & $0$ or $4\lambda$ & $0$ & $80\lambda$
\end{tabular}
\end{ruledtabular}
\end{table*}

These counts use the joint worst-case gain and contraction envelopes before
taking the integer ceiling.  At zero uncertainty the contractions return to
$4/5$ and $16/25$, and the corresponding certified minimum counts are $36$ and
$18$ cycles.

\begin{proposition}[Larger gain box]
\label{prop:common-ground-radius-one-obstruction}
No uniform strict contraction holds for the same word on
$(m_1,m_3)\in[0,2]^2$.  At $m_1=m_3=0$, the word without dephasing is the
identity channel.  With $\mathcal Z$ present, excited populations still contain
a unit transverse mode.
\end{proposition}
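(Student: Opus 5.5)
The argument is a direct evaluation at the corner $m_1=m_3=0$ of the enlarged box. That corner lies in $[0,2]^2$, and the uncertainty may be chosen at every cycle. So it suffices to exhibit one admissible actual monodromy whose transverse map has an eigenvalue of modulus one. Such a monodromy gives $\|S\|_M\ge1$ for every metric $M$, because the spectral radius is bounded by any induced norm. No $q_*<1$ can then bound all actual monodromies, and no uniform strict contraction can hold.

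First take $\lambda_{\rm ao}=0$ and $m_1=m_3=0$. The two hold generators $\mathcal L_1,\mathcal L_3$ in Eq.~\eqref{eq:common-ground-complete-paths} vanish identically, so both hold channels are the identity. The word channel reduces to the composition of the return pulse with the forward pulse. Each pulse Hamiltonian is a scalar envelope times the fixed operator $S_{12}$, so it commutes with itself along the path. The endpoints are therefore $U_\pm(1)=\pm iS_{12}$, as in Eq.~\eqref{eq:directed-swap-endpoints}. Their product is $U_-(1)U_+(1)=I$, so the cycle channel is exactly the identity. Its transverse map is $I_8$, which has norm one in every metric, and the word does not contract.

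Next take $\lambda_{\rm ao}=\lambda$. The always-on term is $\mathcal Z=\Diss{P_0}$, with $\Diss{P_0}(\rho)=P_0\rho P_0-\tfrac12\{P_0,\rho\}$. This term annihilates $P_1$ and $P_2$, and more generally every operator supported on $\spann\{|1\rangle,|2\rangle\}$. It damps only the $0$–excited coherences $X_{0k},Y_{0k}$.

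The pulse Hamiltonians $\mp u(s)S_{12}$ act on the excited block as $\mp u(s)(|1\rangle\langle2|+|2\rangle\langle1|)$. They also act on $P_0$ as $\mp u(s)P_0$, a scalar on that block. On excited-block operators the $P_0$ part contributes nothing to the commutator, so the pulses map the excited block into itself. Because $\mathcal Z$ vanishes there, the complete pulse dynamics restricted to the excited block is unitary conjugation alone. The holds are again the identity.

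Consider the excited-block operator $P_1+P_2$, which is a transverse quotient element. The swap maps it to itself under conjugation, so it is fixed by each pulse and hence by the whole word. This gives a transverse eigenvector with eigenvalue one. So the spectral radius is at least one, and the second claim, that excited populations contain a unit transverse mode, follows.

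The only step needing care is the check that the always-on $\mathcal Z$ does not couple the excited block back to the coherences or to $P_0$ during the pulses. This reduces to verifying $\mathcal Z(X)=0$ for $X=QXQ$, which is immediate from $P_0Q=0$. The rest is bookkeeping with the endpoint identity Eq.~\eqref{eq:directed-swap-endpoints}.
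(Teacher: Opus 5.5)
Your proposal is correct and follows the paper's own argument: evaluate the word directly at the corner $m_1=m_3=0$, where the holds vanish and the directed pulses compose to $U_-(1)U_+(1)=I$, and note that with $\mathcal Z=\Diss{P_0}$ present the excited-block populations are untouched, which gives a transverse eigenvalue one. Your spectral-radius remark correctly turns this into failure of $S^\dagger MS\preceq q^2M$ for every $q<1$ and every metric $M\succ0$, not only for $M=I_8$.
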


\subsection{A cascade ladder with a sharp presentation-relative law}
\label{subsec:ladder}

Now set
\begin{equation}
 A_\ell=\Diss{|0\rangle\langle1|},\qquad
 B_\ell=\Diss{|1\rangle\langle2|},\qquad
 L_\ell=A_\ell+B_\ell.
\label{eq:ladder-generators}
\end{equation}
Unlike the common-ground pair, these generators do not commute.  Their
commutator $C_\ell=[A_\ell,B_\ell]$ obeys
\begin{equation}
 C_\ell(X)=X_{22}(P_0-P_1),\qquad
 [A_\ell,C_\ell]=-C_\ell,\qquad [B_\ell,C_\ell]=C_\ell.
\label{eq:ladder-commutator}
\end{equation}

The ladder admits a strict static certificate, given in
Proposition~\ref{prop:ladder-certificate} of
Appendix~\ref{app:qutrit-calculations}.

\begin{theorem}[Exact ladder factorization]
\label{thm:ladder-factorization}
For every $T>0$, define
\begin{equation}
 x(T)=\log\frac{T}{1-e^{-T}},\qquad
 y(T)=T-x(T)=\log\frac{e^T-1}{T}.
\label{eq:ladder-times}
\end{equation}
Then $0<x(T),y(T)<T$ and
\begin{equation}
 e^{T(A_\ell+B_\ell)}
 =e^{y(T)A_\ell}e^{TB_\ell}e^{x(T)A_\ell}.
\label{eq:ladder-factorization}
\end{equation}
No word in the two primitive semigroups with positive factor durations and
fewer than three factor pieces equals the left-hand side.
\end{theorem}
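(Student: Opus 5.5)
The plan is to reduce both assertions to a population computation on the diagonal of $\Herm(\mathbb C^3)$, using that $A_\ell$ and $B_\ell$ never mix diagonal and off-diagonal entries. In the basis $(P_0,P_1,P_2,X_{01},Y_{01},X_{02},Y_{02},X_{12},Y_{12})$ the two dissipators act as follows.

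Take $A_\ell=\Diss{|0\rangle\langle1|}$. It sends the population $p_1\mapsto p_0$ at unit rate, damps the $01$ and $12$ coherences at rate $1/2$, and annihilates the $02$ coherences.

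Take $B_\ell=\Diss{|1\rangle\langle2|}$. It sends $p_2\mapsto p_1$ at unit rate, damps the $02$ and $12$ coherences at rate $1/2$, and annihilates the $01$ coherences. I will check this directly from Eq.~\eqref{eq:dissipator}: the jump term of either dissipator maps a coherence to zero, and the anticommutator term is diagonal in the coherence basis.

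Hence both maps are block diagonal, with diagonal action on the six coherence coordinates, so the coherence blocks commute. On each coherence the right-hand side of Eq.~\eqref{eq:ladder-factorization} gives the factor $e^{-(x+y)/2}=e^{-T/2}$ on the $01$ and $02$ pairs and $e^{-(x+y+T)/2}=e^{-T}$ on the $12$ pair. These agree with $e^{TL_\ell}$ because $x+y=T$. The identity is therefore reduced to the $3\times3$ population block. Trace preservation fixes $p_0$, so it suffices to test the inputs $P_1$ and $P_2$.

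For the population check, note that on input $P_1$ only the $A_\ell$ factors act, and both sides give $p_1=e^{-T}$; this uses $x+y=T$ again. On input $P_2$, the left side is the equal-rate cascade, giving $p_2=e^{-T}$ and $p_1=Te^{-T}$. On the right side, $e^{xA_\ell}$ fixes $P_2$. Then $e^{TB_\ell}$ yields $p_2=e^{-T}$ and $p_1=1-e^{-T}$, and $e^{yA_\ell}$ multiplies $p_1$ by $e^{-y}=T/(e^T-1)$, giving $Te^{-T}$. For the range claims, $0<x$ follows from $1-e^{-T}<T$, and $x<T$ is equivalent to $T<e^T-1$. The same two elementary inequalities give $0<y<T$.

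For the lower bound, I will merge adjacent factors with the same label via the semigroup property, which leaves three cases. A single factor $e^{sA_\ell}$ fixes $P_2$, and a single factor $e^{sB_\ell}$ fixes $P_1$; the left side moves both, so neither case works. The two-factor word $e^{bB_\ell}e^{aA_\ell}$ applied to $P_2$ forces $b=T$ and then gives $p_1=1-e^{-T}$. This differs from $Te^{-T}$ because $e^T-1>T$. The word $e^{aA_\ell}e^{bB_\ell}$ is handled with both test inputs. Input $P_1$ forces $a=T$, and input $P_2$ forces $b=T$. The resulting population is $p_1=(1-e^{-T})e^{-T}$, which is not $Te^{-T}$ because $1-e^{-T}<T$.

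The only real obstacle is bookkeeping: confirming that the dissipators are genuinely block diagonal in the chosen basis, with no population--coherence coupling. I will confirm this by evaluating both dissipators once on the nine basis elements. The remaining steps are one-line checks.
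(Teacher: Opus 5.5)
Your proposal is correct and follows essentially the same route as the paper: coherence factors fixed by total exposure $T$, the level-two-to-level-one population coefficient $e^{-y}(1-e^{-T})=Te^{-T}$, and the comparison $e^{-T}(1-e^{-T})<Te^{-T}<1-e^{-T}$ for the two two-factor orders. The only difference is that the paper forces both exposures to equal $T$ through the $01$ and $02$ coherence factors, whereas you force them with population tests on the inputs $P_1$ and $P_2$; both work, and your explicit treatment of one-factor words is a harmless addition.
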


\begin{proof}
The ideal population coefficient from level two to level one is $Te^{-T}$.
The right-hand side gives $e^{-y}(1-e^{-T})$, which agrees by
Eq.~\eqref{eq:ladder-times}.  The total exposure to each generator is $T$,
giving the same factors $e^{-T/2}$ on the $01$ and $02$ coherences and $e^{-T}$
on the $12$ coherences.  Conversely, the $01$ and $02$ coherences force
both total exposures to be $T$.  A two-factor word then gives either
$e^{-T}(1-e^{-T})$ or $1-e^{-T}$ for that population coefficient, and
\begin{equation*}
 e^{-T}(1-e^{-T})<Te^{-T}<1-e^{-T}.
\end{equation*}
This proves both claims on the full matrix-unit basis.
\end{proof}

The zero error at finite positive durations in
Eq.~\eqref{eq:ladder-factorization} excludes a positive dwell error floor for
this model.

The physical grammar for the resource statement has exclusive unit-amplitude
access to $A_\ell$ and $B_\ell$, minimum active-arc dwell $h>0$, and an
identity dead interval of duration $\sigma>0$ at each label change.  One
$A_\ell$ arc and one $B_\ell$ arc, each of duration $T$, form a period
\begin{equation}
 \mathcal P(T)=2T+2\sigma.
\label{eq:ladder-period}
\end{equation}
The observation phase lies $y(T)$ into the $A_\ell$ arc, so cutting the period
there yields the three factors in Eq.~\eqref{eq:ladder-factorization}.  The
dwell constraint applies to the uncut $A_\ell$ arc of duration $T$, not to its
two factor pieces.  State injection and observation occur only after the
periodic controller has been initialized, and the resource clock begins at
this strobe.  Initialization and pre-running are excluded from both sides of
the comparison.

For this post-initialization clock, the factorizing schedule has the fixed cost
vector
\begin{equation}
\begin{aligned}
 &(T_{\rm wall},X_A,X_B,N_{\rm arc},N_{\rm factor},N_{\rm sw},
   A_{\max},B_{\rm sw},T_{\rm pulse})\\
 &\qquad=(2T+2\sigma,T,T,2,3,2,1,\sigma^{-1},0).
\end{aligned}
\label{eq:ladder-cost-vector}
\end{equation}
Here $T_{\rm wall}$ is elapsed period time; $X_A$ and $X_B$ are cumulative
generator exposures; $N_{\rm arc}$ counts maximal active arcs on the period
circle; $N_{\rm factor}$ counts the pieces after cutting at the observation
strobe; and $N_{\rm sw}$ counts label changes, including the cyclic seam.
The remaining entries are the amplitude cap $A_{\max}$, the interface
transition-rate parameter $B_{\rm sw}=1/\sigma$ (not a Fourier bandwidth), and
the Hamiltonian-pulse time $T_{\rm pulse}$.  Thus $N_{\rm factor}=3$ although
there are only two physical arcs.  There is no Hamiltonian pulse, so
Eq.~\eqref{eq:integrated-control-action} gives $\mathcal A_{\rm ctrl}=0$.

For a channel $\Phi$, define the worst-case target error
\begin{equation}
 \mathcal E_{\rm wc}(\Phi)=
 \sup_\rho\frac12\|\Phi(\rho)-P_0\|_1,
\label{eq:worst-case-target-error}
\end{equation}
where the supremum is over qutrit density operators.  For a physical evolution
$\Phi_t$, write
$E_{\rm wc}(t)=\mathcal E_{\rm wc}(\Phi_t)$.

\begin{proposition}[Ladder convergence]
\label{prop:ladder-convergence}
Every primitive and dead interval fixes $P_0$.  For the factorizing schedule,
the target error at integer periods is at most $2e^{-kT/2}$ for every integer
$k\ge0$, and for every wall time $t\ge0$ it satisfies
\begin{equation}
 E_{\rm wc}(t)\le2e^{T/2}
 \exp\!\left[-\frac{T}{2\mathcal P(T)}t\right].
\label{eq:ladder-all-time-envelope}
\end{equation}
Target error does not rise within a period.
\end{proposition}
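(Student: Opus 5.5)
The plan is to obtain the strobe bound from the exact factorization of Theorem~\ref{thm:ladder-factorization}, and then extend it to all wall times by trace-norm contractivity of channels that fix $P_0$.

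\emph{Target fixation.} Each primitive fixes $P_0$: $A_\ell(P_0)=B_\ell(P_0)=0$ because $|1\rangle\langle0|P_0$-type terms vanish ($\langle1|0\rangle=\langle2|0\rangle=0$), and the dead interval is the identity. Hence every prefix channel $\Psi$ satisfies $\Psi(P_0)=P_0$. For any CPTP $\Psi$ with $\Psi(P_0)=P_0$ and any channel $\Phi$,
\[
\tfrac12\|\Psi\Phi(\rho)-P_0\|_1=\tfrac12\|\Psi(\Phi(\rho)-P_0)\|_1\le\tfrac12\|\Phi(\rho)-P_0\|_1 .
\]
So $\mathcal E_{\rm wc}(\Psi\circ\Phi)\le\mathcal E_{\rm wc}(\Phi)$. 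Applied to successive prefixes, this gives the last claim: the target error does not rise within a period, or indeed at any later time.

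\emph{Strobe bound.} Cut the period at the observation phase. The dead intervals are identities, so by Eq.~\eqref{eq:ladder-factorization} one period map equals $e^{TL_\ell}$, and $k$ periods give $e^{sL_\ell}$ with $s=kT$. I will solve $e^{sL_\ell}$ on matrix units, as in the proof of Theorem~\ref{thm:ladder-factorization}:
\begin{itemize}
\item the excited populations are $p_2(s)=e^{-s}p_2$ and $p_1(s)=e^{-s}(p_1+s\,p_2)$;
\item the coherences $c_{01},c_{02}$ are multiplied by $e^{-s/2}$;
\item $c_{12}$ is multiplied by $e^{-s}$.
\end{itemize}
Write $\sigma=e^{sL_\ell}\rho$ and let $\varepsilon=p_1(s)+p_2(s)$ be its excited mass. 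Split $\sigma-P_0$ into its diagonal-block part and its $P$--$Q$ off-diagonal block $c=(c_{01},c_{02})$. The first has trace norm at most $2\varepsilon$: the $00$ entry is $-\varepsilon$ and the $QQ$ block is positive with trace $\varepsilon$. The Hermitian off-diagonal block has trace norm $2\|c\|_2$. Therefore
\[
\tfrac12\|\sigma-P_0\|_1\le\varepsilon+\|c(s)\|_2 .
\]

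\emph{Estimating the two terms.} For the coherence term, positivity of $\rho$ gives $|c_{0j}|^2\le p_0p_j$. Hence $\|c(0)\|_2^2\le p_0(1-p_0)\le\tfrac14$, so $\|c(s)\|_2\le\tfrac12e^{-s/2}$. For the population term, $\varepsilon\le(1+s)e^{-s}\le e^{-s/2}\sup_{s\ge0}(1+s)e^{-s/2}=2e^{-1/2}e^{-s/2}$. Summing gives $(\tfrac12+2e^{-1/2})e^{-s/2}<2e^{-kT/2}$; for $k=0$ the bound is trivial since the error is at most $1\le2$.

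\emph{All wall times.} For $t\in[k\mathcal P,(k+1)\mathcal P)$, the channel $\Phi_t$ equals a $P_0$-fixing prefix composed with the $k$-period map. By the contraction step, $E_{\rm wc}(t)\le2e^{-kT/2}$. Since $k>t/\mathcal P-1$, this is at most $2e^{T/2}\exp[-Tt/(2\mathcal P(T))]$, which is Eq.~\eqref{eq:ladder-all-time-envelope}.

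\emph{Main obstacle.} The delicate step is getting the constant $2$ uniformly in $s$. The cruder Fuchs--van de Graaf route, $\sqrt{1-\langle0|\sigma|0\rangle}$, fails because the factor $(1+s)$ in the population of level one is not uniformly bounded. The block decomposition combined with the positivity bound on the coherences is what keeps the constant below $2$.
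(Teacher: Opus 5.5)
Your proof is correct and follows the paper's argument. The factorization turns the $k$-th strobe map into $e^{kTL_\ell}$, and the population bound $(1+s)e^{-s}$ plus the positivity bound $\tfrac12e^{-s/2}$ on the $0$--excited coherences gives a constant below $2$. Contractivity of $P_0$-fixing CPTP maps, together with $\lfloor t/\mathcal P(T)\rfloor\ge t/\mathcal P(T)-1$, then yields the all-time envelope and the monotonicity within a period. Your explicit block decomposition of $e^{sL_\ell}\rho-P_0$ only spells out a step the paper leaves implicit; also rename your state $\sigma$, since that symbol already denotes the dead-interval duration in $\mathcal P(T)=2T+2\sigma$.
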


The physical channel has nonzero micromotion relative to $e^{tL_\ell}$ even
though its target error is nonincreasing.

\begin{theorem}[Ladder resource law]
\label{thm:ladder-resource-law}
Consider the two-label grammar with exclusive unit-amplitude access to
$A_\ell$ and $B_\ell$, minimum active-arc dwell $h>0$, and an identity dead
interval of duration $\sigma>0$ at each label change.  State injection and
the resource clock begin at the declared observation strobe after controller
initialization; initialization and pre-running costs are excluded.
For any measurable periodic protocol in this grammar, let $X_A(t)$ and
$X_B(t)$ be its cumulative exposures and let $E_{\rm wc}(t)$ be the worst-case
trace-distance error from $P_0$.  Then, for every $t\ge0$,
\begin{equation}
 E_{\rm wc}(t)\ge\frac12
 \exp\!\left[-\frac12\min\{X_A(t),X_B(t)\}\right].
\label{eq:ladder-exposure-obstruction}
\end{equation}
If a completed period has error at most $\eps<1/2$, then, with
$s_\eps=2\log[1/(2\eps)]$,
\begin{equation}
 T_{\rm wall}\ge2\max\{h,s_\eps\}+2\sigma,\qquad
 X_A+X_B\ge2s_\eps,\qquad N_{\rm arc}\ge2,\qquad N_{\rm sw}\ge2.
\label{eq:ladder-resource-lower}
\end{equation}
The identity in Theorem~\ref{thm:ladder-factorization}, with
\begin{equation}
 \widehat T_\eps=\max\!\left\{h,
 \left\lceil2\log\frac{2}{\eps}\right\rceil\right\}
\label{eq:ladder-resource-choice}
\end{equation}
achieves error at most $\eps$ with wall time at most
$2\widehat T_\eps+2\sigma$ and with two active arcs and two cyclic switches.
For fixed $h$ and $\sigma$, the upper and lower wall-time bounds have the
common leading term
\begin{equation}
 4\log(1/\eps)
\label{eq:ladder-leading-law}
\end{equation}
and differ by at most $8\log2+2$.
\end{theorem}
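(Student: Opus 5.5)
The plan is to get the exposure obstruction \eqref{eq:ladder-exposure-obstruction} from two coherences that evolve by scalar factors. Then I read off the resource lower bounds \eqref{eq:ladder-resource-lower} from exclusivity and the dead-time rule. Finally I take the matching upper bound from Proposition~\ref{prop:ladder-convergence}.

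\emph{Step 1 (exposure obstruction).} From Eq.~\eqref{eq:ladder-generators} I check directly that $A_\ell(|0\rangle\langle1|)=-\tfrac12|0\rangle\langle1|$ and $B_\ell(|0\rangle\langle1|)=0$. The jump term vanishes and $|1\rangle\langle2|^\dagger|1\rangle\langle2|=P_2$ annihilates both sides. Symmetrically, $A_\ell(|0\rangle\langle2|)=0$ and $B_\ell(|0\rangle\langle2|)=-\tfrac12|0\rangle\langle2|$. The dead intervals are identities. So for any measurable exclusive unit-amplitude schedule, the time-ordered propagator acts on these two operators by the commuting scalars $e^{-X_A(t)/2}$ and $e^{-X_B(t)/2}$. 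This is exact, with no product-formula error.

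I take the input $\rho=\tfrac12(|0\rangle+|1\rangle)(\langle0|+\langle1|)$ and the test operator $W=|0\rangle\langle1|+|1\rangle\langle0|$, which has $\|W\|_\infty=1$. Then $\|\Phi_t(\rho)-P_0\|_1\ge|\tr W(\Phi_t(\rho)-P_0)|=e^{-X_A(t)/2}$, because $\tr WP_0=0$. The same argument on the $02$ pair gives $e^{-X_B(t)/2}$. Taking the larger of the two bounds gives Eq.~\eqref{eq:ladder-exposure-obstruction}.

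\emph{Step 2 (lower resource bounds).} Suppose a completed period has error at most $\eps<1/2$. Step 1 then forces $\min\{X_A,X_B\}\ge s_\eps>0$, so both labels have positive exposure. By exclusivity, the period circle therefore contains at least one maximal $A_\ell$ arc and one maximal $B_\ell$ arc, which gives $N_{\rm arc}\ge2$. Cyclically, two distinct labels need at least two label changes, so $N_{\rm sw}\ge2$. Each change costs dead time $\sigma$.

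Each used label has an arc of length at least $h$, and its exposure is at least $s_\eps$. The unit amplitude cap gives exposure at most active time, so each label occupies at least $\max\{h,s_\eps\}$ of active time. Summing the two labels and the dead time gives $T_{\rm wall}\ge2\max\{h,s_\eps\}+2\sigma$, and $X_A+X_B\ge2s_\eps$ follows directly.

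\emph{Step 3 (achievability and the gap).} By Proposition~\ref{prop:ladder-convergence} with $k=1$, the factorizing period with $T=\widehat T_\eps$ has error at most $2e^{-\widehat T_\eps/2}\le\eps$, since $\widehat T_\eps\ge2\log(2/\eps)$. Its cost vector \eqref{eq:ladder-cost-vector} gives wall time $2\widehat T_\eps+2\sigma$, two arcs and two switches.

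For the gap I compare $\max\{h,\lceil2\log(2/\eps)\rceil\}$ with $\max\{h,2\log(1/(2\eps))\}$. Because $\max\{h,\cdot\}$ is $1$-Lipschitz, the two differ by at most $\lceil2\log(2/\eps)\rceil-2\log(1/(2\eps))\le4\log2+1$. Doubling gives the stated $8\log2+2$. The common leading term is $2\cdot2\log(1/\eps)=4\log(1/\eps)$, since $h$ and $\sigma$ are fixed.

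\emph{Main obstacle.} I expect the hardest part to be making the arc and switch counting rigorous for merely measurable periodic protocols. Step 1 needs nothing beyond measurability, because the scalar factors depend only on Lebesgue exposure. Step 2, however, requires the grammar's notion of maximal active arc and label change to be well defined, with each arc of duration at least $h$ and each change followed by a dead interval. I would first state that the grammar admits only finitely many arcs per period, and then count changes on the circle.
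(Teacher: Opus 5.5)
Your proposal is correct and follows the paper's own argument. The two target-coherence witnesses give Eq.~\eqref{eq:ladder-exposure-obstruction}. Exclusivity, the dwell bound, and the dead intervals then turn the forced exposures $\min\{X_A,X_B\}\ge s_\eps$ into the wall-time, arc, and switch bounds. The factorizing schedule with Proposition~\ref{prop:ladder-convergence} at $k=1$ supplies the matching construction and the $8\log2+2$ gap.

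One detail to make explicit in Step~1: your relations $A_\ell(|0\rangle\langle1|)=-\tfrac12|0\rangle\langle1|$ and $B_\ell(|0\rangle\langle1|)=0$ alone do not give $\tr W\Phi_t(\rho)=e^{-X_A(t)/2}$. You also need that no other matrix unit feeds the $01$ element, for example $A_\ell^*(|1\rangle\langle0|)=-\tfrac12|1\rangle\langle0|$ and $B_\ell^*(|1\rangle\langle0|)=0$. For your specific input it suffices that $P_0$ is fixed and $P_1$ evolves among diagonal operators, so the diagonal part of $\rho$ contributes nothing.
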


\begin{corollary}[Periodic exponent obstruction]
\label{cor:periodic-exponent-obstruction}
Let an admitted periodic protocol have period $\mathcal P$, per-period
exposures $s_A,s_B$, and $n_{\rm sw}$ dead switches.  If constants $K>0$ and
$\Gamma>0$ satisfy
$E_{\rm wc}(n\mathcal P)\le K e^{-\Gamma n\mathcal P}$ for every integer
$n\ge1$, then
\begin{equation}
 \Gamma\le\frac{\min\{s_A,s_B\}}{2\mathcal P}
 \le\frac{\mathcal P-n_{\rm sw}\sigma}{4\mathcal P}.
\label{eq:periodic-exponent-obstruction}
\end{equation}
The factorizing schedule has $s_A=s_B=T$ and $n_{\rm sw}=2$ and attains both
bounds.
\end{corollary}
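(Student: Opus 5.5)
The plan is to combine the exposure obstruction in Eq.~\eqref{eq:ladder-exposure-obstruction} of Theorem~\ref{thm:ladder-resource-law} with periodicity, and then bound the per-period exposures by the time the period leaves free for active arcs. Throughout, set $m=\min\{s_A,s_B\}$.

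\emph{First inequality.} Since the protocol is periodic and the resource clock starts at the observation strobe, the cumulative exposures at integer periods are $X_A(n\mathcal P)=ns_A$ and $X_B(n\mathcal P)=ns_B$. Theorem~\ref{thm:ladder-resource-law} then gives
\[
 \tfrac12 e^{-nm/2}\le E_{\rm wc}(n\mathcal P)\le Ke^{-\Gamma n\mathcal P}
\]
for every $n\ge1$. Taking logarithms yields $\Gamma n\mathcal P\le \log(2K)+nm/2$. Dividing by $n\mathcal P$ and letting $n\to\infty$ removes the constant and gives $\Gamma\le m/(2\mathcal P)$. This limit is the only point that needs care: $K$ is arbitrary, so no single $n$ suffices, but the inequality must hold for all $n$, which is exactly what the limit uses.

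\emph{Second inequality.} Access is exclusive and the amplitude is at most one, so within one period the exposure to each label is at most that label's active dwell time. The $n_{\rm sw}$ dead intervals, each of duration $\sigma$, are identity maps that contribute no exposure and are disjoint from the active arcs. Hence
\[
 s_A+s_B\le\mathcal P-n_{\rm sw}\sigma,
\]
and since the minimum is at most the average, $m\le(\mathcal P-n_{\rm sw}\sigma)/2$. Dividing by $2\mathcal P$ gives the second bound. I would state explicitly that measurability of the protocol makes the exposures Lebesgue integrals of amplitudes bounded by one over the active sets, which justifies the dwell-time bound.

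\emph{Attainment.} For the factorizing schedule, $\mathcal P=2T+2\sigma$, $s_A=s_B=T$, and $n_{\rm sw}=2$. Both right-hand sides then equal $T/(2\mathcal P)$, since $(\mathcal P-2\sigma)/(4\mathcal P)=2T/(4\mathcal P)$. Proposition~\ref{prop:ladder-convergence} gives
\[
 E_{\rm wc}(k\mathcal P)\le 2e^{-kT/2}=2e^{-(T/(2\mathcal P))\,k\mathcal P},
\]
so the hypothesis holds with $K=2$ and $\Gamma=T/(2\mathcal P)$. Thus $\Gamma$ equals both bounds, and both inequalities are attained.
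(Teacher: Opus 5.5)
Your proposal is correct and follows the paper's proof essentially step by step. The first inequality comes from the exposure obstruction combined with the assumed envelope and the limit $n\to\infty$; the second comes from exclusivity, the unit-amplitude cap, and the dead intervals; and attainment comes from the stroboscopic envelope $2e^{-kT/2}$ of Proposition~\ref{prop:ladder-convergence}. The paper also cites the coherence witness for sharpness, but, as you use implicitly, the first inequality already rules out any larger exponent.
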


\begin{proof}
After $n$ periods, Eq.~\eqref{eq:ladder-exposure-obstruction} and the assumed
envelope give
\[
 \frac12e^{-n\min\{s_A,s_B\}/2}
 \le E_{\rm wc}(n\mathcal P)\le Ke^{-\Gamma n\mathcal P}.
\]
Taking logarithms, dividing by $n$, and letting $n\to\infty$ proves the first
inequality.  Exclusivity, the unit-amplitude cap, and the dead intervals give
$s_A+s_B\le\mathcal P-n_{\rm sw}\sigma$, which proves the second.  For the
factorizing schedule, $\mathcal P=2T+2\sigma$; the upper envelope above and the
coherence witness in Appendix~\ref{app:qutrit-calculations} give the common
sharp exponent $T/(2\mathcal P)$.
\end{proof}

Equation~\eqref{eq:ladder-exposure-obstruction} follows by applying the
protocol to $(|0\rangle+|1\rangle)/\sqrt2$ and
$(|0\rangle+|2\rangle)/\sqrt2$: the corresponding target coherences decay as
$e^{-X_A/2}$ and $e^{-X_B/2}$.  Exclusivity converts the two necessary
exposures into wall time, and the finite factorization supplies the matching
upper construction.  Appendix~\ref{app:qutrit-calculations} records the full
finite-time and rounded-phase calculations.

\begin{remark}[Hardware transfer]
A target-stabilizer conjugation of $A_\ell$ cannot produce $B_\ell$, because it
preserves the jump range inside $\spann\{|0\rangle\}$ whereas $B_\ell$ has
range $\spann\{|1\rangle\}$.  Transferring Eq.~\eqref{eq:ladder-leading-law}
to a hardware presentation requires independent preservation of the exposure
witnesses, endpoint identity or actual monodromy, cold accounting, and the
same admissible comparator class.
\end{remark}

\section{Searching a finite hardware library}
\label{sec:certified-search}

\subsection{Acceptance and refinement in a finite library}

A complete path certificate can also serve as the acceptance test in a search
over executable words.  Switched dissipative stabilization and robust
open-system performance already have substantial theory
\cite{scaramuzza2015switching,schirmer2022robust}.  We ask which words in a
specified hardware library have a verified guarantee for their full uncertainty
set, including the physical pulses and intervening idle dynamics.

Fix a finite operation alphabet and a maximal word length.  For each word,
use a compact parameter domain whose coordinates enforce all exact target
and controller identities.  The uncertainty set covers every admitted joint
error, including correlations between operations.  The system model includes
every quantum degree of freedom retained between calls.  Each nonempty cycle
has duration between fixed positive finite bounds.

\begin{proposition}[A finite acceptance test]
\label{prop:finite-search-test}
Suppose a complete-path calculation proves, for a nonnegative state functional
$V$ and every admitted state and uncertainty,
\begin{equation}
 V(\mathcal S\rho)\le qV(\rho)+\eta,\qquad
 V(\mathcal P\rho)\le B V(\rho)+b,\qquad
 d_{\rm tr}(\rho,\F_P)\le\psi(V(\rho)),
\label{eq:search-progress-data}
\end{equation}
where $0\le q<1$, $B,\eta,b\ge0$, $\psi$ is increasing, and $\mathcal P$
ranges over every physical prefix.  Let $V(\rho_0)\le Y$, and put
\begin{equation}
 z_\infty=\frac{\eta}{1-q},\qquad
 z_N=q^N Y+(1-q^N)z_\infty.
\label{eq:search-tail-sequence}
\end{equation}
An integer $N\ge0$ satisfying
\begin{equation}
 \psi\!\left(B\max\{z_N,z_\infty\}+b\right)\le\eps
\label{eq:search-acceptance}
\end{equation}
certifies the tolerance at all subsequent times under continued execution.
If one cycle takes at most $T$, that boundary is reached by time $NT$ after
initialization.
\end{proposition}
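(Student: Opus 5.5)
The plan is to iterate the cycle inequality in Eq.~\eqref{eq:search-progress-data} along the cycle boundaries, exactly as in the proof of Theorem~\ref{thm:primitive-practical-tube}, and then use the prefix inequality and the monotone conversion $\psi$ to pass from boundaries to every physical time. Write $\rho_n$ for the state at the $n$th cycle boundary and $v_n=V(\rho_n)$. The uncertainty may vary arbitrarily between cycles, but the cycle inequality holds uniformly for every admitted state and uncertainty. It therefore gives $v_{n+1}\le q v_n+\eta$ for every $n$, regardless of which cycle map was selected.

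The first step is a comparison argument. The affine map $f(v)=qv+\eta$ is nondecreasing because $q\ge0$, and $z_n$ is its orbit from $z_0=Y$. Induction gives $v_n\le z_n$: indeed $v_{n+1}\le f(v_n)\le f(z_n)=z_{n+1}$. The second step is to check that $z_n=z_\infty+q^n(Y-z_\infty)$ is monotone in $n$, and that it lies between $Y$ and $z_\infty$. Consequently, for every $m\ge N$, we have $z_m\le\max\{z_N,z_\infty\}$. If $Y\ge z_\infty$ the sequence decreases and $z_m\le z_N$; otherwise it increases toward $z_\infty$ and $z_m\le z_\infty$. The edge case $q=0$ needs no separate treatment, using the convention $q^0=1$.

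The third step handles intermediate times. Any time $t$ at or after the $N$th boundary lies in some cycle $m\ge N$, and $\rho(t)=\mathcal P\rho_m$ for a physical prefix $\mathcal P$ of that cycle; the full cycle counts as a prefix, and so does the identity at the boundary. Then $V(\rho(t))\le Bv_m+b\le B\max\{z_N,z_\infty\}+b$, using $B\ge0$. Because $\psi$ is increasing, the distance bound yields $d_{\rm tr}(\rho(t),\F_P)\le\psi(B\max\{z_N,z_\infty\}+b)\le\eps$. For the timing claim, the $N$th boundary is reached after $N$ cycles, each of duration at most $T$, so it occurs by time $NT$.

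The only delicate point is that the state functional is a nonlinear scalar, not a transverse norm, so we cannot invoke a monodromy norm. The argument must instead rely solely on the scalar recursion, using that it holds for every admitted state and every uncertainty. That is precisely what allows cycle-varying and correlated errors. I expect the main care to lie in the monotonicity/max step and in confirming that prefixes include boundary times; the rest is routine.
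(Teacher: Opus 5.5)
Your proposal is correct and matches the paper's proof: you iterate the cycle inequality to get $V(\rho_n)\le z_n$, bound the tail supremum of the monotone sequence $z_n$ by $\max\{z_N,z_\infty\}$, and then apply the prefix bound, the monotonicity of $\psi$, and the sum of cycle durations. Your explicit comparison through the nondecreasing affine map $v\mapsto qv+\eta$, and your note that the identity at a cycle boundary counts as a prefix, spell out details the paper leaves implicit.
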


\begin{proof}
Iteration of the first inequality gives $V(\rho_n)\le z_n$.  For $0\le q<1$,
the sequence $z_n$ lies monotonically between $Y$ and $z_\infty$, so its
supremum for $n\ge N$ is $\max\{z_N,z_\infty\}$.  Apply the prefix bound
and the monotonicity of $\psi$.  Summing the bounded cycle durations proves
the time statement.  The convention $q^0=1$ also covers $q=0$.
\end{proof}

The norm certificates of Section~\ref{sec:physical-words} instantiate this
test.  A positive observable can be more efficient in a model with a pure
target.  Taking $V(\rho)=\tr(Q\rho)$ gives
\begin{equation}
 V(\rho)\le d_{\rm tr}(\rho,\F_P)\le\sqrt{V(\rho)}.
\label{eq:survival-distance}
\end{equation}
For the upper inequality, purify $\rho$ and project its purification onto
$P$ on the system factor.  The normalized projected vector has overlap
$\sqrt{1-V(\rho)}$ with the original vector.  Their trace distance is
$\sqrt{V(\rho)}$, and partial trace cannot increase it.  The lower inequality
follows from measuring $\{P,Q\}$.  The case $V=1$ uses the universal upper
bound one.  Thus Eq.~\eqref{eq:search-acceptance} becomes a scalar rational
inequality when a survival certificate is available.

A discrete search branch fixes the word template and any integer choices,
such as $N$.  On each branch, fix nested finite subdivisions of its parameter
boxes whose maximal diameter tends to zero, and index increasing arithmetic
accuracy by $m\in\mathbb N$.  A refinement is \emph{fair} if, until a
certificate is found, it eventually evaluates every triple consisting of a
represented branch, a box at a finite subdivision depth, and a finite
accuracy level.
Each evaluation encloses all required inequalities on that box.  A box and
its descendants may be discarded when a valid enclosure excludes the required
certificate throughout the box.

\begin{proposition}[Scope of refinement]
\label{prop:search-refinement}
On each discrete branch, suppose exact identities have been parameterized and
the remaining conditions are finitely many continuous inequalities
$f_j(x)\le0$, $1\le j\le J$, on finitely many compact coordinate boxes.
Suppose computable interval enclosures converge uniformly under subdivision
and increasing arithmetic accuracy.  A fair refinement finds a certificate
in finite time if a represented point $x_*$ on some branch satisfies
$f_j(x_*)\le-\sigma$ for every $1\le j\le J$ and some $\sigma>0$.
\end{proposition}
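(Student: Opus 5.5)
We argue by a standard branch-and-bound completeness argument, which combines uniform convergence of the enclosures with fairness of the refinement. Fix the branch containing $x_*$ and the compact box $\mathcal B$ containing it. For each level $k$ of the nested subdivision, let $\mathcal B_k(x_*)$ denote a box at depth $k$ that contains $x_*$. Write $F_j(\mathcal B',m)=[\underline f_j,\overline f_j]$ for the computed enclosure of $f_j$ on a box $\mathcal B'$ at accuracy $m$. We read uniform convergence as the statement that for every $\epsilon>0$ there are $k_0$ and $m_0$ such that, for all $k\ge k_0$, $m\ge m_0$, every depth-$k$ box $\mathcal B'$, and every $x\in\mathcal B'$, we have $\overline f_j(\mathcal B',m)\le f_j(x)+\epsilon$. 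The same bound holds for the lower endpoints with the inequality reversed. We use this together with validity, meaning $\underline f_j\le f_j\le\overline f_j$ on the box.

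\textbf{Step 1: a certifiable box survives pruning.} Take $\epsilon=\sigma/2$. For $k\ge k_0$ and $m\ge m_0$, the box $\mathcal B_k(x_*)$ satisfies $\overline f_j\le f_j(x_*)+\sigma/2\le-\sigma/2<0$ for every $j\le J$. So the enclosure proves $f_j\le0$ on the whole box, and every point of it is a certificate. Next we check that no ancestor of $\mathcal B_k(x_*)$ is ever discarded. By construction, a box is discarded only when a valid enclosure excludes the certificate throughout the box, that is, when $\underline f_j>0$ for some $j$. Validity gives $\underline f_j\le f_j(x_*)\le-\sigma<0$ on every box containing $x_*$, so no such box can be discarded. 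Here we use the fact that the subdivisions are nested, so the ancestors of $\mathcal B_k(x_*)$ are exactly the coarser boxes containing $x_*$.

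\textbf{Step 2: fairness.} The triple (branch, $\mathcal B_{k_0}(x_*)$, $m_0$) is a represented branch paired with a box at finite depth and a finite accuracy level. By the fairness definition it is therefore evaluated at some finite stage, unless a certificate has already been found earlier. Either way a certificate is found in finite time. There are finitely many branches and boxes at each depth, so the enumeration is well defined.

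\textbf{Main obstacle.} The main difficulty is pinning down what ``uniform convergence'' must mean so that Step 1 holds. The quantifiers have to be uniform over boxes at a given depth, not pointwise. We also need each enclosure to remain valid, containing the true range, at every accuracy level; otherwise the no-discard argument of Step 1 fails. We also need continuity of $f_j$ only through this convergence hypothesis. A minor point concerns $x_*$ lying on a shared face of two subdivision cells. Any containing cell works, since the argument above applies to each such cell.
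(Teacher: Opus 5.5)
Your proof is correct and follows the paper's argument: a sufficiently fine box containing $x_*$, evaluated at sufficiently high accuracy, has negative upper endpoints for every $f_j$; validity prevents any box containing $x_*$ from being discarded; and fairness then guarantees that this box is reached. The only difference is cosmetic. You build continuity of the $f_j$ into your reading of uniform convergence, taking the upper endpoints to lie within $\epsilon$ of the point values. The paper instead uses continuity explicitly to bound $\sup f_j$ on a fine box by $-\sigma/2$, and then uses convergence of the upper endpoint to that supremum.
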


\begin{proof}
If $f_j(x_*)\le-\sigma$ for every $j$ and $\sigma>0$, continuity gives a
relative neighborhood on which $f_j\le-\sigma/2$.  A sufficiently fine box
containing $x_*$ lies in that neighborhood.  At sufficiently high accuracy,
the computed upper endpoint for each $f_j$ lies less than $\sigma/4$ above
its true supremum on that box.  All these upper bounds are negative.  Fair refinement eventually reaches such a box and
accuracy level, and a valid exclusion cannot discard a box containing the
strictly feasible point.
\end{proof}

Searching over the integer $N$ keeps the time objective continuous on each
continuous subproblem. Once a certified execution-time upper bound $U$ is known and
each cycle lasts at least $T_->0$, a better certificate has
$N\le\lfloor U/T_-\rfloor$.
Exhaustive comparison in a finite library minimizes its certified
execution-time upper bound.

\begin{table}[tbp]
\caption{Finite-library search using Proposition~\ref{prop:finite-search-test},
with the input and refinement assumptions stated in the text.}
\label{tab:search-procedure}
\small
\begin{tabular}{p{0.17\linewidth}p{0.73\linewidth}}
\toprule
Stage & Required data or operation\\
\midrule
Specify & Fix the finite library and compact parameter boxes above, target,
tolerance, and resource convention. Supply valid enclosures for the chosen
state functional.\\[1ex]
Enclose & Bound every cycle and prefix to obtain $q,B,\eta,b,Y,\psi$
in Eq.~\eqref{eq:search-progress-data} and an upper cycle time $T$.\\[1ex]
Test & For an integer $N$, retain a word and its parameter values or box when
Eq.~\eqref{eq:search-acceptance} holds uniformly; report $NT$ plus any
specified initialization or finalization costs.\\[1ex]
Refine & Fairly interleave parameter subdivision, arithmetic accuracy,
and integer choices of $N$ under
Proposition~\ref{prop:search-refinement}.\\[1ex]
Report & Return accepted certificates and resource bounds. Exclude a box only
if a valid enclosure proves failure of a required certificate inequality
throughout it; otherwise mark it unresolved.\\
\bottomrule
\end{tabular}
\end{table}

\subsection{An irreversible four-level network with uncertain finite pulses}
\label{subsec:four-level-search}

Consider orthonormal levels $|0\rangle,\ldots,|3\rangle$ and target $P_0$.
Define
\begin{equation}
 \mathcal A=\Diss{|0\rangle\langle1|}+\Diss{|1\rangle\langle2|},\quad
 H_{23}=|2\rangle\langle3|+|3\rangle\langle2|,\quad
 \mathcal Z=\sum_{j=0}^3\Diss{P_j}.
\label{eq:four-level-hardware}
\end{equation}
Let $g$ be the dimensionless cascade gain, $a$ the dimensionless pulse gain,
and $\kappa$ the dephasing rate.  A hold $A(t)$ applies
$g\mathcal A+\kappa\mathcal Z$.  The pulses $P(\theta)$ and $M(\theta)$ apply
$-i[+4aH_{23},\cdot]+\kappa\mathcal Z$ and
$-i[-4aH_{23},\cdot]+\kappa\mathcal Z$, respectively, for their uncertain
finite durations.
The nominal hold duration is $t$ and the nominal pulse duration is
$\theta/4$; an uncertain multiplier changes each actual command duration.
Every command is followed by a fixed dead interval with the same dephasing.
Commands obey $t\in[0.3,6]$ and $\theta\in[0.6,2]$.
Table~\ref{tab:four-level-guarantee} specifies the numerical error ranges
and background rates.

These dimensionless parameters define a theoretical hardware instance, with
time measured in inverse nominal cascade rate. The amplitude bounds are
finite, and the control model is
piecewise constant with no additional slew-rate constraint.  The cascade is
off during pulses; dephasing is always on.  All gains and durations are
constant within a segment and may have arbitrary correlations across
segments and cycles.  The full Cartesian error box is certified.

In the actual model, a further jump
$\lambda(s)\Diss{|1\rangle\langle0|}$ acts throughout every path, with
the bound in Table~\ref{tab:four-level-guarantee}. Removing this term gives
the shadow model.
There are no auxiliary degrees of freedom or resets.  Commands are issued in
a fixed laboratory basis and return the actuator to idle after each call;
the sequence counter closes after a complete word.  No exact inverse of an
inaccurate quantum pulse is required for this classical controller closure.

The cascade and finite pulse do not commute.  A hold retains the dark state
$P_3$, whereas a pulse retains $P_1$.  Nevertheless, the complete word
\begin{equation}
 A(3)\longrightarrow P(157/100)\longrightarrow A(5)
\label{eq:four-level-simple-word}
\end{equation}
has a strict survival certificate.  An effect is a Hermitian operator
$0\preceq E\preceq I$ whose expectation gives a measurement probability.
For a shadow cycle channel $\mathcal S$, the survival effect
$E=\mathcal S^*(Q)$ satisfies
$\tr(E\rho)=\tr(Q\mathcal S\rho)$, the probability remaining outside the target
after the cycle.  Starting from $Q$, the shadow adjoint dynamics closes on a
four-dimensional real space of Hermitian operators, retaining the
$2$--$3$ coherence created by the pulse.
Appendix~\ref{app:four-level-certification} derives that reduction and the
interval bounds.  For every shadow realization,
\begin{equation}
 \mathcal S^*(Q)\preceq\frac{2}{25}Q,\qquad
 T\le\frac{174207}{20000}=8.71035.
\label{eq:four-level-simple-certificate}
\end{equation}
Every shadow prefix has nonincreasing survival probability.  CPTP
Duhamel propagation bounds the pumping contribution to a whole cycle or
prefix by $\eta=10^{-5}T$ in that probability.  Consequently
Proposition~\ref{prop:finite-search-test}, with $q=2/25$, $B=1$, $b=\eta$,
$Y=1$, and $\psi(v)=\sqrt v$, certifies for every initial density operator under continued cycling
\begin{equation}
 d_{\rm tr}(\rho(t),P_0)\le0.1
 \quad\text{for all }t\ge t_2,\qquad t_2\le17.4207.
\label{eq:four-level-preparation}
\end{equation}

A finite search over the four word templates $APA$, $APAP$, $APAM$, and
$APAPA$, hold grid $\{2,3,4,5,6\}$, and pulse-angle grid
$\{1.5,1.57,1.65\}$ evaluates 1650 parameterized words with the same rational checker.
The smallest certified time in that finite library belongs to
\begin{equation}
 A(3),\ P(157/100),\ A(6),\ P(157/100),\ A(4).
\label{eq:four-level-five-word}
\end{equation}
Table~\ref{tab:four-level-guarantee} gives its certificate and all-time
guarantee. The interval calculation evaluates two algebraically equivalent
pulse formulas and intersects their valid enclosures.  This retains useful
cancellation near a swap without changing the physical dynamics.

\begin{table}[tbp]
\caption{Certified operation of the five-command four-level word in
Eq.~\eqref{eq:four-level-five-word}. Times are in inverse nominal cascade-rate
units; the model is a theoretical hardware instance.}
\label{tab:four-level-guarantee}
\small
\begin{tabular}{p{0.25\linewidth}p{0.65\linewidth}}
\toprule
Quantity & Admitted values or conclusion\\
\midrule
Command errors & Cascade gain $g\in[0.95,1.05]$, pulse gain
$a\in[0.98,1.02]$, and command-duration multipliers in $[0.98,1.02]$.
The cascade gain is shared by its two jumps within each hold.\\[1ex]
Background and timing & Dephasing $\kappa=0.01$ throughout; pumping
$0\le\lambda(s)\le10^{-5}$ throughout; a fixed dead interval $0.05$
after each command. The cascade is off during pulses.\\[1ex]
Shadow certificate & The adjoint survival bound has factor below $0.009260$.
Appendix~\ref{app:four-level-certification} gives a simpler outward bound
$47/5000$ sufficient for the same preparation guarantee.\\[1ex]
Actual guarantee & For every initial density operator, a complete
preparation cycle ends by $T=14.3107$ with $d_{\rm tr}(\rho(t),P_0)\le0.1$
at that time and throughout subsequent cycles.\\
\bottomrule
\end{tabular}
\end{table}

A fixed-periodic comparator repeats $AP$ with one common hold and angle.
On hold grid $\{1.50,1.55,\ldots,6.00\}$ and angle grid
$\{1.45,1.47,\ldots,1.69\}$, both scheduler phases and every preparation
prefix of one through twelve commands are checked.  The best certified
prefix is $APAPAPA$, with hold $3.4$ and angle $1.55$.
The all-time continuation certificate starts at its actual scheduler phase;
Appendix~\ref{app:periodic-continuation} derives its acceptance inequality.
The same error set, dead intervals, and primitive interval formulas apply.
The periodic comparator additionally bounds each fixed physical word by
grouping adjacent commands into contiguous blocks and multiplying their
certified survival factors.  It uses the smallest of these products and the
directly enclosed whole-word factor, with the command order and durations
fixed.  Applying this refinement to all 1650 library words improves
24 survival bounds but changes no accepted cycle count or preparation time.
Thus the five-command word improves the reported guarantee
by about $7.1\%$ relative to that finite-grid comparator.

Numerical optimization of the same certificate objective gives a five-command
proposal with a smaller certified time bound, accepted by the same rational
checker (Table~\ref{tab:four-level-comparison}). At tolerance $0.1$, these
certified upper bounds remain well above the hardware lower bound derived
below in Eq.~\eqref{eq:four-level-physical-lower}.

\begin{table}[tbp]
\caption{Certified preparation-time upper bounds for the same four-level
hardware, uncertainty set, and tolerance $0.1$. Operation continues after
preparation. Each row uses the search domain stated in the text.}
\label{tab:four-level-comparison}
\small
\begin{tabular}{p{0.32\linewidth}p{0.43\linewidth}r}
\toprule
Search & Selected preparation and continuation & Time bound\\
\midrule
1650-word finite library & $APAPA$, repeated after preparation & $14.3107$\\[1ex]
Declared periodic $AP$ grid & $APAPAPA$ preparation; matching-phase
$PAPA$ blocks thereafter & $15.40775$\\[1ex]
Numerical proposal search & Five-command $APAPA$ & $14.024998$\\
\bottomrule
\end{tabular}
\end{table}

Finally, for every available actual path,
$\dot p\ge-(21/20)p$ with $p=\tr(Q\rho)$, since pumping adds a nonnegative
term and only the first cascade jump removes population from $Q$.
For an initial state with $p=1$, any protocol reaching trace distance
$\eps<1$ therefore needs
\begin{equation}
 T_{\rm physical}\ge\frac{20}{21}\log\frac1\eps.
\label{eq:four-level-physical-lower}
\end{equation}

\section{Finite-input and Markovian scope}
\label{sec:boundaries}

The following examples describe the timing and target-selection information
required by a finite calibrated presentation.  Optimizing the achieved margin
$\min_j\lambda_j/(r_j\tau_j)$ requires compactness and calibration hypotheses.
The feasible timing interval also accounts for interstitial dynamics and its
stability upper endpoint.

\paragraph{Finite support does not control timing.}
For the common-ground family
\begin{equation}
 L_\delta=\Diss{|0\rangle\langle1|}
 +\delta\Diss{|0\rangle\langle2|},\qquad
 \delta=2^{-n},\quad n\in\mathbb N,
\label{eq:small-weight-family}
\end{equation}
the dimension and support size remain fixed while the two direct weights are
$(1,\delta)$.  With unit rates and a common minimum dwell $\tau$, the
first-order schedule must satisfy
\begin{equation}
 h_{\min}^{\rm dwell}=\tau/\delta.
\label{eq:small-weight-dwell}
\end{equation}
Thus finite Carath\'eodory support supplies neither a uniform positive lower
bound on weights nor a dimension-only dwell guarantee.

\paragraph{A higher-rank target face.}
Let
\begin{equation}
 P=P_0+P_1,\qquad
 A_\Lambda=\Diss{|0\rangle\langle2|},\qquad
 B_\Lambda=\Diss{|1\rangle\langle2|}.
\label{eq:lambda-face}
\end{equation}
Both generators preserve $\F_P$ and have the same transverse action, although
$[A_\Lambda,B_\Lambda](P_2)=P_1-P_0$.  A swap within the target subspace
conjugates one jump into the other along a face-preserving path.  The
transverse certificate proves attraction to $\F_P$; it cannot select a unique
mixed state because every state in the face is stationary under a single
decay from level two.  A unique target would require a separate contraction
certificate for motion within the face.

\begin{remark}[Finite-instance and Markovian scope]
\label{rem:finite-instance-scope}
\label{rem:robust-compiler-scope}
For a fixed finite support, graph, word order, closed-form path library, and
algebraic or rational enclosures, the compiler's equalities, positivity
conditions, timing inequalities, and integer cycle counts reduce to finite
symbolic calculations that certify supplied data.  Real quantifier elimination
also decides a separately supplied finite first-order formula over the real
algebraic numbers.  Unbounded word length, unrestricted waveforms, generic
matrix-exponential identities, and global minimum cost lie outside this
finite-input decision scope.  Boundary certificates may remain unresolved;
these results supply no running-time bound or witness-size classification.
Failure to find a certificate, or failure of a sufficient contraction bound,
leaves actual stability and physical feasibility unresolved unless a necessary
analytic obstruction applies.

The channel product requires the declared system-channel factorization.
If system--environment memory survives a cut, the system segment maps need
not compose as CPTP channels; the dynamical model must account for that memory.
\end{remark}

\section{Conclusion}
\label{sec:conclusion}

A stabilizing Lindblad generator and an executable stabilization protocol are
different mathematical objects.  The generator belongs to a cone; the
protocol is a timed, ordered path through calibrated operating points.  The
two qubit obstructions isolate two independent losses of information: convex
averaging can hide target leakage of every primitive, and a perfect orbit
identity can hide a pulse path that moves the target or adds irreversible
noise.

The robust primitive-certified compiler bounds the physical cycle without
replacing it by an effective generator. It combines static decay in one fixed
Lyapunov metric, a finite nominal-word derivation, and normalized-clock
Duhamel bounds for complete uncertain primitives. The amplified word
telescope controls actual-monodromy contraction and within-cycle prefixes.
For structure-preserving complete paths, uniform slack gives all-time
target-face attraction under cycle-varying uncertainty, with worst-case
duration and componentwise resource bounds.  Rank-one targets yield
fixed-state stabilization.

For bounded leakage, the leaky theorem compares complete actual paths with
structure-preserving shadow paths.  Primitive diamond-norm bounds produce
cycle and prefix leakage,
an explicit practical tube, and the tolerance at which the certificate ceases
to guarantee all-time accuracy.

The common-ground qutrit realizes the fixed-metric construction with bounded
directed pulses and uncertain decay gains. The separate cascade qutrit gives
matching wall-time bounds with leading term $4\log(1/\eps)$ in its two-label
grammar. These examples connect the generator-level criterion to calibrated
control paths and their execution costs.

Within a specified finite hardware library, the finite-search extension uses
these certificates to select executable words and rank certified execution-time
bounds.  Under the stated refinement assumptions, it finds strictly feasible
certificates.  The four-level cascade retains irreversible
transfer, uncertain finite pulses, timing errors, and continuous background
dynamics.  Its survival-effect calculation certifies trace distance at most $0.1$ for
arbitrary initial states after a complete preparation cycle ending by time
$14.3107$ and throughout subsequent cycles. This direct
observable calculation and the static-metric construction are two ways of
supplying the acceptance test.

\begin{samepage}
\section*{Data Availability}
The numerical data, exact certificates, and source code supporting the
computational results will be deposited in a public repository before
publication. Pending deposit, these materials are available from the authors
upon reasonable request.
\end{samepage}

\section*{Acknowledgments}
Generative AI tools, including OpenAI ChatGPT and Codex, supported code development and verification, and manuscript formatting and revision.
The author is responsible for scientific claims, computations, and citations.

\appendix
\section{Static certificate proofs and necessity examples}
\label{app:static-proofs}

\subsection{Positive-gauge attainment}

To prove Proposition~\ref{prop:attainment}, let
\begin{equation}
 L_k=e_k+\sum_j\lambda_{kj}g_{kj}\longrightarrow L,\qquad
 e_k\in E_H,\qquad g_{kj}\in S,\qquad\lambda_{kj}\ge0.
\label{eq:static-approximating-sequence}
\end{equation}
Apply the quotient map $\pi$ and conic Carath\'eodory elimination to each ordinary
finite sum.  At most $s=\dim\spann\pi(S)$ nonzero atoms remain.  The gauge gives
\begin{equation}
 \mu\sum_j\lambda_{kj}
 \le\sum_j\lambda_{kj}\phi(g_{kj})=\phi(L_k),
\label{eq:gauge-mass-bound}
\end{equation}
so the total coefficient mass is bounded.  Pad the reduced sums with zero
coefficients.  Compactness of $S$ and the coefficient interval gives a
convergent subsequence.  After zero limiting coefficients are discarded,
\begin{equation}
 \pi(L)=\sum_{j=1}^m\lambda_j\pi(g_j),\qquad m\le s.
\end{equation}
The remainder $L-\sum_j\lambda_jg_j$ lies in $E_H$, which proves both closure
and the same-$L$ representation.

If $L$ and the presentation are algebraic, the fiber of all such
decompositions is a nonempty semialgebraic set over $\mathbb A$.  Transfer
between real closed fields gives an algebraic point in the same
fiber~\cite[Theorem~2.80]{basu2006algorithms}.  If
$L$ is transcendental, real attainment still holds but an algebraic
same-generator representation need not.

\begin{example}[A nonclosed conical hull]
The compact semialgebraic set
\begin{equation}
 S_0=\{(-1,0,0)\}\cup
 \{(1,t,t^2):0\le t\le1\}
\label{eq:compact-nonclosed-cone-set}
\end{equation}
has a nonclosed cone.  Indeed,
$n(1,1/n,1/n^2)+n(-1,0,0)=(0,1,1/n)$ belongs to
$\cone(S_0)$ and converges to $(0,1,0)$.  In any conic representation with
third coordinate zero, every active curved atom has $t=0$, which also makes
the second coordinate zero.  Thus $(0,1,0)\notin\cone(S_0)$.  The positive
gauge excludes precisely this escape of coefficient mass.
\end{example}

\subsection{Quotient stability and the conditioned bound}

Equation~\eqref{eq:static-no-leakage} makes $\T_P$ invariant, hence
$e^{tL}$ induces $e^{tA_L}$ on $\X_P$.  If $A_L$ is Hurwitz, every quotient
class tends exponentially to zero, so every state approaches $\F_P$.
Conversely, density operators span $\Herm(\mathcal H)$ by real linear
combinations.  Attraction of every density operator therefore forces
$e^{tA_L}$ to vanish on a spanning set, which excludes every eigenvalue with
nonnegative real part.  This proves Proposition~\ref{prop:fixed-generator-gas};
the Routh--Hurwitz criterion supplies the corresponding polynomial form.

For $V(x)=x^{\mathsf T}Mx$, Eq.~\eqref{eq:static-lyapunov} implies
$\dot V\le-2\nu V$.  Comparison with the two metric bounds gives
\begin{equation}
 \|x(t)\|_2\le\sqrt{m_+/m_-}\,e^{-\nu t}\|x(0)\|_2.
\label{eq:appendix-lyapunov-bound}
\end{equation}
Let $Q_\rho=\rho-P\rho P$ be the Hilbert--Schmidt orthogonal representative
of the quotient class.  Positivity and normalization give
\begin{equation}
 \dtr(\rho,\F_P)\le\|Q_\rho\|_1
 \le\sqrt d\,\|Q_\rho\|_{\rm HS},\qquad
 \|Q_\rho(0)\|_{\rm HS}\le2.
\label{eq:trace-hs-conversion}
\end{equation}
Combining Eqs.~\eqref{eq:appendix-lyapunov-bound} and
\eqref{eq:trace-hs-conversion} proves Eq.~\eqref{eq:conditioned-static-bound}.

No-leakage and transverse stability are independent.  For a qubit, adding an
upward jump to downward amplitude damping can leave a deleted-coordinate
block Hurwitz while $P_0$ leaks.  Conversely, a generator can preserve the
target face while leaving a transverse mode undamped.  Both parts of
Proposition~\ref{prop:fixed-generator-gas} are therefore necessary.

A spectral abscissa alone also omits the endpoint prefactor.  For a finite-time
example in abstract linear algebra, the family
\begin{equation}
 A_K=\begin{pmatrix}-1&K\\0&-1\end{pmatrix},\qquad K>0,
\label{eq:nonnormal-family}
\end{equation}
has spectral abscissa $-1$ for every $K$, but
$e^{A_K}=e^{-1}\begin{psmallmatrix}1&K\\0&1\end{psmallmatrix}$, whose Euclidean
norm is unbounded as $K\to\infty$.  The Lyapunov estimate keeps both the
decay rate and the metric-dependent prefactor.

\subsection{Algebraic equivalence and decision scope}

Sufficiency in Theorem~\ref{thm:finite-static-certificate} follows immediately
from algebraic reconstruction of $L$, no leakage, positive Routh--Hurwitz data,
and the strict Lyapunov inequality.  For necessity, start with a real
stabilizing $L$.  Positive-gauge attainment gives a real support of size at
most $s$, and quotient stability gives strict Routh--Hurwitz inequalities.
All orbit equations, generator equalities, block equalities, strict signs,
and matrix inequalities form one finite existential semialgebraic sentence
over $\mathbb A$.  Real-closed-field transfer supplies algebraic values for
the witnesses.

An algebraic Hurwitz matrix has an algebraic positive-definite solution of
\begin{equation}
 A_L^{\mathsf T}M+MA_L=-I.
\label{eq:algebraic-lyapunov-equation}
\end{equation}
Strictness permits algebraic choices of $m_-,m_+,\nu$, and $q_0$.  This
completes the necessity proof.  Quantifier elimination decides the same finite
sentence~\cite[Theorem~2.77 and Algorithm~11.14]{basu2006algorithms}.
Compactness and the universal gauge condition can themselves be
stated over the reals.

The support cap is presentation-relative.  The commuting common-ground sum
needs two atoms in a library containing only its two branches but one call in
a library that directly supplies their sum.  Likewise, support reduction
preserves the complete generator and its quotient data, but it does not
preserve nonlinear resource costs unless the required metadata are included
as additional linear coordinates before reduction.

\section{Target compatibility and qubit obstructions}
\label{app:target-proofs}

\subsection{Face-invariance criteria}

We first prove Lemma~\ref{lem:gkls-face}.  For a target-supported positive
operator $\rho=P\rho P$, the $QQ$ block of the GKLS derivative is
\begin{equation}
 Q\mathcal L(\rho)Q
 =\sum_a(QV_aP)\rho(QV_aP)^\dagger.
\label{eq:gkls-qq-block}
\end{equation}
Every term is positive semidefinite.  If the target face is invariant for all
target-supported states, the sum vanishes for all such $\rho$, and hence
$QV_aP=0$ for every $a$.  Under these conditions the $QP$ block becomes
\begin{equation}
 Q\mathcal L(\rho)P
 =-iQ\left(H-\frac{i}{2}R\right)P\rho P,\qquad
 R=\sum_aV_a^\dagger V_a.
\label{eq:gkls-qp-block}
\end{equation}
Its vanishing for every target-supported $\rho$ is equivalent to the second
condition in Eq.~\eqref{eq:gkls-face-criterion}.  The adjoint $PQ$ block then
vanishes as well.  Conversely, the two conditions make both transverse blocks
zero and leave the derivative in $\T_P$, proving invariance.

For a CPTP map $\Phi(\rho)=\sum_aK_a\rho K_a^\dagger$, the corresponding
calculation is
\begin{equation}
 Q\Phi(\rho)Q=\sum_a(QK_aP)\rho(QK_aP)^\dagger.
\label{eq:cptp-qq-block}
\end{equation}
This proves Eq.~\eqref{eq:cptp-face-criterion}.  For a bounded
piecewise-regular time-dependent GKLS equation, applying the infinitesimal
calculation at every injection phase proves necessity almost everywhere;
integrating a generator that satisfies the conditions proves sufficiency.
This strong necessity statement quantifies over every target-face state and
every phase.  It need not hold for a single trajectory after a noninvertible
earlier segment has collapsed the reachable target set.

With no jumps, Eq.~\eqref{eq:gkls-face-criterion} reduces to $QH(s)P=0$ and
its adjoint.  The Schr\"odinger equation then preserves both $P\mathcal H$ and
$Q\mathcal H$, so $U(s)\in\Stab(P)$.  Differentiation proves the converse.
This establishes Eq.~\eqref{eq:unitary-stabilizer-path}.

The complete generator need not split into separately target-preserving
Hamiltonian and dissipative parts.  For example, with
\begin{equation}
 V=|0\rangle(\langle0|+\langle1|),\qquad H=Y/2,
\label{eq:cross-block-compensation-example}
\end{equation}
the two cross-block contributions cancel in the second condition of
Eq.~\eqref{eq:gkls-face-criterion}, although $[H,P_0]\ne0$.  By contrast,
positive jump leakage in Eq.~\eqref{eq:gkls-qq-block} admits no Hamiltonian
cancellation.  This is why compensation is solved for the complete
simultaneous operating point.

Exact target invariance is fragile under general GKLS perturbations.  For the
qubit target $P_0$, let
\begin{equation}
 V_\epsilon=\sqrt\epsilon\,|1\rangle\langle0|,
 \qquad \epsilon>0.
\label{eq:leakage-jump-example}
\end{equation}
Then $QV_\epsilon P_0\ne0$, and the solution of
$\dot\rho=\Diss{V_\epsilon}(\rho)$ from $P_0$ has excited population
$1-e^{-\epsilon t}$.  The rate can be arbitrarily small while the target face
fails to be invariant for every $t>0$.  Hence exact invariance is not open in
an unrestricted generator-norm ball, which motivates the separate shadow-path
conclusion of Theorem~\ref{thm:primitive-practical-tube}.

\subsection{Strict positivity under finite Markovian propagation}

\begin{lemma}[Interior preservation]
\label{lem:interior-preservation}
Let $\Phi$ be the finite-time propagator of a bounded finite-dimensional GKLS
equation.  If $\rho\succ0$, then $\Phi(\rho)\succ0$.
\end{lemma}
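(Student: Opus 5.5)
The plan is to split the bounded GKLS generator into a completely positive part and a part that generates an invertible similarity, and then use a Dyson (interaction-picture) expansion whose zeroth-order term already keeps a strictly positive input strictly positive. Write the possibly time-dependent generator as
\[
 \mathcal L_s(\rho)=K(s)\rho+\rho K(s)^\dagger+\sum_a V_a(s)\rho V_a(s)^\dagger,\qquad
 K(s)=-iH(s)-\tfrac12\sum_aV_a^\dagger(s)V_a(s),
\]
with $K$ and $V_a$ bounded and measurable on $[t_0,t_1]$.

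First, let $W(t,t_0)$ solve $\dot W=K(t)W$ with $W(t_0,t_0)=I$. This ordinary linear matrix equation has bounded coefficients, so $W$ is invertible; Liouville's formula gives $\det W=\exp\int\tr K\ne0$. Define the no-jump map $\mathcal W(t,t_0)\rho=W\rho W^\dagger$.

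Second, write the variation-of-constants identity for the full propagator,
\[
 \Phi(t,t_0)=\mathcal W(t,t_0)+\int_{t_0}^t\Phi(t,s)\,\mathcal J_s\,\mathcal W(s,t_0)\,ds,
 \qquad \mathcal J_s(\rho)=\sum_aV_a(s)\rho V_a(s)^\dagger .
\]
This identity holds in the bounded piecewise-regular setting, and it is the same Duhamel form used in Eq.~\eqref{eq:primitive-duhamel-identity}. The propagator $\Phi(t,s)$ is CPTP, and in particular positive. The maps $\mathcal J_s$ and $\mathcal W$ are completely positive.

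Third, combine the pieces. The integrand is therefore positive semidefinite, so
\[
 \Phi(t,t_0)\rho\succeq W\rho W^\dagger .
\]
If $\rho\succ0$ and $W$ is invertible, then $W\rho W^\dagger\succ0$, because $\langle\psi|W\rho W^\dagger|\psi\rangle=\langle W^\dagger\psi|\rho|W^\dagger\psi\rangle>0$ for $\psi\ne0$. Hence $\Phi(\rho)\succ0$.

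The main obstacle is justifying the positivity of the integral remainder rigorously. This needs complete positivity of the intermediate propagators $\Phi(t,s)$, which I take from the standard GKLS theory for bounded time-local generators cited earlier. For the piecewise-constant case an alternative avoids any time-dependent theory. In that case I would instead use the Lie--Trotter limit
\[
 e^{t\mathcal L}=\lim_n\bigl(e^{t\mathcal K/n}e^{t\mathcal J/n}\bigr)^n,
\]
where $\mathcal K\rho=K\rho+\rho K^\dagger$. Each factor satisfies $e^{t\mathcal J/n}\succeq\mathrm{id}$ on positive inputs, so the product dominates $e^{t\mathcal K}\rho=e^{tK}\rho e^{tK^\dagger}$ up to a vanishing error. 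This yields the same lower bound in the limit, since the order $\succeq$ is closed. Composing over finitely many pieces finishes the piecewise-constant case, because each step maps $\succ0$ to $\succ0$.
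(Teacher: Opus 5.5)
Your proof is correct, but it takes a different route from the paper. The paper's argument is structural. It writes the invertible CPTP propagator in Kraus form $\Phi(X)=\sum_aK_aXK_a^\dagger$. A kernel vector $v$ of $\Phi(I)$ would force $K_a^\dagger v=0$ for every $a$, so every matrix in the range of $\Phi$ would annihilate $v$. This contradicts invertibility, since the range contains $I$. Hence $\Phi(I)\succ0$, and $\rho\succeq cI$ with $c>0$ gives $\Phi(\rho)\succeq c\,\Phi(I)\succ0$. That argument uses only invertibility and positivity of the final map, so it applies to any invertible CP map, but it is purely qualitative.

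You instead use the GKLS structure directly. You split off the invertible no-jump similarity $\rho\mapsto W\rho W^\dagger$ from the completely positive jump term, and the Duhamel identity then gives the operator inequality $\Phi(t,t_0)\rho\succeq W\rho W^\dagger$. This route requires positivity of every intermediate propagator $\Phi(t,s)$. You could obtain that without citing the GKLS theorem by iterating the same identity into a norm-convergent Dyson series of CP terms. In return you get a quantitative bound:
\[
\lambda_{\min}\bigl(\Phi(\rho)\bigr)\ge\lambda_{\min}(\rho)\,\|W^{-1}\|^{-2},\qquad \|W^{-1}\|\le\exp\!\int_{t_0}^t\|K(s)\|\,ds
\]
by Gr\"onwall. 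This is an explicit rate at which strict positivity can degrade, which the paper's argument does not provide.

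Your Trotter alternative is redundant, because the Duhamel argument already covers piecewise-constant paths. There the domination is also exact for each $n$, since $(e^{t\mathcal K/n})^n=e^{t\mathcal K}$, so no vanishing-error term is needed.
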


\begin{proof}
The fundamental solution of the linear master equation is invertible as a
linear map, and the GKLS theorem makes it CPTP.  Write
$\Phi(X)=\sum_aK_aXK_a^\dagger$.  If $\Phi(I)$ had a nonzero kernel vector
$v$, then
\begin{equation*}
 0=\langle v|\Phi(I)|v\rangle=\sum_a\|K_a^\dagger v\|^2
\end{equation*}
would imply $K_a^\dagger v=0$ for all $a$.  Every matrix in the range of
$\Phi$ would then annihilate $v$ on the right, contradicting linear
invertibility.  Hence $\Phi(I)\succ0$.  Since $\rho\succeq cI$ for some
$c>0$, complete positivity gives
$\Phi(\rho)\succeq c\Phi(I)\succ0$.
\end{proof}

\subsection{The stable-average qubit}

Use the Bloch convention
$\rho=(I+xX+yY+zZ)/2$, with $P_0$ at $(0,0,1)$.  Under either generator in
Eq.~\eqref{eq:qubit-average-atoms}, the Hamiltonian part is a rotation and
drops out of the derivative of $x^2+y^2+z^2$.  The damping part gives
\begin{equation}
 \frac{d}{dt}(x^2+y^2+z^2)=-x^2-y^2+2z(1-z),
\end{equation}
which is equivalent to Eq.~\eqref{eq:purity-defect}.  Its integrating-factor
form is
\begin{equation}
 \delta(t)=e^{-t}\left[\delta(0)+
 \int_0^t e^s[1-z(s)]^2\,ds\right].
\label{eq:purity-defect-integral}
\end{equation}
Starting at $P_0$, the transverse Hamiltonian gives a nonzero $y$ derivative.
If $z$ remained identically one on any interval, the Bloch equations would
contradict that derivative.  The integrand in
Eq.~\eqref{eq:purity-defect-integral} is therefore positive on a subinterval
of every positive dwell, and $\delta(t)>0$ for every $t>0$.  A qubit state has
$\delta>0$ if and only if the state is in the full-rank interior.  Lemma
\ref{lem:interior-preservation} prevents all later finite GKLS factors from
returning it to $P_0$, proving Theorem~\ref{thm:static-word-no-go}.

The average $(\mathcal L_++\mathcal L_-)/2=\mathcal D$ has transverse matrix
$A_{\mathcal D}=\operatorname{diag}(-1,-1/2,-1/2)$ and characteristic
polynomial
\begin{equation}
 z^3+2z^2+\frac54z+\frac14.
\label{eq:qubit-static-characteristic-polynomial}
\end{equation}
Its three leading Routh--Hurwitz determinants are respectively
$2$, $9/4$, and $9/16$.  Here $K_2=\sqrt{m_+/m_-}$ is the Euclidean
conditioning factor of the Lyapunov metric and
$K_{\rm tr}=2\sqrt d\,K_2$ is the resulting uniform trace-distance
prefactor.  One rational static certificate is
\begin{equation}
 M=I_3,\qquad \nu=\tfrac14,\qquad m_-=\tfrac12,\qquad m_+=2,
 \qquad q_0=\tfrac14,\qquad K_2=2,\qquad K_{\rm tr}=4\sqrt2.
\label{eq:qubit-static-receipt}
\end{equation}
The presentation uses weights $1/2,1/2$, zero coherent edge, and the gauge
$\phi=-\operatorname{Tr}_{\rm superop}$, for which
$\phi(\mathcal L_\pm)=2$.  Thus the two-atom set is compact and satisfies the
positive-gauge condition.  These data complete the static certificate for the
target-compatible average.  The obstruction above establishes that
target-compatible sums can lack support on the target-compatible part of the
primitive cone.

\subsection{The orbit-to-word qubit}

Conjugation by $X$ interchanges
$|1\rangle\langle0|$ and $|0\rangle\langle1|$, which proves
Eq.~\eqref{eq:qubit-stroboscopic-sandwich}.  During a pulse,
$U(s)=e^{-is\Omega X/2}$ and
\begin{equation}
 U(s)|0\rangle
 =\cos(\Omega s/2)|0\rangle-i\sin(\Omega s/2)|1\rangle.
\end{equation}
The trace distance between two pure states is the square root of one minus
their squared overlap, giving Eq.~\eqref{eq:qubit-pulse-excursion}.  Both pulse
directions and $D_\uparrow$ have a nonzero derivative at $P_0$.  The first
nonzero segment of any nonempty word therefore leaves the target, proving the
all-time part of Theorem~\ref{thm:orbit-word-no-go}.

For the always-on strengthening, let
\begin{equation}
 K_u=\gamma D_\uparrow-i[uX/2,\,\cdot\,],\qquad \gamma>0
\end{equation}
with bounded $u$.  Direct expansion from $P_0$ gives
\begin{equation}
 \det\rho(s)=\gamma s+O(s^2)>0
\label{eq:always-on-determinant}
\end{equation}
for all sufficiently small $s>0$.  Lemma~\ref{lem:interior-preservation}
propagates strict positivity through every later bounded finite segment.

\section{Primitive-path, product, timing, and resource bounds}
\label{app:monodromy}

\subsection{Primitive-path and nominal-product enclosures}

Let $\Phi(u,v)$ and $\widehat\Phi(u,v)$ be the propagators of $G$ and
$\widehat G$ on normalized local time.  Differentiating
$\Phi(1,u)\widehat\Phi(u,0)$ gives
\begin{equation}
 \frac{d}{du}\bigl[\Phi(1,u)\widehat\Phi(u,0)\bigr]
 =\Phi(1,u)(\widehat G(u)-G(u))\widehat\Phi(u,0).
\label{eq:duhamel-derivative}
\end{equation}
Integration proves Eq.~\eqref{eq:primitive-duhamel-identity}.  The definition
of the logarithmic norm implies
\begin{equation}
 \|\Phi(v,u)\|_M\le
 \exp\!\left(\int_u^v\mu_M(G(s))\,ds\right),
\label{eq:log-norm-propagator-bound}
\end{equation}
because the derivative of $\|\Phi(s,u)x\|_M^2$ is bounded by twice the
logarithmic norm.  Applying Eq.~\eqref{eq:log-norm-propagator-bound} on both
sides of $G-\widehat G$ proves every assertion of
Lemma~\ref{lem:primitive-duhamel}.  The normalized clocks are essential: their
derivatives multiply the physical generators, so duration calibration is part
of the same difference as waveform and amplitude error.

For the nominal-product enclosure, the exact factor telescope is
\begin{equation}
 R_N\cdots R_1-W_N\cdots W_1
 ={}\sum_{k=1}^NR_N\cdots R_{k+1}
 {}\times(R_k-W_k)W_{k-1}\cdots W_1.
\label{eq:nominal-factor-telescope}
\end{equation}
The stated bounds on $R_k-W_k$, the left reference factors, and the right
ideal factors yield Eq.~\eqref{eq:derived-nominal-synthesis}.  A second
application with $C_k-R_k$ gives Eq.~\eqref{eq:robust-word-telescope} and the
full amplification in Eq.~\eqref{eq:robust-word-prefix-bounds}.

The direct-label estimates in Sec.~\ref{sec:direct} are one finite source of
$E_{\rm split}$.  Use the transverse atoms $B_j$, the nonnegative norm budgets
$b_j,c_{kj}\ge0$, and $\beta,C_2$ defined before
Eq.~\eqref{eq:product-error-bounds}.  Exchanging the exact exponential with the
ordered product one factor at a time gives the first-order bound in that
equation.  Pairing the equal first-order terms on the two symmetric tails of
Eq.~\eqref{eq:strang-word} gives the symmetric bound in
Eq.~\eqref{eq:product-error-bounds}.
These are induced-$M$ estimates; trace- or diamond-norm product estimates do
not acquire unit conversion constants automatically.

For finite rational upper bounds, if $x\ge0$ and $x<p+2$, then
\begin{equation}
 U_p(x)=\sum_{k=0}^p\frac{x^k}{k!}
 +\frac{x^{p+1}}{(p+1)!}\frac{1}{1-x/(p+2)}
\label{eq:upper-exp-majorant}
\end{equation}
satisfies $e^x\le U_p(x)$.  For $y>0$,
\begin{equation}
 e^{-y}<V_p(y)=\left(\sum_{k=0}^p\frac{y^k}{k!}\right)^{-1}.
\label{eq:negative-exp-majorant}
\end{equation}
The omitted tails are bounded by a geometric series.  These functions turn the
displayed sufficient product bounds into finite rational inequalities.

\subsection{Ordered interstitial maps and monodromy iteration}

Let $E_1,\ldots,E_p$ be the induced transverse maps of the chronological
active factors, and let $D_\ell$ be the induced transverse map of the
interstitial channel following $E_\ell$.  With $Q_\ell=E_p\cdots E_{\ell+1}$,
repeated insertion of $Q_\ell Q_\ell^{-1}$ gives
Eq.~\eqref{eq:interstitial-factorization}.
If $\|D_\ell-I\|_M\le\Delta_\ell$ and the active suffix and inverse suffix
have norms at most $e^{hs_\ell}$, submultiplicativity gives
Eq.~\eqref{eq:interstitial-bound}.  The suffix conjugations cannot be removed
without an additional commutation argument.

The following elementary lemma is useful when a period map is known directly.
It is an iteration statement, whereas
Theorem~\ref{thm:robust-path-compiler} derives its hypotheses from primitive
paths.

\begin{lemma}[Certified-monodromy iteration]
\label{lem:certified-monodromy-iteration}
Let a sequence of path-compatible cold cycles have transverse monodromies
$S_n$ satisfying $S_n^\dagger MS_n\preceq q^2M$ with $0\le q<1$, and suppose
every within-cycle prefix has induced $M$-norm at most $B$.  Then, for every
$t\in[t_n,t_{n+1})$,
\begin{equation}
 \dtr(\rho(t),\F_P)\le
 K_{\rm out}Bq^nK_{\rm in}\dtr(\rho(0),\F_P).
\label{eq:certified-monodromy-iteration}
\end{equation}
\end{lemma}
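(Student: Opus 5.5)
The plan is to work entirely on the transverse quotient $\X_P$, in the coordinates $z(\rho)$ of the representative $Q_\rho=\rho-P\rho P$. Then I iterate the cycle maps and finish by passing back through the conversion constants of Eq.~\eqref{eq:norm-conversions}.

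First I check that the quotient is available at every time. Each cold cycle is path-compatible (Definition~\ref{def:path-compatible-word}), so every complete hold, pulse, transition, dead and cooldown path maps $\F_P$ into itself at every intermediate time. Since density operators in $\F_P$ span $\T_P$ by real linear combinations, every prefix channel leaves $\T_P$ invariant. It therefore induces a well-defined linear transverse map on $\X_P\cong\T_P^\perp$. Let $\Pi^{(n)}_s$ denote the induced prefix map of cycle $n$ from $t_n$ to $t_n+s$. Then
\[
 z(\rho(t_n+s))=\Pi^{(n)}_s z(\rho(t_n)).
\]
Because all cycles close at the same cold boundary, these maps concatenate across cycles. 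In particular $z(\rho(t_{n+1}))=S_n z(\rho(t_n))$.

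Next I iterate. The hypothesis $S_n^\dagger MS_n\preceq q^2M$ is equivalent to $\|S_nx\|_M\le q\|x\|_M$ for every $x$, by the definition in Eq.~\eqref{eq:M-induced-norm}. Induction then gives
\[
 \|z(\rho(t_n))\|_M\le q^n\|z(\rho(0))\|_M .
\]
This uses only the per-cycle inequality, so it holds for arbitrarily varying $S_n$. For $t\in[t_n,t_{n+1})$, the prefix bound gives $\|z(\rho(t))\|_M\le B\,\|z(\rho(t_n))\|_M$. Finally, I apply the two inequalities of Eq.~\eqref{eq:norm-conversions}: the outer one to $\rho(t)$ and the inner one to $\rho(0)$. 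Chaining these yields Eq.~\eqref{eq:certified-monodromy-iteration}.

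The only step needing care is the first one: showing that prefixes genuinely induce maps on the quotient, and that chronological composition of induced maps equals the induced map of the composition. This requires that invariance of $\F_P$ implies invariance of the real span $\T_P$. That follows because the maps are linear and $\T_P$ is spanned by $\F_P$. The identification of $\X_P$ with $\T_P^\perp$ is then used only to fix coordinates, not to assume that $\T_P^\perp$ itself is invariant.

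The case $t_n=t_{n+1}$ cannot occur. Each cycle has positive duration by Eq.~\eqref{eq:nonzeno-cycle-duration}, so the intervals $[t_n,t_{n+1})$ are nonempty and together cover all $t\ge0$.
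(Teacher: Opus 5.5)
Your proof is correct and follows the paper's argument. The paper obtains the lemma by instantiating Proposition~\ref{prop:finite-search-test} with $V(\rho)=\|z(\rho)\|_M$, $\psi(v)=K_{\rm out}v$, $Y=K_{\rm in}\dtr(\rho(0),\F_P)$ and $\eta=b=0$; that proposition's proof is exactly the per-cycle iteration, prefix bound, and norm conversion you carry out directly, and your spanning argument (face invariance implies invariance of $\T_P$, so induced quotient maps exist and compose) makes explicit a step the paper only asserts.
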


\begin{proof}
In Proposition~\ref{prop:finite-search-test}, set
$V(\rho)=\|z(\rho)\|_M$, $\psi(v)=K_{\rm out}v$,
$Y=K_{\rm in}\dtr(\rho(0),\F_P)$, and $\eta=b=0$.
Equation~\eqref{eq:norm-conversions} supplies the distance conversion and
initial bound.  Path compatibility makes every cycle and prefix act on the
quotient; their bounds in the common metric give
Eq.~\eqref{eq:search-progress-data} with the same $q,B$ for every admitted
cycle.  Thus $z_\infty=0$ and $z_n=q^nY$ give the claimed bound for every
admitted cycle sequence, including cycle-varying and correlated choices.
The convention $q^0=1$ also covers $q=0$.
\end{proof}

Several elementary examples explain why the common metric and product bounds
are necessary.  Let $E_{ij}=e_ie_j^{\mathsf T}$ be the $2\times2$ matrix units
in the standard basis $e_1,e_2$ of $\mathbb R^2$.  First, set
\begin{equation}
 G_1=-I+KE_{12},\qquad G_2=-I-KE_{21},\qquad K>0.
\label{eq:coarse-word-generators}
\end{equation}
Their average $A_L=(G_1+G_2)/2$ satisfies
$A_L^\mathsf{T}+A_L=-2I$, but at $K=10$ the unit-exposure word obeys
\begin{equation}
 \left\|e^{G_2/2}e^{G_1/2}\right\|_2>9.
\label{eq:coarse-word-amplification}
\end{equation}
Indeed, its action on the second basis vector has norm
$e^{-1}\sqrt{601}>9$.  A static Hurwitz margin therefore does not control a
coarse finite word without commutator, path, and amplification bounds.

Second, the abstract matrices
\begin{equation}
 A=\begin{pmatrix}1/2&1\\0&1/2\end{pmatrix},\qquad
 B=\begin{pmatrix}1/2&0\\1&1/2\end{pmatrix}
\label{eq:pointwise-product-example}
\end{equation}
each have spectral radius $1/2$, whereas
$\rho(BA)=(3+2\sqrt2)/4>1$.  This is a linear-algebra counterexample to
pointwise spectral-radius reasoning, not a finite-time invertible GKLS
realization.  The physical GKLS leakage example
\eqref{eq:leakage-jump-example} and the qutrit gain-box obstruction in
Proposition~\ref{prop:common-ground-radius-one-obstruction} supply the
corresponding open-system boundaries.

\subsection{Timing domains and resource branches}

For the first-order direct schedule, requiring
$t_j=\lambda_jh/r_j\ge\tau_j$ gives Eq.~\eqref{eq:first-order-dwell}.  In the
cut-open symmetric schedule the central label appears once with full exposure,
whereas the remaining labels appear in two half arcs, so
\begin{equation}
 h_{\min,2}^{\rm dwell}=\max\left\{
 \frac{r_{\pi_m}\tau_{\pi_m}}{\lambda_{\pi_m}},
 \max_{k<m}\frac{2r_{\pi_k}\tau_{\pi_k}}{\lambda_{\pi_k}}
 \right\}.
\label{eq:strang-dwell-lower}
\end{equation}
Fixed-duration transitions still contribute to the physical cycle and actual
monodromy.  Pulse amplitude, slew, cooldown, and calibration constraints may
add further positive lower endpoints; their maximum is
Eq.~\eqref{eq:physical-hmin}.

For the direct schedule, one sufficient open interval $h_-<h<h_+$ follows if
there are $0<\theta<1$ and an interstitial bound
$\overline\kappa_+$ such that
\begin{equation}
 \overline\kappa_+\overline d_\nu(h_-)\le1-\theta,
 \qquad
 \overline\kappa_+\overline E_o(h_+)<\theta.
\label{eq:timing-interval-receipt}
\end{equation}
Monotonicity of the chosen bounds then proves contraction throughout the
interval.  A pointwise exact calculation instead certifies only the selected
word.  Finite support alone supplies no uniform lower endpoint: in
Eq.~\eqref{eq:small-weight-family}, $h_{\min}=\tau/\delta$.

In the exact robust branch, Eq.~\eqref{eq:robust-all-time-bound} reduces the
tolerance calculation to $K_{\rm eff}d_0q_*^N\le\eps$.  Its four elementary
cases give Eq.~\eqref{eq:cycle-count-branches}.  In the leaky branch,
Eq.~\eqref{eq:primitive-practical-tube} equals
$R_\infty+q_*^n(D_0-R_\infty)$ after substituting
Eqs.~\eqref{eq:practical-radius} and \eqref{eq:leaky-cycle-envelope}.  Solving
this scalar inequality gives Eq.~\eqref{eq:leaky-cycle-count} for
$\eps>R_\infty$.  At $\eps=R_\infty$, the sign of $D_0-R_\infty$
and whether $q_*=0$ give the boundary cases stated there.

For a cold closed cycle, componentwise resources compose as
\begin{equation}
 C(N)=C_{\rm init}+NC_{\rm cyc}+C_{\rm final}.
\label{eq:cold-cycle-account}
\end{equation}
For $N\ge1$ repetitions of a cut-open body with a separate seam,
\begin{equation}
 C(N)=C_{\rm init}+NC_{\rm body}+(N-1)C_{\rm seam}+C_{\rm final}.
\label{eq:seamed-cycle-account}
\end{equation}
For $N=0$, no body or seam is executed; the cost consists only of any
initialization and finalization that are actually performed.
Switches are counted on the complete physical-signature sequence; algebraic
cancellation of ideal endpoints does not remove a physical edge.  Fixed
per-cycle overhead also changes the fast-switching limit.  If
$T_{\rm cyc}(h)=\tau_0+ah+o(h)$,
$q(h)=1-\nu h+o(h)$, and $\tau_0>0$, then
\begin{equation}
 \frac{T_{\rm cyc}(h)}{-\log q(h)}
 =\frac{\tau_0}{\nu h}+\frac{a}{\nu}+o(1/h),
\label{eq:fast-switching-overhead}
\end{equation}
so the coefficient diverges as $h\downarrow0$.

\subsection{Stroboscopic and intraface limits}

For a rank-one target, suppose the period map fixes $P$ and contracts boundary
deviations.  At a fixed within-cycle phase $s$, continuity of the prefix gives
\begin{equation}
 \rho(nT_{\rm cyc}+s)\longrightarrow
 \rho_\star(s)=\Psi(s,0)P.
\label{eq:periodic-orbit-limit}
\end{equation}
This orbit is constant exactly when every prefix fixes $P$.

For a higher-rank face, suppose the cycle maps preserve $\F_P$ and their
transverse coordinates converge to zero for every admitted cycle sequence.
Fix a selected state $\sigma_\star\in\F_P$.  For the state $\rho_n$ after
$n$ cycles, put $z_n=z(\rho_n)$ and
\[
 Y_n=P\rho_nP-\tr(P\rho_n)\sigma_\star.
\]
This trace-zero operator is defined even when $\tr(P\rho_n)=0$.
Let $y_n^F$ be its coordinates in a fixed real Hermitian basis of
$\{Y\in\T_P:\tr Y=0\}$.  An additional intraface certificate consists of a
fixed positive-definite metric $M_F$ on this coordinate space and constants
$0\le q_F<1$ and $0\le g<\infty$ such that
\begin{equation}
 \|y_{n+1}^F\|_{M_F}\le q_F\|y_n^F\|_{M_F}+g\|z_n\|_M
\label{eq:intraface-certificate}
\end{equation}
holds for every density-operator input and every admitted cycle map, with
the same metric and constants.  Since $z_n\to0$, iteration gives
$y_n^F\to0$.  The identity
\[
 \rho_n-\sigma_\star
 =Y_n+Q_{\rho_n}-\tr(Q_{\rho_n})\sigma_\star
\]
then proves convergence to $\sigma_\star$ at cycle boundaries.  All-time
selection further requires every physical prefix to fix $\sigma_\star$ and
uniform bounds for the induced intraface prefixes.

The practical radius in Eq.~\eqref{eq:practical-radius} is an upper bound and
need not be attained.  For a rank-one target state $P$, a two-sided
fixed-point estimate requires a cycle channel $\Phi$ and a constant
$0\le q<1$ satisfying the stronger global contraction
\begin{equation}
 \dtr(\Phi(\rho),\Phi(\tau))\le q\,\dtr(\rho,\tau)
 \quad\text{for all states }\rho,\tau.
\end{equation}
If $\sigma=\Phi(\sigma)$ and $\eta=\dtr(\Phi(P),P)$, then
\begin{equation}
 \frac{\eta}{1+q}\le\dtr(\sigma,P)\le\frac{\eta}{1-q}.
\label{eq:tube-two-sided}
\end{equation}

\section{Qutrit calculations}
\label{app:qutrit-calculations}

For convenience, recall the normalized Hermitian coherences
\begin{equation}
 X_{jk}=\frac{|j\rangle\langle k|+|k\rangle\langle j|}{\sqrt2},
 \qquad
 Y_{jk}=\frac{-i(|j\rangle\langle k|-|k\rangle\langle j|)}{\sqrt2}.
\label{eq:qutrit-hermitian-basis}
\end{equation}
Together with the three populations they form an orthonormal Hermitian basis.

\subsection{Common-ground word}

For a decay exposure $\alpha$, the quotient action of
$e^{\alpha A}$ from Eq.~\eqref{eq:common-ground-generators} is diagonal on
the basis used in Theorem~\ref{thm:common-ground-word}:
\begin{equation}
 \operatorname{diag}\left(e^{-\alpha},1,
 e^{-\alpha/2},e^{-\alpha/2},1,1,
 e^{-\alpha/2},e^{-\alpha/2}\right).
\label{eq:single-common-ground-decay}
\end{equation}
The excited-state swap exchanges $P_1$ with $P_2$, exchanges the $01$ and
$02$ coherence pairs, and rotates the $12$ pair orthogonally.  Conjugating the
second raw decay by the swap therefore supplies the missing common-ground
branch.  With $\alpha=2\log(5/4)$, one has
$e^{-\alpha}=16/25$ and $e^{-\alpha/2}=4/5$.  Multiplication gives
Eq.~\eqref{eq:common-ground-noiseless-monodromy}.

The dephasing $\Diss{P_0}$ commutes with the excited-subspace swap.  At rate
$2\log(4/3)$ it leaves populations and the $12$ coherence unchanged and gives
each ground--excited coherence the survival factor $3/4$ per unit-duration
segment.  Across four segments this contributes $(3/4)^4=81/256$; multiplying
by the damping factor $4/5$ gives $81/320$ in
Eq.~\eqref{eq:common-ground-always-on-monodromy}.

These diagonal actions also provide the two static Lyapunov certificates used by
the robust instance.  In the transverse basis of
Theorem~\ref{thm:common-ground-word}, set
\begin{equation}
 G_{\rm cg}=\operatorname{diag}
 \left(-1,-1,-\tfrac12,-\tfrac12,-\tfrac12,-\tfrac12,-1,-1\right).
\label{eq:common-ground-static-generator}
\end{equation}
For the no-dephasing branch take
\begin{equation}
 A_L^{\rm n}=\alpha G_{\rm cg},\qquad M=I_8,\qquad
 \nu_{\rm n}=\frac\alpha2.
\label{eq:common-ground-static-no-dephasing}
\end{equation}
For the always-on branch define
\begin{equation}
\begin{aligned}
 Z_q&=\operatorname{diag}
 \left(0,0,-\tfrac12,-\tfrac12,-\tfrac12,-\tfrac12,0,0\right),\\
 A_L^{\rm ao}&=\alpha G_{\rm cg}+4\lambda Z_q,\qquad
 \nu_{\rm ao}=\alpha.
\end{aligned}
\label{eq:common-ground-static-always-on}
\end{equation}
Since $(4/3)^4>5/4$, one has $4\lambda>\alpha$.  Thus, for
$b\in\{\mathrm n,\mathrm{ao}\}$,
\begin{equation}
 (A_L^b)^\dagger+A_L^b\preceq-2\nu_bI_8,\qquad
 e^{-\nu_{\rm n}}=\frac45,\qquad e^{-\nu_{\rm ao}}=\frac{16}{25}.
\label{eq:common-ground-static-receipts}
\end{equation}
The two reference words are exactly $e^{A_L^{\rm n}}$ and
$e^{A_L^{\rm ao}}$, respectively, so their nominal synthesis error is zero.

The triangular integrals are
\begin{equation}
 \int_0^1a(s)\,ds=2\log(5/4),\qquad
 \int_0^1u(s)\,ds=\frac\pi2.
\label{eq:qutrit-waveform-integrals}
\end{equation}
The first has peak $4\log(5/4)$ and absolute slope at most
$8\log(5/4)$; the second has peak $\pi$ and absolute slope at most $2\pi$.
Self-commutation along each path makes the endpoint depend only on these
integrals.  The swap Hamiltonian is block diagonal with respect to
$\spann\{|0\rangle\}\oplus\spann\{|1\rangle,|2\rangle\}$, so
Eq.~\eqref{eq:unitary-stabilizer-path} holds throughout the pulse.

We next prove the uniform gain enclosure.  Put $\rho=1/20$ and
\begin{equation}
 x=e^{\rho\alpha}=(25/16)^{1/20}<\frac{131}{128}.
\label{eq:qutrit-root-enclosure}
\end{equation}
The last inequality is the integer comparison
$16\cdot131^{20}>25\cdot128^{20}$.  For either damping hold, the largest
induced-$2$ endpoint difference over $m\in[1-\rho,1+\rho]$ occurs in a
population coordinate and satisfies
\begin{equation}
 \delta_D=\frac{16}{25}(x-1)<\frac3{200}.
\label{eq:qutrit-hold-error}
\end{equation}
Here the physical and reference clocks coincide and their normalized complete
generators differ by $(m-1)a(s)A^{\rm tr}$.  Substitution in
Eq.~\eqref{eq:primitive-duhamel-identity} evaluates to the displayed diagonal
endpoint difference because these generators commute along the path.  The
coherence difference is smaller.  Both directed pulses are exact in this
uncertainty family, so $\delta_2=\delta_4=0$.  Every reference and actual block,
including the known dephasing, is a Hilbert--Schmidt contraction; hence
$c_k=r_k=1$ and every partial factor has norm at most one.  The fully
amplified telescope therefore gives
\begin{equation}
 E_{\rm hw}=\delta_1+\delta_3<\frac3{100},
 \qquad B_{\rm rob}=1.
\label{eq:qutrit-amplified-error}
\end{equation}
At $h=1$ the reference word exactly equals the corresponding commuting target
semigroup, so $E_{\rm syn}=0$.  Combining
Eqs.~\eqref{eq:common-ground-noiseless-monodromy},
\eqref{eq:common-ground-always-on-monodromy}, and
\eqref{eq:qutrit-amplified-error} yields
\begin{equation}
 \frac45+\frac3{100}=\frac{83}{100},\qquad
 \frac{16}{25}+\frac3{100}=\frac{67}{100},
\label{eq:qutrit-robust-arithmetic}
\end{equation}
with the strict slacks stated in
Proposition~\ref{prop:common-ground-robust}.  Setting $\rho=0$ makes
$\delta_D=0$ and recovers the two reference contractions.

To track every partial segment, put $\ell=\log(5/4)$ and, for local segment
time $0\le\tau\le1$, define
\begin{equation}
\begin{aligned}
 \beta(\tau)&=
 \begin{cases}
  4\ell\tau^2,&0\le\tau\le1/2,\\
  2\ell-4\ell(1-\tau)^2,&1/2\le\tau\le1,
 \end{cases}\\
 \vartheta(\tau)&=
 \begin{cases}
  \pi\tau^2,&0\le\tau\le1/2,\\
  \pi/2-\pi(1-\tau)^2,&1/2\le\tau\le1,
 \end{cases}\\
 d_\tau&=e^{-\beta(\tau)/2}.
\end{aligned}
\label{eq:common-ground-partial-parameters}
\end{equation}
Thus $\chi:=d_1=4/5$.  The partial raw-decay map has quotient matrix
\begin{equation}
 D_\tau=\operatorname{diag}
 (d_\tau^2,1,d_\tau,d_\tau,1,1,d_\tau,d_\tau).
\label{eq:common-ground-partial-decay}
\end{equation}
Let $R_\theta$ be the quotient action of conjugation by
$U_\theta=e^{i\theta S_{12}}$, and write $c=\cos\theta$ and $s=\sin\theta$.
Its action on the population and excited-coherence block is
\begin{equation}
\begin{aligned}
 &R_\theta(P_1)=c^2P_1+s^2P_2+\sqrt2csY_{12},\\
 &R_\theta(P_2)=s^2P_1+c^2P_2-\sqrt2csY_{12},\\
 &R_\theta(X_{12})=X_{12},\\
 &R_\theta(Y_{12})=\frac{\sin(2\theta)}{\sqrt2}(P_2-P_1)
                   +\cos(2\theta)Y_{12},
\end{aligned}
\label{eq:common-ground-partial-swap-population}
\end{equation}
and on the ground-coherence block it is
\begin{equation}
\begin{aligned}
 R_\theta(X_{01})&=c^2X_{01}-csY_{01}+s^2X_{02}+csY_{02},\\
 R_\theta(Y_{01})&=csX_{01}+c^2Y_{01}-csX_{02}+s^2Y_{02},\\
 R_\theta(X_{02})&=s^2X_{01}+csY_{01}+c^2X_{02}-csY_{02},\\
 R_\theta(Y_{02})&=-csX_{01}+s^2Y_{01}+csX_{02}+c^2Y_{02}.
\end{aligned}
\label{eq:common-ground-partial-swap-coherences}
\end{equation}
These equations display an orthogonal matrix in the normalized
Hilbert--Schmidt basis.  With $J=R_{\pi/2}=R_{-\pi/2}$, the four noiseless
prefix families are
\begin{equation}
 F_1(\tau)=D_\tau,\qquad F_2(\tau)=R_{\vartheta(\tau)}D_1,\qquad
 F_3(\tau)=D_\tau JD_1,\qquad F_4(\tau)=R_{-\vartheta(\tau)}D_1JD_1.
\label{eq:common-ground-prefix-families}
\end{equation}
The return edge has a separate controller label and follows the opposite
unitary path.  Its endpoint channel equals that of the forward edge because
the swap is involutory.  Direct block
multiplication gives
\begin{equation}
 \|F_1(\tau)\|_{2\to2}=\|F_2(\tau)\|_{2\to2}=1,\qquad
 \|F_3(\tau)\|_{2\to2}=d_\tau,\qquad \|F_4(\tau)\|_{2\to2}=\chi.
\label{eq:common-ground-prefix-norms}
\end{equation}
For instance, the population block of $F_3$ is
$\begin{psmallmatrix}0&d_\tau^2\\\chi^2&0\end{psmallmatrix}$; each paired
ground-coherence block is
$\begin{psmallmatrix}0&d_\tau\\\chi&0\end{psmallmatrix}$; and the $12$ block has
magnitude $d_\tau\chi$.  Orthogonality of the leftmost factor in $F_4$ reduces
its norm to that of $D_1JD_1$.  At the half swap,
\[
 R_{\pi/4}\!\left(\frac45Y_{12}\right)
 =\frac{4}{5\sqrt2}(P_2-P_1),
\]
which has squared norm $16/25$, as stated in
Eq.~\eqref{eq:common-ground-prefix-witness}.  Equations
\eqref{eq:common-ground-prefix-families}--\eqref{eq:common-ground-prefix-norms}
therefore prove the sharp value $B_{\rm word}=1$ for $M=I_8$.

For the always-on model, the accumulated dephasing over physical time $t$ is
\begin{equation}
 Z_t=\operatorname{diag}
 \left(1,1,(3/4)^t,(3/4)^t,(3/4)^t,(3/4)^t,1,1\right).
\label{eq:common-ground-dephasing-prefix}
\end{equation}
It commutes with $D_\tau$ and $R_\theta$.  During segment $j$ at local time
$\tau$, the prefix is
$\widetilde F_j(\tau)=Z_{j-1+\tau}F_j(\tau)$, and another block calculation
gives
\begin{equation}
 \|\widetilde F_1\|_{2\to2}=\|\widetilde F_2\|_{2\to2}=1,\qquad
 \|\widetilde F_3\|_{2\to2}=d_\tau^2,\qquad
 \|\widetilde F_4\|_{2\to2}=\chi^2.
\label{eq:common-ground-dephasing-prefix-norms}
\end{equation}
At $\tau=1$, $F_4(1)=S_{\rm n}$ and
$\widetilde F_4(1)=S_{\rm ao}$, so these families also recover both complete
monodromies in Theorem~\ref{thm:common-ground-word}.

The norm-conversion constants are analytic.  If $R_\rho$ is the
Hilbert--Schmidt-orthogonal transverse reconstruction and
$\Delta=\rho-P_0=R_\rho-\tr(R_\rho)P_0$, then
\[
 \|R_\rho\|_2\le\|\Delta\|_1=2\dtr(\rho,P_0),\qquad
 \dtr(\rho,P_0)\le\frac32\|R_\rho\|_2.
\]
The first inequality uses nonexpansiveness of orthogonal projection; the
second follows from
$\|\Delta\|_2\le\sqrt3\|R_\rho\|_2$ and
$\|\Delta\|_1\le\sqrt3\|\Delta\|_2$.
Equivalently, for $M=I_8$ the certificate inequalities are saturated:
$K_{\rm in}^2I-4M\succeq0$ at $K_{\rm in}=2$ and
$K_{\rm out}^2M-\tfrac94I\succeq0$ at $K_{\rm out}=3/2$.
Together with $B_{\rm word}=1$ these give $K_{\rm eff}=3$, proving all
constants in Eq.~\eqref{eq:common-ground-prefix-data}.

For the distinct metric in
Proposition~\ref{prop:prefix-bound-noninheritance}, the quotient matrix
$G_{\rm cg}$ in Eq.~\eqref{eq:common-ground-static-generator} satisfies
\[
 G_{\rm cg}^{\mathsf T}M_\star+M_\star G_{\rm cg}
 =\operatorname{diag}\bigl(-8,-2,-1,-1,-1,-1,-2,-2\bigr)\prec0.
\]
Partial $A$ and $C$ decays are diagonal contractions in $M_\star$, so the
direct cycle has sharp prefix bound one.
Every quotient prefix of the physical word is a product of Hilbert--Schmidt
contractions and is therefore a Hilbert--Schmidt contraction.  Since the
condition number of $M_\star$ is four, its induced $M_\star$-norm is at most
two.  The prefix ending at the forward swap sends the unchanged input $P_2$
to $P_1$, and
\[
 \frac{\|P_1\|_{M_\star}}{\|P_2\|_{M_\star}}=2.
\]
This proves sharpness.  The population witness is untouched by the always-on
dephasing, which itself is a Hilbert--Schmidt contraction.

The signature sequence in Eq.~\eqref{eq:common-ground-signatures} has five
switches.  Each of its two directed edges contributes to both the pulse and
transition coordinates and only once to the aggregate primitive count.  Applying
Eq.~\eqref{eq:cycle-count-branches} with $K_{\rm eff}=3$ gives
\begin{equation}
\begin{aligned}
 \left\lceil\frac{\log(3000)}{-\log(83/100)}\right\rceil&=43,&
 \left\lceil\frac{\log(3000)}{-\log(67/100)}\right\rceil&=20,\\
 \left\lceil\frac{\log(3000)}{-\log(4/5)}\right\rceil&=36,&
 \left\lceil\frac{\log(3000)}{-\log(16/25)}\right\rceil&=18.
\end{aligned}
\label{eq:common-ground-cycle-arithmetic}
\end{equation}
The integer immediately below each displayed value fails the corresponding
scalar inequality.  Multiplying the robust counts by four primitive calls,
five switches, scalar cost six, and control action $\pi$ gives every discrete
entry in Table~\ref{tab:common-ground-costs}.  The gain-box maximum raw exposure
per cycle is
$2(21/20)\alpha=(21/10)\alpha$, hence the totals are
$(903/10)\alpha$ and $42\alpha$.  Known dephasing acts for four unit intervals
per cycle, giving total exposure $80\lambda$ in the twenty-cycle branch.

The obstruction on the larger gain box is established in
Proposition~\ref{prop:common-ground-radius-one-obstruction}.

\subsection{Static and algebraic data for the cascade ladder}

\begin{proposition}[Cascade static certificate]
\label{prop:ladder-certificate}
The ladder has the positive-gauge data
\[
\begin{aligned}
 K_A=K_B&=\{I_3\},& E_H&=\{0\},\\
 \phi(A_\ell)&=\phi(B_\ell)=\mu=3,\\
 s&=m=2,& \lambda_A&=\lambda_B=1.
\end{aligned}
\]
There is no leakage from $P_0$.  Write $L_\ell^{\rm tr}$ for the matrix induced
by $L_\ell$ on the transverse quotient.  In the transverse basis of Theorem~\ref{thm:common-ground-word},
$A_\star:=L_\ell^{\rm tr}$ is
\begin{equation}
 A_\star=
 \begin{pmatrix}-1&1\\0&-1\end{pmatrix}
 \oplus\left(-\frac12I_4\right)\oplus(-I_2).
\label{eq:ladder-transverse-matrix}
\end{equation}
It admits the strict Lyapunov data
\begin{equation}
\begin{aligned}
 M&=\begin{pmatrix}1/2&1/4\\1/4&3/4\end{pmatrix}
 \oplus I_4\oplus\tfrac12I_2,\\
 m_-&=\tfrac14,\qquad m_+=2,\qquad
 \nu=\tfrac1{16},\qquad q_0=\tfrac14,
\end{aligned}
\label{eq:ladder-lyapunov-data}
\end{equation}
with
\[
 A_\star^{\mathsf T}M+MA_\star=-I_8.
\]
The associated norm-conversion constants are
\[
 K_2:=\sqrt{m_+/m_-}=2\sqrt2,\qquad
 K_{\rm tr}:=2\sqrt3\,K_2=4\sqrt6.
\]
\end{proposition}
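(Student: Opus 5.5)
The proof reduces to finitely many exact computations, done in this order: the gauge data, the no-leakage check, the transverse matrix, and finally the Lyapunov identity and its constants.

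\textbf{Gauge data.} With $K_A=K_B=\{I_3\}$ and $E_H=\{0\}$, the atom set is $S=\{A_\ell,B_\ell\}$, which is compact. For the gauge I take $\phi=-\operatorname{Tr}_{\rm superop}$, as in the qubit example. For a single dissipator $\Diss{|i\rangle\langle j|}$ with $i\ne j$, the superoperator trace is computed on the matrix-unit basis $E_{kl}$:
\begin{itemize}
\item $E_{jj}\mapsto E_{ii}-E_{jj}$ contributes $-1$;
\item $E_{kj}$ and $E_{jk}$ with $k\ne j$ each contribute $-1/2$, giving $2(d-1)=4$ entries;
\item $E_{jj}$ receives an extra $-1$ from the anticommutator, so it is counted as $-1$ once, not $-3/2$.
\end{itemize}
The total is $-1-4\cdot\tfrac12=-3$, so $\phi=3=\mu$ for both atoms. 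Since $A_\ell$ and $B_\ell$ are linearly independent, $s=2$, and $L_\ell=A_\ell+B_\ell$ uses $m=2$ atoms with $\lambda_A=\lambda_B=1$.

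\textbf{No leakage.} The jump operators $|0\rangle\langle1|$ and $|1\rangle\langle2|$ both satisfy $QVP_0=0$: the first maps into $|0\rangle$, and the second annihilates $|0\rangle$. Each $V^\dagger V$ ($P_1$ and $P_2$ respectively) has zero $QP_0$ block, and $H=0$. Lemma~\ref{lem:gkls-face} therefore gives invariance, which establishes Eq.~\eqref{eq:static-no-leakage}.

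\textbf{Transverse matrix.} I compute $L_\ell$ on the basis $(P_1,P_2,X_{01},Y_{01},X_{02},Y_{02},X_{12},Y_{12})$ modulo $\T_{P_0}$.
\begin{itemize}
\item Populations: $L(P_2)=P_1-P_2$, and $L(P_1)=P_0-P_1\equiv-P_1$. In the ordering used for matrix columns, this gives the block $\begin{psmallmatrix}-1&1\\0&-1\end{psmallmatrix}$ (column for $P_2$ has a $1$ in the $P_1$ row).
\item $01$ coherences: $|0\rangle\langle1|$ decays at rate $\tfrac12$ under $A_\ell$ and is untouched by $B_\ell$, since $B_\ell$'s $V^\dagger V=P_2$. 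This gives $-\tfrac12$.
\item $02$ coherences: $B_\ell$ gives rate $\tfrac12$, and the cross term $V\rho V^\dagger$ vanishes. This also gives $-\tfrac12$.
\item $12$ coherences: rate $\tfrac12+\tfrac12=1$. The sandwich terms $V|1\rangle\langle2|V^\dagger$ vanish for both jumps, since $\langle2|(|1\rangle\langle2|)^\dagger=\langle 1|$ acts on the wrong index, and likewise for the other jump.
\end{itemize}
This yields Eq.~\eqref{eq:ladder-transverse-matrix}, after confirming the orientation of the off-diagonal entry from $L(P_2)$.

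\textbf{Lyapunov identity.} The identity $A_\star^{\mathsf T}M+MA_\star=-I_8$ is checked block by block.
\begin{itemize}
\item On $-\tfrac12I_4$ with $M=I_4$, and on $-I_2$ with $M=\tfrac12I_2$, it is immediate.
\item On the $2\times2$ block, with $A=\begin{psmallmatrix}-1&1\\0&-1\end{psmallmatrix}$ and $M=\begin{psmallmatrix}1/2&1/4\\1/4&3/4\end{psmallmatrix}$, I have $MA=\begin{psmallmatrix}-1/2&1/4\\-1/4&-1/2\end{psmallmatrix}$. Adding its transpose gives $\begin{psmallmatrix}-1&0\\0&-1\end{psmallmatrix}$.
\end{itemize}

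\textbf{Constants.} The eigenvalues of the $2\times2$ block of $M$ are $(5\pm\sqrt5)/8\in(0.34,0.91)$, so all eigenvalues of $M$ lie in $[0.34,1]$. Hence $\tfrac14I\prec M\prec 2I$ holds strictly. For Eq.~\eqref{eq:static-lyapunov} I need $-I\preceq-2\nu M-q_0I$, which is $2\nu M\preceq\tfrac34I$. With $\nu=\tfrac1{16}$ this reads $\tfrac18M\preceq\tfrac34I$, true since $M\preceq I$.

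The remaining constants are arithmetic: $K_2=\sqrt{2/(1/4)}=2\sqrt2$ and $K_{\rm tr}=2\sqrt3K_2=4\sqrt6$ with $d=3$. The only real obstacle is bookkeeping: the superoperator trace count and the orientation of the population block. I would double-check both by direct matrix-unit evaluation.
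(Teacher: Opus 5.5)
Your proposal is correct and follows the same route as the paper's proof. Both evaluate $L_\ell$ directly on the populations and coherences to get no leakage and the block transverse matrix, verify $A_\star^{\mathsf T}M+MA_\star=-I_8$ and $\tfrac34 I-\tfrac18 M\succeq0$ block by block, and take $\phi=-\operatorname{Tr}_{\rm superop}$ to get $\phi(A_\ell)=\phi(B_\ell)=3$; your eigenvalues $(5\pm\sqrt5)/8$ for $m_\pm$ and your linear-independence argument for $s=2$ only supply details the paper leaves implicit.
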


\begin{proof}
For $L_\ell=A_\ell+B_\ell$, direct action gives
\[
 L_\ell(P_1)=P_0-P_1,\qquad L_\ell(P_2)=P_1-P_2,
\]
multiplies each of $X_{01},Y_{01},X_{02},Y_{02}$ by $-1/2$, and multiplies
$X_{12},Y_{12}$ by $-1$.  Thus $P_0$ is fixed, there is no leakage from the
target, and the quotient matrix is Eq.~\eqref{eq:ladder-transverse-matrix}.
Its characteristic polynomial is $(z+1)^4(z+1/2)^4$.  Direct multiplication
with Eq.~\eqref{eq:ladder-lyapunov-data} gives
\[
\begin{aligned}
 A_\star^{\mathsf T}M+MA_\star&=-I_8,\\
 -\bigl(A_\star^{\mathsf T}M+MA_\star+2\nu M\bigr)-q_0I_8
 &=\frac34I_8-\frac18M\succ0,
\end{aligned}
\]
which verifies the stated Lyapunov margin.  The two orbit sets are singletons,
$E_H=0$, the weights are one, and minus the real superoperator trace gives
$\phi(A_\ell)=\phi(B_\ell)=3$.  Finally,
$K_2=\sqrt{m_+/m_-}=2\sqrt2$ and
$K_{\rm tr}=2\sqrt3\,K_2=4\sqrt6$.
\end{proof}

For a system operator $X$, direct action gives
\begin{equation}
 [A_\ell,B_\ell](X)=X_{22}(P_0-P_1),\qquad
 \|[A_\ell,B_\ell]\|_\diamond=2.
\label{eq:ladder-diamond-commutator}
\end{equation}
The amplified map extracts the $22$ block and tensors it with $P_0-P_1$;
its trace norm is at most twice the input trace norm, and a state supported on
level two attains the bound.

\subsection{Factorization, convergence, and rounded phase}

Under $e^{tL_\ell}$, the populations satisfy
\begin{equation}
 p_2(t)=e^{-t}p_2(0),\qquad
 p_1(t)=e^{-t}[p_1(0)+tp_2(0)].
\label{eq:ladder-populations}
\end{equation}
The factorization and minimum factor count are proved in
Theorem~\ref{thm:ladder-factorization} by matching these population
coefficients and the coherence factors.

Positivity gives
$\|[\rho_{01}(0),\rho_{02}(0)]\|_2\le1/2$.  Combining this with
Eq.~\eqref{eq:ladder-populations} yields
\begin{equation}
 \dtr(e^{tL_\ell}\rho,P_0)
 \le e^{-t}(1+t)+\frac12e^{-t/2}<2e^{-t/2}.
\label{eq:ladder-semigroup-error}
\end{equation}
The factorization proves the bound at the strobes.  CPTP contractivity
between strobes and
$\lfloor t/\mathcal P(T)\rfloor\ge t/\mathcal P(T)-1$ give
Eq.~\eqref{eq:ladder-all-time-envelope}.  The initial state
$(|0\rangle+|1\rangle)/\sqrt2$ gives a lower stroboscopic decay
$e^{-kT/2}/2$, so the displayed wall-time exponent is sharp for this
schedule.

\begin{proposition}[Rational observation phase]
\label{prop:rational-phase}
Choose $\widehat y\in\mathbb Q\cap(0,T)$, set
$\widehat x=T-\widehat y$ and
$\Delta_y=\widehat y-y(T)$, and define
$\widehat\Phi_T=e^{\widehat yA_\ell}e^{TB_\ell}
e^{\widehat xA_\ell}$.  Then
\begin{equation}
 \ddia\!\left(\widehat\Phi_T,e^{T(A_\ell+B_\ell)}\right)
 =Te^{-T}|e^{-\Delta_y}-1|
 \le Te^{-T}e^{|\Delta_y|}|\Delta_y|.
\label{eq:ladder-rounded-phase}
\end{equation}
Writing
\[
 r_{\widehat y}=e^{-\widehat y}(1-e^{-T}),\qquad
 K_{\widehat y,T}=\frac32+\frac{2r_{\widehat y}}{e\,Te^{-T}},
\]
the rounded period map remains GAS and obeys
\begin{equation}
 \mathcal E_{\rm wc}(\widehat\Phi_T^k)
 \le K_{\widehat y,T}e^{-kT/2}.
\label{eq:rounded-stroboscopic-envelope}
\end{equation}
At arbitrary wall time its physical evolution satisfies
\begin{equation}
 \mathcal E_{\rm wc}(\widehat\Phi_t)
 \le K_{\widehat y,T}e^{T/2}
 \exp\!\left[-\frac{Tt}{2\mathcal P(T)}\right].
\label{eq:rounded-all-time-envelope}
\end{equation}
The period, exposures, target, active-arc and switch counts, and exponent are
unchanged, but the endpoint equals the ideal channel only when $\Delta_y=0$.
\end{proposition}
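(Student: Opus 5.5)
The plan is to compute everything in the matrix-unit basis with the same bookkeeping used in the proof of Theorem~\ref{thm:ladder-factorization}. Both $A_\ell$ and $B_\ell$ act diagonally on the off-diagonal matrix units, and the factors depend only on total exposures. Since $\widehat y+\widehat x=T=y(T)+x(T)$, the $01$, $02$ and $12$ coherence factors of $\widehat\Phi_T$ and of $e^{T(A_\ell+B_\ell)}$ coincide ($e^{-T/2}$, $e^{-T/2}$, $e^{-T}$). Among populations, $p_2\mapsto e^{-T}p_2$ and $p_1\mapsto e^{-T}p_1$ also agree, because level one sees total $A_\ell$ exposure $T$ and no feeding from level two in that coefficient.

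The only differing coefficient is the $2\to1$ transfer. For $\widehat\Phi_T$ it is $r_{\widehat y}=e^{-\widehat y}(1-e^{-T})$, whereas the exact channel gives $Te^{-T}=e^{-y(T)}(1-e^{-T})$. Hence the difference is $Te^{-T}(e^{-\Delta_y}-1)$. Trace preservation puts the opposite amount on $p_0$. Therefore
\[
 \widehat\Phi_T-e^{T(A_\ell+B_\ell)}:\ X\mapsto Te^{-T}(e^{-\Delta_y}-1)X_{22}(P_1-P_0).
\]
This has exactly the structure of Eq.~\eqref{eq:ladder-diamond-commutator}, so its full diamond norm is $2Te^{-T}|e^{-\Delta_y}-1|$ by the same extraction argument. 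The half diamond distance gives the equality in Eq.~\eqref{eq:ladder-rounded-phase}. The inequality is the mean-value bound $|e^{-\Delta}-1|\le e^{|\Delta|}|\Delta|$.

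For the stroboscopic envelope I would iterate the population recursion
\[
 p_2^{(k+1)}=e^{-T}p_2^{(k)},\qquad p_1^{(k+1)}=e^{-T}p_1^{(k)}+r_{\widehat y}p_2^{(k)}.
\]
Solving it gives $p_1^{(k)}=e^{-kT}p_1^{(0)}+k\,r_{\widehat y}e^{-(k-1)T}p_2^{(0)}$. The coherences decay as $e^{-kT/2}$ or faster. The worst-case error is bounded by the excited population plus the norm of the ground coherences, exactly as in Eq.~\eqref{eq:ladder-semigroup-error}. The terms $e^{-kT}(p_1+p_2)\le e^{-kT/2}$ and the coherence contribution $\tfrac12e^{-kT/2}$ account for the constant $3/2$.

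The main obstacle is the secular term $k\,r_{\widehat y}e^{-(k-1)T}$. I would write it as $\frac{r_{\widehat y}}{Te^{-T}}\,(kT e^{-kT/2})\,e^{-kT/2}$ and use $\sup_{u\ge0}ue^{-u/2}=2/e$. This yields exactly the second summand of $K_{\widehat y,T}$ and proves Eq.~\eqref{eq:rounded-stroboscopic-envelope}. GAS of the rounded period map follows because every transverse coordinate tends to zero and $P_0$ is fixed, since every factor fixes $P_0$ by Proposition~\ref{prop:ladder-convergence}.

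For arbitrary wall time I would reuse the argument for Eq.~\eqref{eq:ladder-all-time-envelope}. Each primitive and dead interval fixes $P_0$ and is CPTP, so the error does not increase within a period. Using $\lfloor t/\mathcal P(T)\rfloor\ge t/\mathcal P(T)-1$ then converts the stroboscopic bound into Eq.~\eqref{eq:rounded-all-time-envelope} with the factor $e^{T/2}$. The unchanged period, exposures, counts and exponent are immediate from the construction. Finally, the endpoint equals the ideal channel iff $e^{-\Delta_y}=1$, that is, iff $\Delta_y=0$.
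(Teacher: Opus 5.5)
Your proposal is correct and follows the paper's proof step for step. You isolate the single changed $2\to1$ population coefficient, which is exactly the structure of Eq.~\eqref{eq:ladder-diamond-commutator}, and you iterate the population block with diagonal entries $e^{-T}$ and off-diagonal entry $r_{\widehat y}$, using the coherence estimate and $kTe^{-kT/2}\le 2/e$; for arbitrary wall time you combine contractivity within each period with $\lfloor t/\mathcal P(T)\rfloor\ge t/\mathcal P(T)-1$, and your explicit assembly of $K_{\widehat y,T}$ from $e^{-kT}+\tfrac12e^{-kT/2}$ and the secular term faithfully spells out what the paper leaves terse.
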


\begin{proof}
Only the level-$2$ to level-$1$ coefficient changes, and
Eq.~\eqref{eq:ladder-diamond-commutator} gives
Eq.~\eqref{eq:ladder-rounded-phase}.  The population quotient has block
$\begin{psmallmatrix}e^{-T}&r_{\widehat y}\\0&e^{-T}\end{psmallmatrix}$.
Its powers, the coherence estimate used in
Eq.~\eqref{eq:ladder-semigroup-error}, and
$ke^{-kT/2}\le2/(eT)$ prove
Eq.~\eqref{eq:rounded-stroboscopic-envelope}.  Contractivity within each
period and $\lfloor t/\mathcal P(T)\rfloor\ge t/\mathcal P(T)-1$ give
Eq.~\eqref{eq:rounded-all-time-envelope}.
\end{proof}

\subsection{Exposure lower bound}

For any admitted ladder protocol, direct inspection of the master equation
gives
\begin{equation}
 \rho_{01}(t)=e^{-X_A(t)/2}\rho_{01}(0),\qquad
 \rho_{02}(t)=e^{-X_B(t)/2}\rho_{02}(0).
\label{eq:ladder-coherence-exposures}
\end{equation}
Applying the two target-coherence witnesses stated in the main text proves
Eq.~\eqref{eq:ladder-exposure-obstruction}.  If the error is below $\eps$, each
exposure must exceed $2\log[1/(2\eps)]$.  Exclusivity makes active wall time at
least their sum; a completed two-label period has at least two arcs and two
dead transitions.  This proves Eq.~\eqref{eq:ladder-resource-lower} and, with
the finite upper construction, the sharp leading coefficient.

\section{Complete-path certification for the four-level network}
\label{app:four-level-certification}

\subsection{Cycle and prefix certificates}

Consider the four-dimensional real space of Hermitian operators supported on
the excited subspace with the form
\begin{equation}
 E=e_1P_1+e_2P_2+e_3P_3
       +if|2\rangle\langle3|-if|3\rangle\langle2|.
\label{eq:four-effect-coordinates}
\end{equation}
Here $e_1,e_2,e_3,f$ are real coordinates.  The operator
$Q=P_1+P_2+P_3$ belongs to this space, and every shadow adjoint command
preserves the space.  Starting from $Q$, these adjoints produce effects of
the displayed form.
For a hold of actual duration $\tau$, let $x=g\tau$, $d=\kappa\tau$,
and $\xi=e^{-x}$.  The generator equations are
$\dot e_1=-ge_1$, $\dot e_2=ge_1-ge_2$, $\dot e_3=0$, and
$\dot f=-(g/2+\kappa)f$.  Integration gives
\begin{equation}
 (e_1,e_2,e_3,f)\mapsto
 (\xi e_1,x\xi e_1+\xi e_2,e_3,e^{-x/2-d}f).
\label{eq:four-hold-adjoint}
\end{equation}
In particular, the transfer from level two to level one is retained.

For a pulse with gain $a$, let $\alpha=\pm4a\tau$ be its actual signed angle,
$d=\kappa\tau$, $z=(e_2-e_3)/2$, and $m=(e_2+e_3)/2$.
The pair $(z,f)$ obeys $\dot z=2uf$ and
$\dot f=-2uz-\kappa f$, where $u$ is the signed physical amplitude.
Set $r=\sqrt{4\alpha^2-d^2/4}$ and $s=\sin(r)/r$.  The present domain
has $r>0$.  Then
\begin{align}
 z'&=e^{-d/2}\left[(\cos r+ds/2)z+2\alpha sf\right],\\
 f'&=e^{-d/2}\left[-2\alpha sz+(\cos r-ds/2)f\right],\\
 e_1'&=e_1,\qquad e_2'=m+z',\qquad e_3'=m-z'.
\label{eq:four-pulse-adjoint}
\end{align}
The simultaneous dephasing term is part of this solution.  A dead interval
multiplies $f$ by $e^{-1/2000}$ and leaves the diagonal coordinates fixed.
To obtain $\mathcal S^*(Q)$, apply these adjoints in reverse chronological
order to $(1,1,1,0)$.

For Eq.~\eqref{eq:four-level-simple-word}, outward interval evaluation over
the whole gain and timing box gives the following rationally rounded bounds:
\begin{equation}
 0\le e_1\le0.000583,\quad
 e_2\le0.063540,\quad e_3\le0.058531,\quad |f|\le0.015861.
\label{eq:four-simple-effect-bounds}
\end{equation}
All decimal endpoints in this display are terminating rationals rounded
outwards.  The maximal eigenvalue of the $2\times2$ excited block is
nondecreasing in both diagonal entries and in $|f|$.  Substituting their
upper endpoints gives a value below $0.077094<2/25$.  The remaining
eigenvalue is $e_1$.  Therefore $\mathcal S^*(Q)\preceq(2/25)Q$.
Taking its trace against any density operator proves the survival bound.

The interval calculation can be reproduced with integer endpoints at scale
$2^{80}$.  Addition, multiplication, division, and square root are rounded
outwards using integer arithmetic.  For the exponential, range reduction
to $|x|\le1$ permits degree-32 Taylor expansion with remainder at most
$3/33!$, followed by squaring.  For sine and cosine at scalar arguments
$|x|\le8$, Taylor degrees 81 and 80 have remainders at most
$8^{82}/82!$ and $8^{81}/81!$.  An interval argument is treated by the
addition formulas around its dyadic midpoint, using
$\sin\delta\in[-r_0,r_0]$ and
$\cos\delta\in[1-r_0^2/2,1]$ for $|\delta|\le r_0\le1$.
All arguments in the displayed hardware box obey these range restrictions.
This supplies a finite calculation with retained truncation and rounding
errors.  Numerical matrix exponentials provide a separate implementation
check of the four-coordinate formulas.

For longer words, cancellation can be retained by expanding the pulse map
before interval evaluation.  If $A=e^{-d/2}(\cos r+ds/2)$,
$C=2\alpha se^{-d/2}$, and $D=e^{-d/2}(\cos r-ds/2)$, the equivalent formulas are
\begin{align}
 e_2'&=(1+A)e_2/2+(1-A)e_3/2+Cf,\\
 e_3'&=(1-A)e_2/2+(1+A)e_3/2-Cf,\\
 f'&=-Ce_2/2+Ce_3/2+Df.
\end{align}
The coordinatewise intersection of this enclosure with
Eq.~\eqref{eq:four-pulse-adjoint} still contains the exact effect.  With this
evaluation, Eq.~\eqref{eq:four-level-five-word} has
\begin{equation}
 e_1<0.000005543,\quad e_2<0.007295,\quad e_3<0.005336,
 \quad |f|<0.002777.
\end{equation}
Their spectral upper bound is below $0.009262<47/5000$.  Its duration bound
is $T=143107/10000$.  With $\eta=T/10^5$, one cycle and its later prefixes
obey $p(t)\le47/5000+2\eta=0.009686214<0.01$; the limiting scalar radius is
smaller.  This proves its stated one-cycle preparation guarantee.

For the actual pumping perturbation,
$\|\lambda\Diss{|1\rangle\langle0|}\|_\diamond\le2\lambda$.
CPTP Duhamel propagation therefore bounds the trace norm of the
actual--shadow state difference through duration $T$ by
$2\lambda_{\max}T$.  This difference is traceless Hermitian, so a
projector expectation changes by at most $\lambda_{\max}T$.
The cycle recursion is thus $p_{n+1}\le(2/25)p_n+\eta$.
Within a cycle, $\dot p=-g\rho_{11}+\lambda\rho_{00}\le\lambda_{\max}$
on a hold and $\dot p=\lambda\rho_{00}\le\lambda_{\max}$ otherwise.
Every actual prefix obeys $p(t)\le p_n+\eta$.

With $T=174207/20000$, $\eta=T/10^5$, and $q=2/25$,
\begin{equation}
 q^2+(1-q^2)\frac{\eta}{1-q}+\eta<0.006582<0.01.
\label{eq:four-two-cycle-tail}
\end{equation}
The limiting value $\eta/(1-q)$ is smaller than the two-cycle boundary
bound.  Equation~\eqref{eq:survival-distance} therefore proves
Eq.~\eqref{eq:four-level-preparation} at all subsequent phases while the
control continues.

\subsection{Preparation followed by the matching periodic tail}
\label{app:periodic-continuation}

For the periodic comparison, the preparation prefix may end halfway through
an $AP$ period. The continuing controller must therefore start its next
command at the actual scheduler phase. Let $\beta_{\rm pre}$ bound the shadow
survival effect of the entire preparation prefix, and let $T_{\rm pre}$ bound
its physical duration, including each dead interval. The preceding Duhamel
estimate gives, for every initial state,
\begin{equation}
 p_{\rm pre}\le \beta_{\rm pre}+\lambda_{\max}T_{\rm pre}.
\label{eq:periodic-preparation-bound}
\end{equation}
Choose a two-period tail, either $APAP$ or $PAPA$, that begins at the resulting
phase. A uniform shadow factor $\beta_{\rm tail}<1$ and duration bound
$T_{\rm tail}$ give
\begin{equation}
 \eta_{\rm tail}=\lambda_{\max}T_{\rm tail},\qquad
 r_{\rm tail}=\frac{\eta_{\rm tail}}{1-\beta_{\rm tail}},\qquad
 p_{n+1}\le\beta_{\rm tail}p_n+\eta_{\rm tail}.
\end{equation}
Hence each later tail boundary has survival probability at most
$\max\{p_{\rm pre},r_{\rm tail}\}$. The within-tail differential inequality
$\dot p\le\lambda_{\max}$ then proves the sufficient all-time criterion
\begin{equation}
 \max\!\left\{\beta_{\rm pre}+\lambda_{\max}T_{\rm pre},
              r_{\rm tail}\right\}+\eta_{\rm tail}\le\eps^2.
\label{eq:periodic-continuation-test}
\end{equation}
All bounds are uniform over the declared joint error set. They can be applied
to successive tails even when errors change or are correlated across tails.

For the reported periodic-grid winner, the common hold is $3.4$ and the pulse
angle is $1.55$. The preparation word $APAPAPA$ has
$T_{\rm pre}=15.40775$; its next command is $P$, so the continuing blocks are
$PAPA$ with $T_{\rm tail}=7.9265$. The same outward primitive calculation
gives
\begin{equation}
 \beta_{\rm pre}<0.009664,\qquad \beta_{\rm tail}<0.183943.
\label{eq:periodic-rounded-factors}
\end{equation}
These displayed terminating decimals are rational upper bounds. Taking
$\lambda_{\max}=10^{-5}$, the preparation term exceeds $r_{\rm tail}$, and
the left-hand side of Eq.~\eqref{eq:periodic-continuation-test} is bounded by
$0.0098973425<0.01$. Equation~\eqref{eq:survival-distance} proves the claimed
trace-distance tolerance throughout subsequent operation.

For each finite word in the periodic search, the checker also considers every
contiguous block partition. If the blocks have uniform shadow inequalities
$\mathcal S_i^*(Q)\preceq\beta_iQ$, positivity of their adjoints gives
\begin{equation}
 (\mathcal S_m\cdots\mathcal S_1)^*(Q)
 \preceq\left(\prod_{i=1}^m\beta_i\right)Q.
\end{equation}
The minimum of these products and the directly enclosed whole-word factor
remains valid, including under correlated admitted errors. The companion
archive retains the exact rational factors, enumeration domains, and matched
comparisons.

\bibliography{bibliography/Reference}

\end{document}